\documentclass[11pt]{article}
\pdfoutput=1
\usepackage{latexsym}
\usepackage{amsmath}
\usepackage{amssymb}
\usepackage{amsfonts,amsthm}
\usepackage{authblk}
\usepackage[top=1in, bottom=1in, left=1in, right=1in]{geometry}
\synctex=1

\usepackage{thmtools, thm-restate}

\usepackage{tikz}
\usetikzlibrary{arrows,decorations.pathmorphing,decorations.shapes,backgrounds,fit}
\pgfkeys{tikz/cc1/.style={dotted} }
\pgfkeys{tikz/cc4/.style={dashed} }
\pgfkeys{tikz/cc3/.style={decorate,decoration=bumps} }
\pgfkeys{tikz/cc2/.style={decorate,decoration=snake} }
\pgfkeys{tikz/solid/.style={line width=2pt} }
\pgfkeys{tikz/solid2/.style={line width=1pt} }
\pgfkeys{tikz/nodostile1/.style={draw,line width=1pt}}
\pgfkeys{tikz/nodostile2/.style={draw,line width=2pt}}
\pgfkeys{tikz/arcostile1/.style={very thick} }
\pgfkeys{tikz/arcostile2/.style={red,very thick} }
\pgfkeys{tikz/arcostile3/.style={thick} }
\pgfkeys{tikz/arcostile4/.style={line width=2pt} }

\newenvironment{mylist}[1]{
\setbox1=\hbox{#1}
\begin{list}{}{
\setlength{\labelwidth}{\wd1}
\setlength{\leftmargin}{\wd1}
\addtolength{\leftmargin}{0em}
\addtolength{\leftmargin}{\labelsep}
\setlength{\rightmargin}{1em}}}{\end{list}}

\newcommand{\litem}[1]{\item[#1\hfill]}

\newcommand{\ignore}[1]{}

\newtheorem{theorem}{Theorem}[section]
\newtheorem{lemma}[theorem]{Lemma}

\newtheorem{corollary}[theorem]{Corollary}

\newtheorem{property}[theorem]{Property}

\usepackage{array}
\newcolumntype{C}[1]{>{\centering\let\newline\\\arraybackslash\hspace{0pt}}m{#1}}

\usepackage{thmtools}
\usepackage{thm-restate}

\usepackage{hyperref}

\usepackage{cleveref}

\usepackage{microtype}

\usepackage{latexsym}
\usepackage{multirow}
\usepackage{amsmath}
\usepackage{amssymb}
\usepackage{amsfonts,amsthm}
\usepackage{adjustbox}
\usepackage{mathcomp}

\usepackage{tabu}
\usepackage{longtable}

\usepackage{dcolumn}
\newcolumntype{d}[1]{D{.}{.}{#1}}

\usepackage{array}
\newcolumntype{C}[1]{>{\centering\let\newline\\\arraybackslash\hspace{0pt}}m{#1}}

\usepackage[lined, algoruled, linesnumbered]{algorithm2e}

\newcommand{\compress}[1]{}

\title{{\bf Incremental $2$-Edge-Connectivity in Directed Graphs}
}

\author[1]{Loukas Georgiadis}
\author[2]{Giuseppe F.~Italiano}
\author[2]{Nikos Parotsidis}
\affil[1]{University of Ioannina, Greece. \texttt{loukas@cs.uoi.gr}}
\affil[2]{Universit\`a di Roma ``Tor Vergata'', Italy.
\texttt{giuseppe.italiano@uniroma2.it, nikos.parotsidis@uniroma2.it}}

\begin{document}

\maketitle

\begin{abstract}
In this paper, we initiate the study of the dynamic maintenance of $2$-edge-connectivity relationships in directed graphs.
We present an algorithm that can update the $2$-edge-connected blocks of a directed graph with $n$ vertices through a sequence of
$m$ edge insertions in a total of $O(mn)$ time. After each insertion, we can answer the following queries in asymptotically optimal time:
\begin{itemize}
\item Test in constant time if two query vertices $v$ and $w$ are $2$-edge-connected. Moreover, if  $v$ and $w$ are not $2$-edge-connected, we can produce in constant time a ``witness'' of this property, by exhibiting
an edge that is contained in all paths from $v$ to $w$ or in all paths from $w$ to $v$.
\item Report in $O(n)$ time all the $2$-edge-connected blocks of $G$.
\end{itemize}
To the best of our knowledge, this is the first  dynamic
algorithm for $2$-connectivity problems on directed graphs, and it matches the best known bounds for simpler problems, such as incremental transitive closure.
\end{abstract}

\section{Introduction}

The design of dynamic graph algorithms is one of the classic areas in
theoretical computer science. In this
setting, the input of a graph problem is being changed via a sequence of updates, such as edge insertions and deletions.
A dynamic graph algorithm aims at updating efficiently the solution
of a problem after each update, faster than recomputing it from scratch. A dynamic graph problem is said to be \emph{fully dynamic} if the update operations
include both insertions and deletions of edges, and it
is said to be \emph{partially dynamic} if only one type of update, either insertions or deletions,
is allowed. More specifically, a dynamic graph problem is said to be \emph{incremental} (resp., \emph{decremental}) if only
insertions (resp., deletions) are allowed.

In this paper, we present new incremental algorithms for $2$-edge connectivity problems on directed graphs (digraphs). Before defining the problem, we first review some definitions.
Let $G=(V,E)$ be a digraph. $G$ is \emph{strongly connected} if there is a directed path from each vertex to every other vertex.
The \emph{strongly connected components} (in short \emph{SCCs}) of $G$ are its maximal strongly connected subgraphs.
Two vertices $u,v \in V$  are \emph{strongly connected}  if they belong to the same SCC
of $G$.
An edge of $G$ is a \emph{strong bridge} if its removal increases the number of SCCs.
Let $G$ be strongly connected: $G$ is \emph{$2$-edge-connected} if it has no strong bridges.
The \emph{$2$-edge-connected components} of $G$ are its maximal $2$-edge-connected subgraphs. Two vertices $u, v\in V$ are said to be \emph{$2$-edge-connected}, denoted by  $u \leftrightarrow_{\mathrm{2e}} v$, if there are two edge-disjoint directed paths from $u$ to $v$  and two edge-disjoint directed paths from $v$ to $u$. (Note that a path from $u$ to $v$ and a path from $v$ to $u$ need not be edge-disjoint).
A \emph{$2$-edge-connected block} of a digraph $G=(V,E)$ is defined as a maximal subset $B \subseteq V$ such that $u \leftrightarrow_{\mathrm{2e}} v$ for all $u, v \in B$.
Figure \ref{figure:2ECB-example} illustrates the $2$-edge-connected blocks of a digraph.

We remark that in digraphs
$2$-vertex and $2$-edge connectivity have a much richer and more complicated structure than in undirected graphs. To see this, observe that, while in undirected graphs blocks are exactly the same as components, in digraphs there is a substantial difference between those two notions.
In particular, the edge-disjoint paths that make two vertices $2$-edge-connected in a block might use
vertices that are outside of that block, while in a component those paths must lie  completely inside that component. In other words,
two vertices that are $2$-edge-connected (and thus in the same $2$-edge-connected block) may lie in different $2$-edge-connected components (e.g., vertices $i$ and $j$ in Figure \ref{figure:2ECB-example}, each of them being in a $2$-edge-connected component by itself).
As a result, $2$-connectivity problems on digraphs appear to be much harder than on undirected graphs. For undirected graphs
it has been known for over 40 years how to compute the $2$-edge- and $2$-vertex- connected components in linear time~\cite{dfs:t}.
In the case of digraphs, however, only $O(mn)$ algorithms were known
(see e.g.,~\cite{2vcb:jaberi,2VCC:Jaberi2015,makino,nagamochi}).
It was shown only recently how to compute the $2$-edge- and $2$-vertex- connected blocks in linear time \cite{2ECB,2VCB}, and the best current bound for computing the $2$-edge- and the $2$-vertex- connected components
is $O(n^2)$~\cite{2CC:HenzingerKL14}.

\begin{figure}[t!]
\begin{center}
\includegraphics[trim={0 0 0 6cm}, clip=true, width=1.0\textwidth]{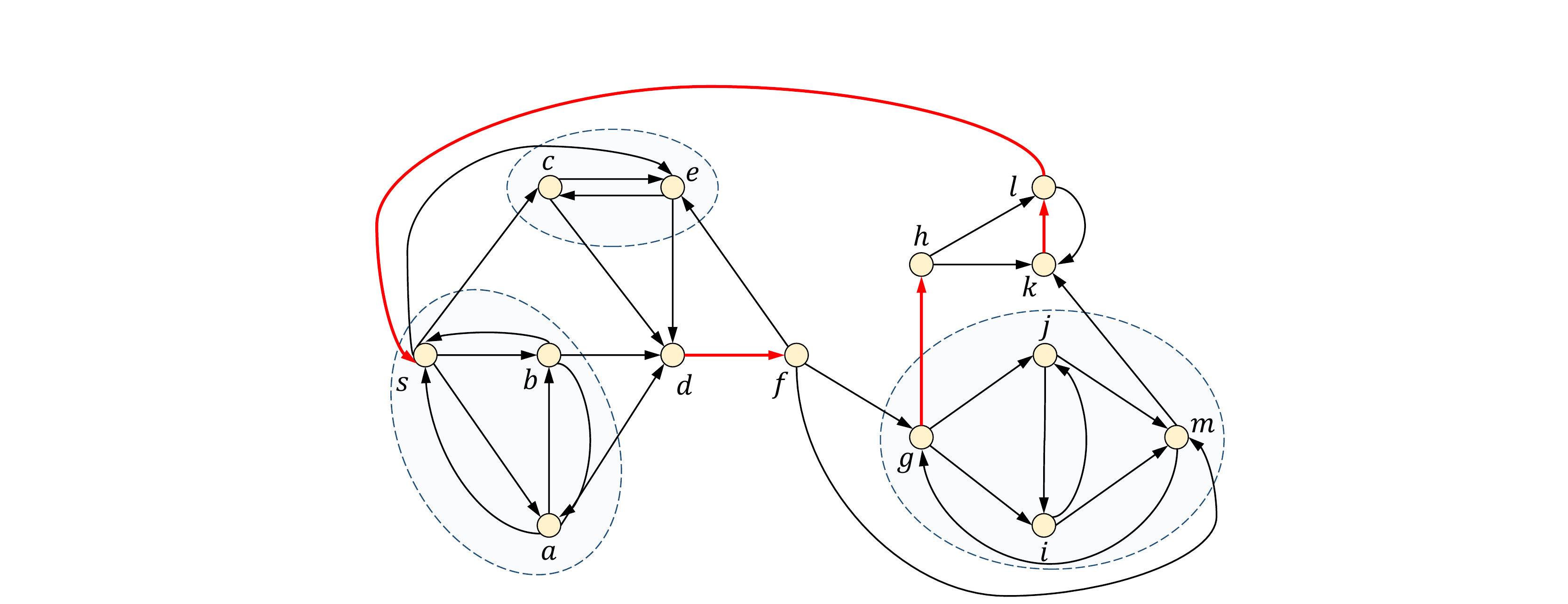}	
\caption{The $2$-edge-connected blocks of a digraph $G$. Strong bridges of $G$ are shown red. (Better viewed in color.)}
\label{figure:2ECB-example}
\end{center}
\end{figure}


\paragraph{Our Results.}
In this paper, we initiate the study of the dynamic maintenance of $2$-edge-connectivity relationships in directed graphs.
We present an algorithm that can update the $2$-edge-connected blocks of a digraph $G$ with $n$ vertices through a sequence of
$m$ edge insertions in a total of $O(mn)$ time. After each insertion, we can answer the following queries in asymptotically optimal time:
\begin{itemize}
\item Test in constant time if two query vertices $v$ and $w$ are $2$-edge-connected.  Moreover, if $v$ and $w$ are not $2$-edge-connected, we can produce in constant time a ``witness'' of this property, by exhibiting
an edge that is contained in all paths from $v$ to $w$ or in all paths from $w$ to $v$.
\item Report in $O(n)$ time all the $2$-edge-connected blocks of $G$.
\end{itemize}
Ours is the first dynamic
algorithm for $2$-connectivity problems on digraphs, and it matches the best known bounds for simpler problems, such as incremental transitive closure~\cite{Ita86}. This is a substantial improvement  over the $O(m^2)$ simple-minded algorithm, which recomputes the $2$-edge-connected blocks from scratch after each edge insertion.


\paragraph{Related Work.}
Many efficient algorithms for several dynamic graph problems have been proposed in the literature, including dynamic connectivity~\cite{HK99,HLT01,PT07,Thorup2000}, minimum spanning trees~\cite{EGIN97,F85,HK01,HLT01}, edge/vertex connectivity~\cite{EGIN97,HLT01} on undirected graphs, and transitive closure~\cite{DI08,HK95,King99} and shortest paths~\cite{DI04,King99,Thorup04} on digraphs.
Once again, dynamic problems on digraphs appear to be harder than on undirected graphs. Indeed, most of the dynamic algorithms on undirected graphs have polylog update bounds, while dynamic algorithms on digraphs have higher polynomial update bounds.  The hardness of dynamic algorithms on digraphs has been recently supported also by conditional lower bounds~\cite{AW14}.

\paragraph{Our Techniques.}
Known algorithms for computing the $2$-edge-connected blocks of a digraph $G$~\cite{2ECB,2C:GIP:arXiv} hinge on properties that seem very difficult to dynamize. The algorithm in~\cite{2ECB} uses very complicated data structures based on 2-level auxiliary graphs. The loop nesting forests used in~\cite{2C:GIP:arXiv} depends heavily on an underlying dfs tree of the digraph, and
the incremental maintenance of dfs trees on general digraphs is still an open problem (incremental algorithms are known only for the special case of DAGs~\cite{FGN97}).
Despite those inherent difficulties, we find a way to bypass loop nesting forests by suitably combining the
approaches in~\cite{2ECB,2C:GIP:arXiv} in a novel framework, which is
amenable to dynamic implementations.
Another complication is that, although our problem is incremental, strong bridges may not only be deleted but also added (when a new SCC is formed). As a result, our data structures undergo a fully dynamic repertoire of updates, which is known to be harder.
By organizing carefully those updates, we are still able to obtain the desired bounds.

\section{Dominator trees and $2$-edge-connected blocks}
\label{sec:definitions}

We assume the reader is familiar with standard graph terminology, as contained
for instance in~\cite{clrs}.
Given a rooted tree, we denote by $T(v)$ the subtree of $T$ rooted at $v$ (we also view $T(v)$ as the set of descendants of $v$).
Given a digraph $G=(V,E)$, and a set of vertices $S \subseteq V$, we denote by $G[S]$ the subgraph induced by $S$. We introduce next some of the building blocks of our new incremental algorithm.

\subsection{Flow graphs, dominators, and bridges}
\label{sec:dominators}

A \emph{flow graph} is a digraph with a distinguished \emph{start vertex} $s$ such that every vertex is reachable from $s$. Let $G=(V,E)$ be a strongly connected graph.
The \emph{reverse digraph} of $G$, denoted by $G^R=(V, E^R)$, is obtained
by reversing the direction of all edges.
Let $s$ be a fixed but arbitrary start vertex of a strongly connected digraph $G$.
Since $G$ is strongly connected, all vertices are reachable from $s$ and reach $s$, so we can view both $G$ and $G^R$ as flow graphs with start vertex $s$. To avoid ambiguities, throughout the paper we will denote those flow graphs respectively by $G_s$ and $G_s^R$.
Vertex $u$ is a \emph{dominator} of vertex $v$ ($u$ \emph{dominates} $v$) in $G_s$ if every path from $s$ to $v$ in $G_s$ contains $u$.
We let \emph{Dom$(v)$} denote be the set of dominators of $v$.
The dominator relation can be represented by a tree $D$ rooted at $s$,  the \emph{dominator tree} of $G_s$: $u$ dominates $v$ if and only if $u$ is an ancestor of $v$ in $D$.
For any $v \not= s$,  we denote by $d(v)$ the parent of $v$ in $D$.
Similarly, we can define the dominator relation in the flow graph $G_s^R$, and let $D^R$ denote the dominator tree of $G_s^R$, and $d^R(v)$ the parent of $v$ in $D^R$.
Lengauer and Tarjan~\cite{domin:lt} presented an algorithm for computing dominators in  $O(m \alpha(m,n))$ time for a flow graph with $n$ vertices and $m$ edges, where $\alpha$ is a functional inverse of Ackermann's function~\cite{dsu:tarjan}.
Subsequently, several linear-time algorithms
were discovered~\cite{domin:ahlt,dominators:bgkrtw,dominators:Fraczak2013,Gabow:Poset:TALG}.
An edge $(u,v)$ is a \emph{bridge} of a flow graph $G_s$ if all paths from $s$ to $v$ include $(u,v)$.\footnote{Throughout the paper, to avoid confusion we use consistently the term  \emph{bridge} to refer to a bridge of a flow graph and the term \emph{strong bridge} to refer to a strong bridge in the original graph.}
Let $s$ be an arbitrary start vertex of $G$.
The following properties were proved in \cite{Italiano2012}.

\begin{property}
\label{property:strong-bridge} \emph{(\cite{Italiano2012})}
Let $s$ be an arbitrary start vertex of $G$. An edge $e=(u,v)$ is strong bridge of $G$ if and only if it is a bridge of $G_s$ (so $u=d(v)$) or a bridge of $G_s^R$ (so $v=d^R(u)$) or both.
\end{property}

\begin{figure}[t!]
\begin{center}
\includegraphics[trim={0 0 0 10cm}, clip=true, width=1.0\textwidth]{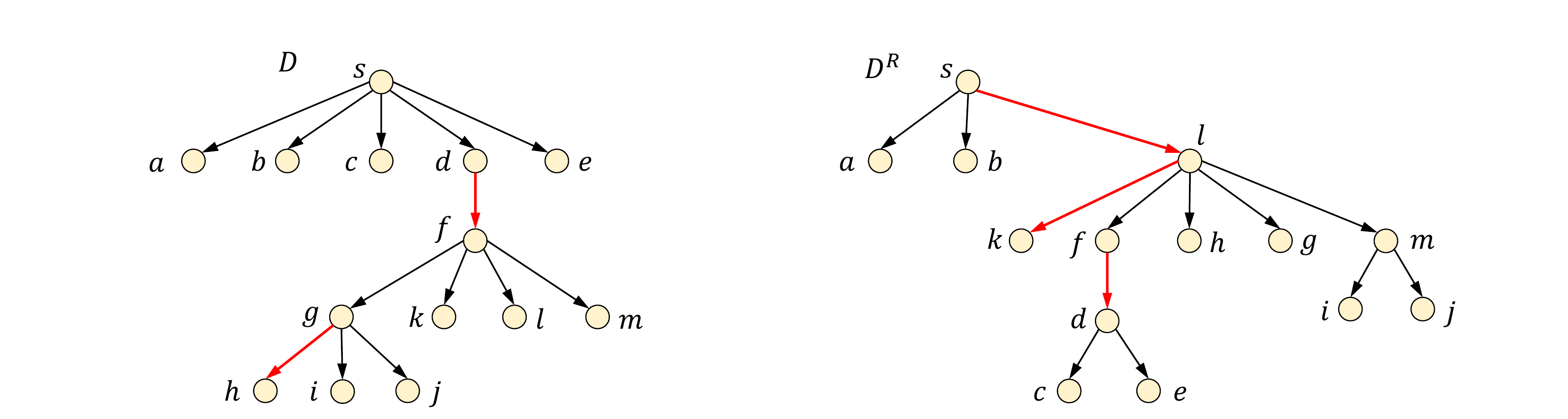}	
\caption{The dominator trees of flow graphs $G_s$ and $G_s^R$. Strong bridges of $G$ are shown red. (Better viewed in color.)}
\label{figure:DomTrees}
\end{center}
\end{figure}

As a consequence, of Property~\ref{property:strong-bridge},
all the strong bridges of the digraph
$G$ can be obtained from the bridges of the flow graphs $G_s$ and $G_s^R$, and thus there can be at most $2(n-1)$ strong bridges overall.
Figure~\ref{figure:DomTrees} illustrates the dominator trees $D$ and $D^R$ of the flow graphs $G_s$ and $G_s^R$ that correspond to the strongly connected digraph $G$ of Figure \ref{figure:2ECB-example}.
After deleting from the dominator trees $D$ and $D^R$ respectively the bridges of $G_s$ and $G_s^R$, we obtain the \emph{bridge decomposition}  of $D$ and $D^R$ into forests $\mathcal{D}$ and $\mathcal{D}^R$.
Throughout the paper, we denote by $D_u$ (resp., $D_u^R$) the tree in $\mathcal{D}$ (resp., $\mathcal{D}^R$) containing vertex $u$, and by $r_u$ (resp., $r^R_u$) the root of $D_u$ (resp., $D_u^R$).
The following lemma from \cite{2ECB} holds for a flow graph $G_s$ of a strongly connected digraph $G$ (and hence also for the flow graph $G_s^R$ of $G^R$).

\begin{lemma}
\label{lemma:partition-paths} \emph{(\cite{2ECB})}
Let $G$ be a strongly connected digraph and let $(u,v)$ be a strong bridge of $G$. Also, let $D$
be the dominator tree of the
flow graph $G_s$, for an arbitrary start vertex $s$.
Suppose $u=d(v)$. Let $w$ be any vertex that is not a descendant of $v$ in $D$. Then there is path from $w$ to $v$ in $G$ that does not contain any proper descendant of $v$ in $D$. Moreover, all simple paths in $G$ from $w$ to any descendant of $v$ in $D$ must contain the edge $(d(v),v)$.
\end{lemma}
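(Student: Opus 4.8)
The plan is to establish the two assertions separately, using throughout a few consequences of the hypotheses. Since $(u,v)$ is a strong bridge with $u=d(v)$, Property~\ref{property:strong-bridge} implies that $(u,v)$ is a bridge of $G_s$, so every walk from $s$ to $v$ in $G$ traverses the edge $(u,v)$. Moreover, if $x$ is a descendant of $v$ in $D$ then $v$ dominates $x$, so every walk from $s$ to $x$ passes through $v$; combined with the bridge property, every walk from $s$ to such an $x$ must use $(u,v)$. Finally, since $w$ is not a descendant of $v$, $v$ does not dominate $w$, so there is a path $Q$ from $s$ to $w$ in $G$ that avoids $v$ (and hence avoids the edge $(u,v)$, which enters $v$).

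For the first assertion, I would take any path from $w$ to $v$ --- one exists because $G$ is strongly connected --- and truncate it at the first occurrence of $v$, obtaining a path $P=(w=x_0,\dots,x_k=v)$ in which $v$ appears only as the last vertex. I claim no $x_i$ with $i<k$ is a proper descendant of $v$, which proves the assertion. Suppose not, and let $x_i$ be the first proper descendant of $v$ on $P$. Since $x_0=w\notin D(v)$ we have $i\ge 1$, and since $v$ occurs only at $x_k$ we have $x_{i-1}\ne v$; as $x_i$ is the first proper descendant, $x_{i-1}$ is not a proper descendant either, so $x_{i-1}\notin D(v)$, i.e.\ $v$ does not dominate $x_{i-1}$. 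Then there is a path from $s$ to $x_{i-1}$ avoiding $v$, which extended by the edge $(x_{i-1},x_i)$ gives a path from $s$ to $x_i$ avoiding $v$ (as $x_i\ne v$), contradicting that $v$ dominates the descendant $x_i$. Hence $P$ avoids all proper descendants of $v$.

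For the second assertion, I would fix a descendant $x$ of $v$ in $D$ and an arbitrary simple path $P$ from $w$ to $x$, and show it must contain $(u,v)$. The idea is to reduce to a walk starting at $s$: concatenating $Q$ with $P$ gives a walk from $s$ to $x$, and since $v$ dominates $x$ this walk meets $v$. Because $Q$ avoids $v$, the first occurrence of $v$ lies on $P$, so the prefix of $Q\cdot P$ up to that occurrence is a walk from $s$ to $v$ and therefore contains $(u,v)$ by the bridge property. Since $Q$ contains neither $v$ nor the edge $(u,v)$, this edge must lie on $P$, as claimed.

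The step I expect to be the main obstacle is the reduction to $s$-rooted walks in the second assertion: one cannot argue directly on the $w$-to-$x$ path, and it is exactly the non-domination of $w$ that supplies the auxiliary path $Q$ avoiding $v$, after which the bridge property applies verbatim. The same $s$-routing device also drives the sub-argument in the first assertion, so once it is set up correctly both parts follow with little extra work.
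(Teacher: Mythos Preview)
Your proof is correct. The paper does not prove this lemma itself---it is quoted from~\cite{2ECB}---so there is no paper proof to compare against; your argument via domination plus the $s$-rooted auxiliary path $Q$ is the standard one.

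One small point to tighten: you write that Property~\ref{property:strong-bridge} yields that $(u,v)$ is a bridge of $G_s$, but that property only says a strong bridge is a bridge of $G_s$ \emph{or} of $G_s^R$; the parenthetical ``so $u=d(v)$'' there is a consequence of the first disjunct, not a characterization of it. The implication you need does hold, however: if $(u,v)$ were not a bridge of $G_s$ there would be an $s$-to-$v$ path avoiding the edge, hence (since $u=d(v)$ forces that path through $u$) a $u$-to-$v$ path avoiding it; splicing this detour in place of $(u,v)$ on any path of $G$ shows $G\setminus(u,v)$ is still strongly connected, contradicting that $(u,v)$ is a strong bridge. With this one line added, both assertions go through exactly as you outline.
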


\subsection{Loop nesting forests and bridge-dominated components}
\label{sec:loop-nesting}

Let $G$ be a digraph.
A \emph{loop nesting forest} represents a hierarchy of strongly connected subgraphs of $G$~\cite{st:t}, 
defined with respect to a dfs tree $T$ of $G$, as follows.
For any vertex $u$, the \emph{loop} of $u$, denoted by
$\mathit{loop}(u)$ is the set of all descendants $x$ of $u$ in $T$ such that there is a path from $x$ to $u$ in $G$ containing only descendants of $u$ in $T$.
Any two vertices in $\mathit{loop}(u)$ reach each other. Therefore, $\mathit{loop}(u)$ induces a strongly connected subgraph of $G$; it is the unique maximal set of descendants of $u$ in $T$ that does so.
The $\mathit{loop}(u)$ sets form a laminar family of subsets of $V$:
for any two vertices $u$ and $v$, $\mathit{loop}(u)$ and $\mathit{loop}(v)$ are either disjoint or nested.
The \emph{loop nesting forest} $H$ of $G$, with respect to $T$, is the forest in which the parent of any vertex $v$, denoted by $h(v)$, is the nearest proper ancestor $u$ of $v$ in $T$ such that $v \in \mathit{loop}(u)$ if there is such a vertex $u$, and null otherwise.
Then $\emph{loop}(u)$ is the set of all descendants of vertex $u$ in $H$, which we will also denote as $H(u)$ (the subtree of $H$ rooted at vertex $u$).
A loop nesting forest can be computed in linear time~\cite{dominators:bgkrtw,st:t}.
Since we deal with strongly connected digraphs, each vertex is contained in a loop, so $H$ is a tree.
Therefore, we will refer to $H$ as the \emph{loop nesting tree} of $G$.
Let $e=(u,v)$ be a bridge of the flow graph $G_s$, and
let $G[D(v)]$ denote the subgraph induced by the vertices in $D(v)$.
Let $C$ be an SCC of $G[D(v)]$: we say that $C$ is an
\emph{$e$-dominated component} of $G$.
We also say that $C \subseteq V$ is a \emph{bridge-dominated component} if it is an $e$-dominated component for some bridge $e$:
As shown in the following lemma,
bridge-dominated components form a laminar family, i.e., any two components in $\mathcal{C}$ are either disjoint or one contains the other.

\begin{lemma}
\label{lemma:laminar}
Let $\mathcal{C}$ be the set family of all bridge-dominated components. Then $\mathcal{C}$ is laminar.
\end{lemma}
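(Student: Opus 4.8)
The plan is to show that any two bridge-dominated components are either disjoint or nested, by analyzing how the bridges inducing them sit in the dominator tree $D$. Let me set up the key objects carefully. Each bridge-dominated component is an SCC of $G[D(v)]$ for some bridge $e=(u,v)$ of $G_s$ (where $u=d(v)$). The crucial structural fact I want to exploit is that the subtrees $D(v)$ for distinct bridges are themselves nested or disjoint in the dominator tree, and that the SCC structure inside these subtrees respects this nesting.

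I would proceed as follows. Take two bridge-dominated components $C_1$ and $C_2$, arising from bridges $e_1=(d(v_1),v_1)$ and $e_2=(d(v_2),v_2)$, so that $C_1$ is an SCC of $G[D(v_1)]$ and $C_2$ is an SCC of $G[D(v_2)]$. Consider the positions of $v_1$ and $v_2$ in $D$. Since $D$ is a tree, either $D(v_1)$ and $D(v_2)$ are disjoint, or one contains the other. In the disjoint case, $C_1\subseteq D(v_1)$ and $C_2\subseteq D(v_2)$ are immediately disjoint, and we are done. So the interesting case is when, say, $D(v_2)\subseteq D(v_1)$, i.e. $v_2$ is a descendant of $v_1$ in $D$.

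In the nested case $D(v_2)\subseteq D(v_1)$, I want to show that $C_1$ and $C_2$ are either disjoint or $C_2\subseteq C_1$. Suppose $C_1\cap C_2\neq\emptyset$ and pick a common vertex $x$. Every vertex of $C_2$ lies in $D(v_2)\subseteq D(v_1)$, so $C_2\subseteq D(v_1)$; thus $C_2$ is a strongly connected subset of $G[D(v_1)]$ containing $x$. The key point is that an SCC of $G[D(v_1)]$ is a \emph{maximal} strongly connected subset: since $C_2$ is strongly connected inside $G[D(v_1)]$ and meets the SCC $C_1$ at $x$, every vertex of $C_2$ reaches and is reached from $x$ within $G[D(v_1)]$, and hence lies in the same SCC $C_1$. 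This forces $C_2\subseteq C_1$, giving the nesting. The remaining case $v_1=v_2$ (equal roots) is trivial since then the two components are SCCs of the same induced subgraph, hence equal or disjoint.

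\textbf{The main obstacle} I expect is justifying that strong-connectivity of $C_2$ \emph{inside the smaller induced subgraph} $G[D(v_2)]$ transfers to strong-connectivity inside the larger $G[D(v_1)]$ — but this is immediate, since $D(v_2)\subseteq D(v_1)$ means every path witnessing strong connectivity in $G[D(v_2)]$ is already a path in $G[D(v_1)]$. The one subtlety worth a careful sentence is handling the boundary case where $v_2$ is a descendant of $v_1$ but the two induced subgraphs interact through the bridge $e_2$; I would note that because $D(v_2)\subseteq D(v_1)$ as vertex sets, no vertex of $C_2$ escapes $D(v_1)$, so the maximality argument for the SCC $C_1$ applies verbatim. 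Overall the proof reduces to two clean observations: the tree-nesting of the subtrees $D(v)$, and the maximality defining an SCC.
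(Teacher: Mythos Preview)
Your proof is correct and follows essentially the same approach as the paper: both arguments reduce to observing that the subtrees $D(v_i)$ are nested or disjoint in $D$, and then use that a strongly connected set inside the smaller induced subgraph $G[D(v_2)]$ remains strongly connected in the larger $G[D(v_1)]$, so by maximality of SCCs the smaller component is absorbed into the larger one. Your case analysis is slightly more explicit than the paper's (which starts directly from $C\cap C'\neq\emptyset$ and implicitly derives the nesting of the subtrees from the existence of a common vertex), but the core idea is identical.
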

\begin{proof}
Let $C$ and $C'$ be two different sets of $\mathcal{C}$ such that $C \cap C' \not= \emptyset$.
Let $e$ and $e'$ be the nearest bridge ancestors of $C$ and $C'$, respectively, such that $C$ is $e$-dominated and $C'$ is $e'$-dominated.
Since $C$ and $C'$ contain a common vertex, say $v$, we can assume, without loss of generality, that $e$ is an ancestor of $e'$.
Any two vertices, $w \in C$ and $z \in C'$, are strongly connected, since they are both strongly connected with $v$.
Moreover, both $w$ and $z$ are descendants of $e$, and hence it must be $C' \subset C$.
This implies that any two components in $\mathcal{C}$ are either disjoint or one contains the other, which yields the lemma.
\end{proof}

Let $e=(u,v)$ be a bridge of $G_s$, and let $w$ be a vertex in $D(v)$ such that $h(w) \not\in D(v)$. As shown in \cite{2C:GIP:arXiv},  $H(w)$ induces an SCC
in $G[D(v)]$, and thus it is an $e$-dominated component.

\subsection{Bridge decomposition and auxiliary graphs}
\label{sec:bridge-decomposition}
Now we define a notion of \emph{auxiliary graphs} that play a key role in our approach.
Auxiliary graphs were defined in \cite{2ECB} to
decompose the input digraph $G$ into smaller digraphs (not necessarily subgraphs of $G$) that maintain the original $2$-edge-connected blocks of $G$.
Unfortunately, the auxiliary graphs of \cite{2ECB} are not suitable
for our purposes, and we need a slightly different definition.
For each root $r$ of a tree in the bridge decomposition $\mathcal{D}$ we define the \emph{auxiliary graph $\widehat{G}_r = (V_r, E_r)$ of $r$} as follows.
The vertex set
$V_r$ of $\widehat{G}_r$ consists of all the vertices in $D_r$.
The edge set $E_r$ contains all the edges of $G$ among the vertices of $V_r$, referred to as
\emph{ordinary} edges, and a set of \emph{auxiliary} vertices, which are obtained by contracting vertices in $V\setminus V_r$, as follows.
Let $v$ be a vertex in $V_r$ that has a child $w$ in $V \setminus V_r$. Note that $w$ is a root in the bridge decomposition $\mathcal{D}$ of $D$.
For each such child $w$ of $v$, we contract $w$ and all its descendants in $D$ into $v$.
Figure \ref{figure:AuxiliaryGraphs} shows the  bridge decomposition of the dominator tree $D$ and the corresponding auxiliary graphs.
The  bridge decomposition of $D^R$ is shown in
Figure \ref{figure:AuxiliaryGraphsReverse}. 
Differently from \cite{2ECB}, our auxiliary graphs do not preserve the $2$-edge-connected blocks of $G$.
Note that each vertex appears exactly in one auxiliary graph. Furthermore, each original edge corresponds to at most one auxiliary edge.
Therefore, the total number of vertices in all auxiliary graphs is $n$, and the total number of edges is at most $m$.
We use the term \emph{auxiliary components} to refer to the SCCs
of the auxiliary graphs.

\begin{figure}
\begin{center}
\includegraphics[trim={0 0 0 4cm}, clip=true, width=1.0\textwidth]{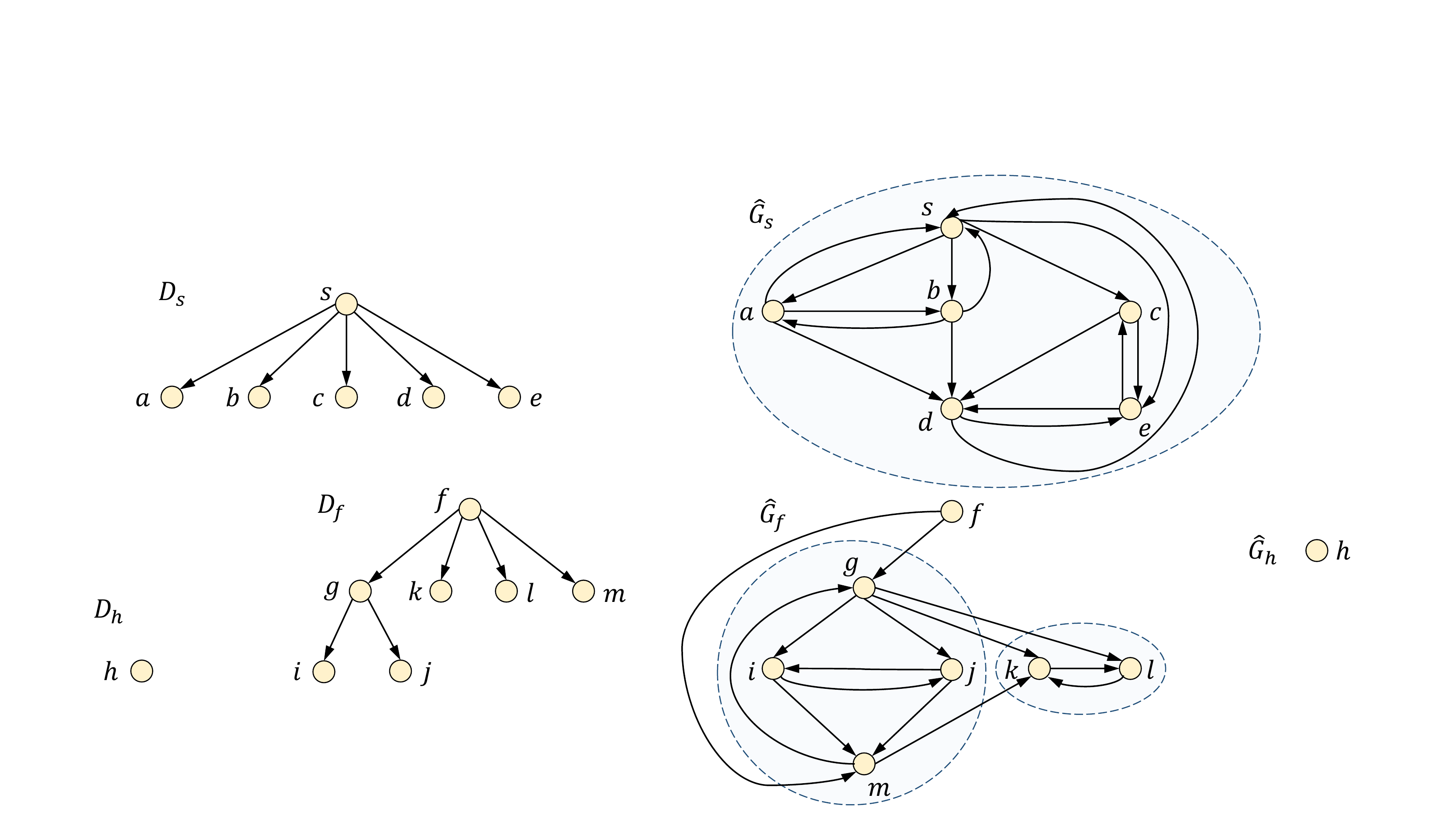}	
\caption{The bridge decomposition of the dominator tree $D$ of Figure \ref{figure:DomTrees}, the corresponding auxiliary graphs $\widehat{G}_r$ and their SCCs
shown encircled.}
\label{figure:AuxiliaryGraphs}
\end{center}
\end{figure}

\begin{figure}
\begin{center}
\includegraphics[trim={0 0 0 2cm}, clip=true, width=1.0\textwidth]{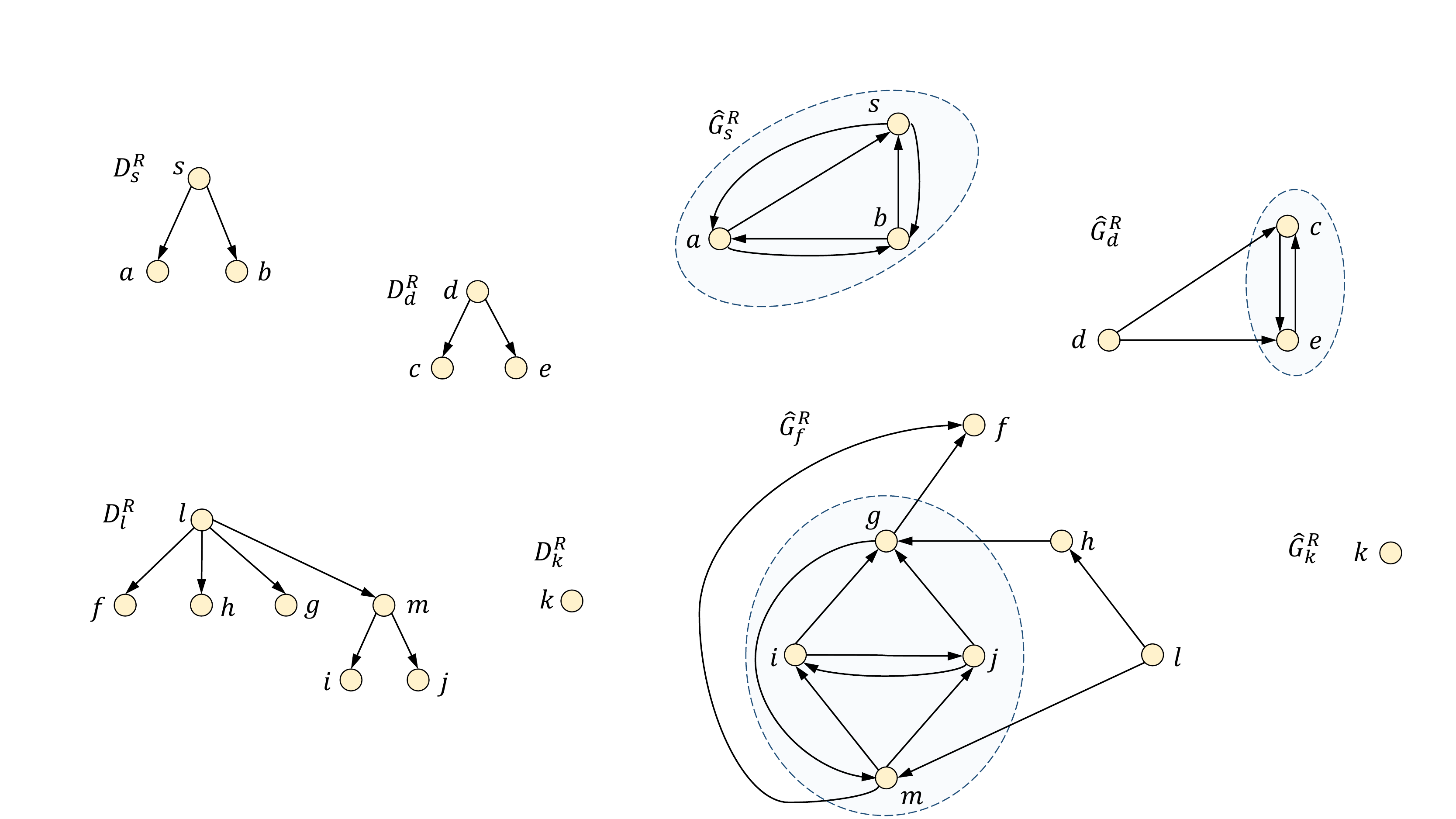}	
\caption{The bridge decomposition of the dominator tree $D^R$ of Figure \ref{figure:DomTrees}, the corresponding auxiliary graphs $\widehat{G}^R_r$ and their SCCs shown encircled.}
\label{figure:AuxiliaryGraphsReverse}
\end{center}
\end{figure}

\begin{lemma}
\label{lemma:auxiliary-graphs}
All the auxiliary graphs of a flow graph $G_s$ can be computed in linear time.
\end{lemma}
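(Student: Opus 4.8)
The goal is to compute all auxiliary graphs $\widehat{G}_r$ in total linear time. The plan is to first compute the dominator tree $D$ of $G_s$ and its bridges (hence the bridge decomposition $\mathcal{D}$) in linear time using a known dominators algorithm~\cite{dominators:bgkrtw}, then show that, given $\mathcal{D}$, each original edge of $G$ can be routed to the correct auxiliary edge in $O(1)$ amortized time, so that a single pass over the edge set $E$ suffices. Since the excerpt already observes that the total number of vertices across all auxiliary graphs is $n$ and the total number of edges is at most $m$, the only thing to prove is that the contraction map can be evaluated fast enough.

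First I would set up the contraction map explicitly. For each vertex $x \in V$, define $\mathrm{rep}(x)$ to be the representative of $x$ in the auxiliary graph to which it contributes. Concretely, let $r_x$ be the root of the tree $D_x$ in $\mathcal{D}$ containing $x$; then $x$ itself lives in $\widehat{G}_{r_x}$ as an ordinary vertex, and whenever an edge leaves the subtree under a bridge, its endpoint is contracted into the vertex $v \in V_{r_x}$ that is the parent (in $D$) of the bridge root below it. I would precompute $r_x$ for every $x$ by processing $D$ together with its set of bridges: deleting the bridges splits $D$ into the trees of $\mathcal{D}$, and a single traversal of $D$ labels each vertex with the root of its tree in $\mathcal{D}$. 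This takes $O(n)$ time.

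Next I would process the edges. For an ordinary edge $(x,y)$ of $G$ with both endpoints in the same $V_r$ (i.e. $r_x = r_y$), it is placed in $\widehat{G}_r$ directly. For an edge $(x,y)$ with $r_x \neq r_y$, the key structural fact is that the endpoint outside $V_{r_x}$ must be contracted, and by the bridge-decomposition structure the contraction target is obtained by walking up in $D$ to the child-of-$V_{r_x}$ bridge root and taking its parent $d(\cdot)$; crucially, this target depends only on which tree of $\mathcal{D}$ the far endpoint sits under, which I can read off in $O(1)$ from precomputed parent/ancestor labels. The laminarity established in Lemma~\ref{lemma:laminar}, together with the nesting of the trees of $\mathcal{D}$ under bridges, guarantees that this target is well defined and unique, so each edge generates at most one auxiliary edge and is handled in $O(1)$ amortized time.

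The main obstacle is the edge-routing step: showing that for an edge crossing between two trees of $\mathcal{D}$ the correct contraction representative can be found in constant time, rather than by an $O(\mathrm{depth})$ walk up the dominator tree. I expect to resolve this by precomputing, for every root $r$ of $\mathcal{D}$, the parent vertex $d(r) \in V_{r'}$ into which the whole subtree $D(r)$ gets contracted; then routing an endpoint $y$ amounts to repeatedly jumping to the stored parent-of-tree pointer, and a standard amortization (each edge is charged only to the single tree boundary it crosses, and the relevant pointers are set up once in the $O(n)$ preprocessing) keeps the total cost linear. Once routing is $O(1)$ per edge, summing over all edges gives $O(m)$, and combined with the $O(n)$ for dominators and labeling yields the claimed $O(n+m)$, i.e. linear, bound.
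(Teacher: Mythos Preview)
Your plan is sound up to the point you yourself flag as the main obstacle, but the resolution you sketch does not work. The claim that ``each edge is charged only to the single tree boundary it crosses'' is false: an edge $(x,y)$ with $r_x$ nested $k$ levels below $r_y$ in the bridge decomposition requires walking through all $k$ intermediate trees of $\mathcal{D}$ to reach the representative in $V_{r_y}$, and nothing you have set up amortizes those jumps across different edges. Concretely, take a dominator tree that is a path with $\Theta(n)$ bridges and $\Theta(m)$ graph edges each going from a vertex in the deepest tree of $\mathcal{D}$ to a vertex in the root tree; your ``repeatedly jumping to the stored parent-of-tree pointer'' procedure then costs $\Theta(n)$ per edge and $\Theta(mn)$ in total. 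Precomputing $d(r)$ for every root $r$ only lets you take one hop at a time; it does not collapse the chain. (A secondary issue: for an edge $(x,y)$ with $r_x\neq r_y$, the parent property of $D$ forces $r_y$ to be an ancestor of $r_x$, so the edge belongs to $\widehat{G}_{r_y}$ and it is the \emph{tail} $x$ that must be contracted to its nearest ancestor in $V_{r_y}$; your write-up has this reversed.)

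The paper solves exactly this step with a different, concrete technique. It assigns each vertex its preorder number in $D$ and, for each root $r$, records the vertices $u'\in V_r$ that have a child outside $V_r$. The key observation is that for an edge $(u,v)$ with $v\in V_r$ and $u\in D(r)\setminus V_r$, the desired representative $u'$ is the vertex of that recorded set with the largest preorder number not exceeding $\mathit{pre}(u)$. All such queries are answered in aggregate linear time by forming two global lists---one of triples $\langle r,\mathit{pre}(u),v\rangle$ for the edges needing shortcuts and one of pairs $\langle r,\mathit{pre}(u')\rangle$ for the boundary vertices---bucket-sorting both, and then merging the sublists sharing the same $r$. This replaces your per-edge tree walk with a single global sort-and-merge, which is what actually yields the linear bound.
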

\begin{proof}
To construct the auxiliary graph $\widehat{G}_r = (V_r, E_r)$ we need to specify how to compute the shortcut edge of each edge entering $V_r$ from a descendant of $r$ in $V \setminus V_r$.
To do this efficiently we need to test ancestor-descendant relations in $D$. We can test this relation in constant time using a preorder numbering of $D$~\cite{domin:tarjan}.
Let $\mathit{pre}(v)$ denote the preorder number of a vertex $v$ in $D$.
Suppose $(u,v)$ is an edge such that $v \in V_r$, $u \not \in V_r$, and $u$ is a descendant of $r$ in $D$.
We need to find the nearest ancestor $u'$ of $u$ in $D$ such that $u' \in V_r$. Then, $(u,v)$ corresponds to the shortcut edge $(u',v)$.
To compute the shortcut edges of $\widehat{G}_r$, we create a list $B_r$ that contains the edges $(u,v)$ such that $u \in V_r$,
$u \not \in V_r$, and $u$ is a descendant of $r$ in $D$.
For each such edge $(u,v)$ we need to find the nearest ancestor $u'$ of $u$ in $D$ such that $u' \in V_r$. Then, $(u,v)$ corresponds to the shortcut edge $(u',v)$.
We create a second list $B'_r$ that contains the vertices in $V_r$ that have a child that is not in $V_r$, and sort $B'_r$ in increasing preorder.
Then $u'$ is the last vertex in the sorted list $B'_r$ such that $\mathit{pre}(u') \le \mathit{pre}(u)$.
Thus the shortcut edges can be computed by
bucket sorting and merging.
In order to do these computations in linear time for all auxiliary graphs, we sort all the lists at the same time as follows.
First, we create a unified list $B$ containing the triples $\langle r,\mathit{pre}(u),v\rangle$ for each edge $(u,v)$ that corresponds to a shortcut edge
in the auxiliary graph $\widehat{G}_r$. Next we sort $B$ in increasing order of the first two elements. We also create a second list $B'$ with pairs $\langle r,\mathit{pre}(u')\rangle$, where $u'$
is a vertex in $V_r$ that has a child that is not in $V_r$, and sort the pairs in increasing order.
Finally, we compute the shortcut edges of each auxiliary graph $\widehat{G}_r$ by merging the sorted sublists of $B$ and $B'$ that correspond to the same root $r$.
Then, the shortcut edge for the triple $\langle r,\mathit{pre}(u),v\rangle$ is $(u',v)$, where $\langle r,\mathit{pre}(u')\rangle$ is the last pair in the sorted sublist of $B'$ with root $r$ such that $\mathit{pre}(u') \le \mathit{pre}(u)$.
\end{proof}

Let $C$ be a set of vertices, and let $(u,v)$ be a bridge of $G_s$. The restriction of $C$ in $D_v$ is the set $C_v = C \cap D_v$.

\begin{lemma}
\label{lemma:auxiliary-scc}
Let $e=(u,v)$ be a bridge of $G_s$, and let $C$ be an $e$-dominated component.
Then, the restriction $C_v$ of $C$ in $D_v$ is an SCC of $\widehat{G}_v$.
\end{lemma}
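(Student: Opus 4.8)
The plan is to view the auxiliary graph $\widehat{G}_v$ as a contraction of the induced subgraph $G[D(v)]$. Since $e=(u,v)$ is a bridge, $v$ is the root of $D_v$, so $V_v = D_v$ and $D(v)$ decomposes as the disjoint union of $D_v$ together with the subtrees $D(w)$, where $w$ ranges over the children (in $D$) of vertices $p \in D_v$ with $w \notin D_v$; each such edge $(p,w)$ is itself a bridge of $G_s$. Let $\phi : D(v) \to D_v$ be the map that fixes $D_v$ and sends every vertex of a hanging subtree $D(w)$ to the corresponding parent $p$; by construction $\widehat{G}_v$ is exactly the image of $G[D(v)]$ under $\phi$ with self-loops discarded. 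The first step is to record the consequence of Lemma~\ref{lemma:partition-paths} applied to the bridge $(p,w)$: the only edge of $G$ entering $D(w)$ from outside $D(w)$ is the bridge $(p,w)$ itself, since a single edge that must contain $(p,w)$ has to equal it. Consequently no edge of $G$ joins two distinct hanging subtrees, and every edge of $G$ with both endpoints in $D(v)$ is carried by $\phi$ to an edge or a self-loop of $\widehat{G}_v$; thus $\phi$ is a graph homomorphism from $G[D(v)]$ onto $\widehat{G}_v$. Throughout I take $e$ to be the nearest bridge ancestor of $C$, equivalently $C \cap D_v \neq \emptyset$ (otherwise $C$ lies inside a single hanging subtree $D(w)$ and is dominated by the deeper bridge $(p,w)$, so the statement is really the one for $\widehat{G}_w$).

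For the first direction I show that $C_v = C \cap D_v$ is strongly connected in $\widehat{G}_v$. Given $a,b \in C_v$, since $G[C]$ is strongly connected there is a path from $a$ to $b$ inside $C \subseteq D(v)$; applying the homomorphism $\phi$ turns it into a walk from $a$ to $b$ in $\widehat{G}_v$, where edges staying in $D_v$ are preserved and edges leaving or entering a hanging subtree collapse either to self-loops or to the corresponding auxiliary edge. Running this argument in both directions shows that every pair of vertices of $C_v$ is mutually reachable in $\widehat{G}_v$, so $C_v$ lies within a single SCC of $\widehat{G}_v$.

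For the converse I establish maximality: any vertex $z \in D_v$ lying in the same SCC of $\widehat{G}_v$ as some $a \in C_v$ must itself lie in $C_v$. Here I lift an arbitrary walk of $\widehat{G}_v$ back to a walk of $G[D(v)]$. Ordinary edges are genuine edges of $G[D(v)]$ and are kept, while each auxiliary edge $(p,z)$, arising from an edge $(x,z)$ of $G$ with $x$ in a hanging subtree $D(w)$ contracted to $p$, is replaced by the concatenation $p \to w \leadsto x \to z$, where $p \to w$ is the bridge, $x \to z$ is the original edge, and $w \leadsto x$ is a path inside $D(w)$. The existence of this last path is the standard dominator-tree property that $w$ reaches every vertex of $D(w)$ by a path contained in $D(w)$ (take any $s$-to-$x$ path and follow it from its first entry into $D(w)$, which must be $w$). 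Every vertex used lies in $D(v)$, so the lifted walk certifies mutual reachability of $z$ and $a$ in $G[D(v)]$; hence $z \in C$, and since $z \in D_v$ we get $z \in C_v$. Combined with the previous paragraph, $C_v$ is exactly one SCC of $\widehat{G}_v$.

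The delicate point, and the step I expect to require the most care, is the lifting of auxiliary edges in the maximality argument: one must guarantee that contracting the hanging subtrees creates no shortcut between vertices of $D_v$ that were not already mutually reachable inside $G[D(v)]$. This is precisely what the realizability of each auxiliary edge $(p,z)$ as a path $p \to w \leadsto x \to z$ inside $D(v)$ provides, and it relies simultaneously on the bridge property of Lemma~\ref{lemma:partition-paths} (to control how a path may enter a hanging subtree) and on the dominator reachability property (to control how it traverses it). A secondary subtlety is the bookkeeping ensuring $C_v \neq \emptyset$, i.e.\ that $e$ is taken as the nearest bridge ancestor of $C$; this is what makes the conclusion concern $\widehat{G}_v$ rather than some auxiliary graph sitting below it.
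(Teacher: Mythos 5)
Your proof is correct and follows essentially the same route as the paper's: the forward inclusion by projecting paths of $G[D(v)]$ onto $\widehat{G}_v$, and maximality by lifting paths of $\widehat{G}_v$ back into $G[D(v)]$ (the paper argues this contrapositively via Lemma~\ref{lemma:partition-paths}, leaving the lifting of shortcut edges implicit in ``by construction''). Your explicit treatment of the contraction map and of the realizability of each shortcut edge as a path $p \to w \leadsto x \to z$ inside $D(v)$ simply fills in details the paper omits.
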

\begin{proof}
Let $x$ and $y$ be any vertices in $D_v$. It suffices to argue that $x$ and $y$ are in the same $e$-dominated component if and only if they are strongly connected in $\widehat{G}_v$.
First suppose that $x$ and $y$ are in different $e$-dominated components. Then $x$ and $y$ are not strongly connected in the subgraph $G[D(v)]$ that is induced by $D(v)$.
Then, without loss of generality, we can assume that all paths from $x$ to $y$ contain a vertex in $V \setminus D(v)$.  By Lemma \ref{lemma:partition-paths} all paths from $x$ to $y$ contain $(u,v)$.
Hence $x$ and $y$ are not strongly connected in $\widehat{G}_v$.

Now suppose that $x$ and $y$ are in the same  $e$-dominated component. That is, $x$ and $y$ are strongly connected in $G[D(v)]$, so there is a path from $x$ to $y$ and a path from $y$ to $x$
containing only vertices in $D(v)$. Then $\widehat{G}_v$ contains, by construction, a path from $x$ to $y$ and a path from $y$ to $x$. Thus $x$ and $y$ are strongly connected in $\widehat{G}_v$.
\end{proof}

\subsection{A new algorithm for $2$-edge-connected blocks}
\label{sec:labeling-algorithm}
We next sketch a new linear-time algorithm to compute the $2$-edge-connected blocks
of a strongly connected digraph $G$
that combines ideas
from \cite{2ECB} and \cite{2C:GIP:arXiv} and that will be useful for our incremental algorithm. We refer to this algorithm as the \textsf{2ECB labeling algorithm}.
Similarly to the algorithm of
\cite{2C:GIP:arXiv}, our algorithm assigns a label to
each vertex, so that two vertices are $2$-edge-connected if and only if they have the same label.
The labels are defined by the bridge decomposition of the dominator trees and by the auxiliary components, as follows.
Let $\widehat{G}_r$ be an auxiliary graph of $G_s$. We pick a \emph{canonical vertex} for each SCC
$C$ of $\widehat{G}_r$, and denote by
$c_x$ the canonical vertex of the SCC
that contains $x$. We define $c_x^R$ for the SCC's
of the auxiliary graphs of $G_s^R$ analogously.
We define the label of $x$ as $\mathit{label}(x) = \langle r_x, c_x, r^R_x, c^R_x\rangle$.

To prove that the \textsf{2ECB labeling algorithm} is correct, we show that the
labels produced by the algorithm  are essentially identical to the labels of \cite{2C:GIP:arXiv}.
In order to do that, we briefly review the linear-time algorithm of \cite{2C:GIP:arXiv} for computing the $2$-edge-connected blocks of $G$.
The algorithm of \cite{2C:GIP:arXiv} computes, for each vertex $x$, a tuple $\mathit{label}'(x) = \langle r_x, h_x, r^R_x, h^R_x\rangle$, where $r_x$ and $r^R_x$ are exactly as before,
while $h_x$ and $h^R_x$ are defined by the loop nesting trees $H$ and $H^R$ respectively.
We say that a vertex $x$ is a \emph{boundary vertex} in $H$ if $h(x) \not\in D_x$, i.e., when $x$ and its parent in $H$ lie in different trees of the bridge decomposition $\mathcal{D}$. As a special case, we also let $s$ be a boundary vertex of $H$.
The \emph{nearest boundary vertex} of $x$ in $H$, denoted by $h_x$, is the nearest ancestor of $x$ in $H$ that is a boundary vertex in $H$. Hence, if $r_x = s$ then $h_x = s$. Otherwise, $h_x$ is the unique ancestor of $x$ in $H$ such that $h_x \in D_x$ and $h(h_x) \not \in D_x$.
We define the \emph{nearest boundary vertex} of $x$ in $H^R$ similarly.
A vertex $x$ is a \emph{boundary vertex} in $H^R$ if $h^R(x) \not\in D^R_x$, i.e., when $x$ and its parent in $H^R$ lie in different trees of $\mathcal{D}^R$. Again, we let $s$ be a boundary vertex of $H^R$.
Then, the \emph{nearest boundary vertex} of $x$ in $H^R$, denoted by $h^R_x$, is the nearest ancestor of $x$ in $H^R$ that is a boundary vertex in $H^R$.
As shown in \cite{2C:GIP:arXiv}, two vertices $x$ and $y$ are $2$-edge-connected if and only if $\mathit{label}'(x) =\mathit{label}'(y)$.

\begin{lemma}
\label{lemma:labeling-algorithm}
Let $x$ and $y$ be any vertices of $G$. Then, $x$ and $y$ are $2$-edge-connected if and only if
$\mathit{label}(x)=\mathit{label}(y)$.
\end{lemma}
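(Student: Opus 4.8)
The plan is to prove the biconditional $x \leftrightarrow_{\mathrm{2e}} y \iff \mathit{label}(x)=\mathit{label}(y)$ by showing that the new labels $\mathit{label}(x)=\langle r_x,c_x,r^R_x,c^R_x\rangle$ induce exactly the same partition of $V$ as the labels $\mathit{label}'(x)=\langle r_x,h_x,r^R_x,h^R_x\rangle$ of~\cite{2C:GIP:arXiv}. Since the first and third coordinates $r_x,r^R_x$ are literally shared between the two labelings, it suffices to compare the second coordinates (and, symmetrically, the fourth). Concretely, I would prove the following key claim: for any two vertices $x,y$ with $r_x=r_y=r$ (so that both live in the same auxiliary graph $\widehat{G}_r$), we have $c_x=c_y$ if and only if $h_x=h_y$. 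By the symmetric argument applied to $G^R_s$, the same statement holds for $c^R$ versus $h^R$, and combining the two coordinate-wise equivalences with the shared $r$-coordinates yields $\mathit{label}(x)=\mathit{label}(y)\iff \mathit{label}'(x)=\mathit{label}'(y)$, whence the lemma follows from the correctness of $\mathit{label}'$ established in~\cite{2C:GIP:arXiv}.

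To prove the key claim I would argue that, within a fixed tree $D_r$ of the bridge decomposition, the partition of $V_r$ into auxiliary components (the SCCs of $\widehat{G}_r$) coincides with the partition induced by the nearest-boundary map $h$. The bridge $(u,v)$ such that $v=r$ is the relevant bridge $e$, and $D_r = D_v$ is the restriction of $D(v)$. By Lemma~\ref{lemma:auxiliary-scc}, the SCCs of $\widehat{G}_r=\widehat{G}_v$ are exactly the restrictions $C_v = C \cap D_v$ of the $e$-dominated components $C$. On the other hand, the fact recalled at the end of Section~\ref{sec:loop-nesting} (from~\cite{2C:GIP:arXiv}) states that for a boundary vertex $w\in D(v)$ with $h(w)\notin D(v)$, the loop $H(w)$ induces an $e$-dominated component; and $h_x$ is by definition the nearest boundary ancestor of $x$, so $x$ and $y$ share the same nearest boundary vertex precisely when they lie in the same set $H(w)$ for a boundary vertex $w$. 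The crux is therefore to match these two descriptions: I would show that $x$ and $y$ lie in the same $e$-dominated component if and only if $x$ and $y$ have the same nearest boundary vertex in $H$, i.e. that the $e$-dominated components are exactly the sets $\{x \in D(v) : h_x = w\}$ as $w$ ranges over boundary vertices.

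The main obstacle will be verifying this refinement correspondence cleanly, namely that the $e$-dominated components partition $D(v)$ in lockstep with the boundary-vertex classes rather than one merely refining the other. The containment in one direction is relatively direct: if $h_x=h_y=w$ then $x,y\in H(w)$, and since $H(w)$ is an $e$-dominated component they lie in a common such component, giving $c_x=c_y$ via Lemma~\ref{lemma:auxiliary-scc}. The reverse direction is the delicate part: I must show that two vertices in the same $e$-dominated component (equivalently, in the same SCC $C_v$ of $\widehat{G}_v$) necessarily have the same nearest boundary vertex. Here I would use that the loops $H(w)$ for distinct boundary vertices $w\in D(v)$ are disjoint (by the laminarity of the loop family together with laminarity of the components, Lemma~\ref{lemma:laminar}) and together cover $D(v)$, so they form a partition into $e$-dominated components; since distinct $e$-dominated components are vertex-disjoint and each class $\{x:h_x=w\}=H(w)\cap D(v)$ is contained in a single component, the two partitions must coincide.

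Once the coordinatewise equivalences $c_x=c_y \iff h_x=h_y$ (under $r_x=r_y$) and $c^R_x=c^R_y \iff h^R_x=h^R_y$ (under $r^R_x=r^R_y$) are established, the proof concludes formally: $\mathit{label}(x)=\mathit{label}(y)$ holds exactly when all four coordinates agree, which by the equivalences is exactly when $\mathit{label}'(x)=\mathit{label}'(y)$ holds, and the latter characterizes $2$-edge-connectivity by~\cite{2C:GIP:arXiv}. Care must be taken that the equivalence of second coordinates is only claimed \emph{conditioned} on equal first coordinates (the canonical vertices $c_x$ are only meaningful within a single auxiliary graph), but this causes no difficulty since $\mathit{label}(x)=\mathit{label}(y)$ forces $r_x=r_y$ anyway, placing $x$ and $y$ in the same auxiliary graph before the $c$-coordinates are ever compared.
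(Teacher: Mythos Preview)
Your proposal is correct and follows essentially the same route as the paper: both reduce the correctness of $\mathit{label}$ to that of $\mathit{label}'$ from~\cite{2C:GIP:arXiv} by showing, conditioned on $r_x=r_y$, that $c_x=c_y$ if and only if $h_x=h_y$ (and symmetrically in $G^R_s$), with Lemma~\ref{lemma:auxiliary-scc} as the key ingredient. The paper's proof compresses the link between SCCs of $\widehat{G}_{r_x}$, $e$-dominated components, and the boundary vertices $h_x$ into a single sentence invoking Lemma~\ref{lemma:auxiliary-scc}, whereas you spell out the intermediate correspondence more explicitly; the substance is the same.
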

\begin{proof}
Let $\mathit{label}'(v) = \langle r_v, h_v, r^R_v, h^R_v\rangle$ be the labels assigned to by the original labeling algorithm of \cite{2C:GIP:arXiv}.
By \cite{2C:GIP:arXiv}, we have that $x$ and $y$ are $2$-edge-connected if and only if $\mathit{label}'(x)=\mathit{label}'(y)$.
Suppose $r_x = r_y$. We show that $h_x = h_y$ if and only if $c_x = c_y$.
The fact that $r_x = r_y$ implies that $x$ and $y$ are in the same auxiliary graph $\widehat{G}_{r_x}$.
By Lemma \ref{lemma:auxiliary-scc}, $x$ and $y$ are strongly connected in $\widehat{G}_{r_x}$ if and only if $h_x = h_y$.
Hence, $h_x = h_y$ if and only if $c_x = c_y$.
The same argument implies that if $r^R_x = r^R_y$, then $h^R_x = h^R_y$ if and only if $c^R_x = c^R_y$.
We conclude that $x$ and $y$ are $2$-edge-connected if and only if
$\mathit{label}(x)=\mathit{label}(y)$.
\end{proof}

\begin{theorem}
\label{theorem:labeling-algorithm}
The \textsf{2ECB labeling algorithm} computes the $2$-edge-connected blocks of a strongly connected digraph in linear time.
\end{theorem}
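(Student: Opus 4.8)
The plan is to establish the two assertions of the theorem separately: first correctness (that the labels identify the $2$-edge-connected blocks), and then the linear running time (that the labels, and the grouping they induce, can be produced in $O(m+n)$ time). Correctness is essentially already in hand. By Lemma~\ref{lemma:labeling-algorithm}, two vertices $x$ and $y$ satisfy $x \leftrightarrow_{\mathrm{2e}} y$ if and only if $\mathit{label}(x)=\mathit{label}(y)$. Hence the equivalence classes of the relation ``same label'' are exactly the $2$-edge-connected blocks, and it suffices to partition $V$ according to the labels. The substance of the theorem therefore reduces to a careful accounting of the time spent computing the labels and forming the partition.

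For the running time I would account for each stage of the computation in turn. The dominator trees $D$ and $D^R$ of the flow graphs $G_s$ and $G_s^R$ are computed in linear time using any of the linear-time dominator algorithms cited earlier. The bridges of $G_s$ (respectively $G_s^R$) are exactly those tree edges $(d(v),v)$ (respectively $(d^R(v),v)$) that are bridges, and they can be identified in linear time; deleting them yields the bridge decompositions $\mathcal{D}$ and $\mathcal{D}^R$ together with the roots $r_x$ and $r^R_x$ for every vertex $x$. By Lemma~\ref{lemma:auxiliary-graphs}, all auxiliary graphs $\widehat{G}_r$ of $G_s$, and symmetrically all $\widehat{G}^R_r$ of $G_s^R$, are built in linear time. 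Since each vertex lies in exactly one auxiliary graph and each original edge contributes at most one auxiliary edge, the auxiliary graphs have $n$ vertices and at most $m$ edges in total; running a linear-time SCC algorithm on each one therefore costs $O(m+n)$ overall, and by Lemma~\ref{lemma:auxiliary-scc} the SCCs so obtained are precisely the auxiliary components on which $c_x$ and $c^R_x$ are based. Choosing a canonical vertex per SCC and reading off the four coordinates of $\mathit{label}(x)$ for every $x$ is then a single linear pass.

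It remains to turn the labels into the explicit partition within the same time budget, and this is the step I would treat most carefully. Each $\mathit{label}(x)$ is a $4$-tuple whose entries are vertex names, hence integers in $\{1,\dots,n\}$. Grouping the vertices by label is therefore a sort of $n$ tuples over an alphabet of size $n$, which a radix sort performs in $O(n)$ time using a constant number (four) of bucket-sort passes; scanning the sorted list and cutting it at the positions where the label changes produces the blocks. The only real obstacle is bookkeeping rather than mathematics: one must verify that the canonical-vertex and root identifiers are drawn from $\{1,\dots,n\}$ so that the bucket sort is genuinely linear rather than comparison-based, and that the per-auxiliary-graph SCC computations never scan vertices outside the current graph, so that no hidden overhead accrues. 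Both points are settled by the size bound already noted (total auxiliary size $O(m+n)$) and by indexing canonical vertices by actual vertex numbers, so the entire algorithm runs in linear time.
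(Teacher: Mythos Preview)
Your proof is correct and follows essentially the same approach as the paper: correctness via Lemma~\ref{lemma:labeling-algorithm}, linear-time dominator trees and bridges, linear-time auxiliary graphs via Lemma~\ref{lemma:auxiliary-graphs}, and linear-time SCC computations on the auxiliary graphs using the total size bound. You are somewhat more thorough than the paper, which stops at ``all the required labels can be computed in linear time''; in particular, your explicit treatment of the radix sort to group vertices by their $4$-tuple labels is a detail the paper defers to Section~\ref{sec:queries} rather than including in the proof of this theorem.
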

\begin{proof}
The correctness of the labeling algorithm follows from Lemma~\ref{lemma:labeling-algorithm}. We now bound the runnng time.
The dominator trees and the bridges can be computed in linear time~\cite{dominators:bgkrtw}.
Also all the auxiliary graphs and the auxiliary components can be computed in linear time by Lemma \ref{lemma:auxiliary-graphs} and \cite{dfs:t}.
Hence, all the required labels can be computed in linear time.
\end{proof}

\subsection{Incremental dominators and incremental SCCs}
\label{sec:incremental-scc}

We will use two other
building blocks for our new algorithm, namely
incremental algorithms for maintaining dominator trees
and SCCs.
As shown in \cite{dyndom:2012}, the dominator tree of a flow graph with $n$ vertices can be maintained in $O(m\min\{n,k\}+kn)$ time during a sequence of $k$ edge insertions, where $m$ is the total number of edges after all insertions.
For maintaining the SCCs
of a digraph incrementally, Bender et al.~\cite{Bender:IncCycleDetection:TALG}
presented an algorithm that can handle the insertion of $m$ edges in a digraph with $n$ vertices in $O(m \min\{m^{1/2},n^{2/3}\})$ time. %
Since we aim at an $O(mn)$ bound, we can maintain the SCCs
with a simpler data structure based on
topological sorting
\cite{MarchettiSpaccamela:TopologicalSorting}, augmented so as to handle cycle contractions, as suggested by~\cite{Haeupler:IncTopOrder:TALG}. We refer to this data structure as \textsf{IncSCC},
and we will use it both for maintaining the SCCs of the input graph, and the auxiliary components (i.e., the SCCs
of the auxiliary graphs).
We maintain the SCCs and a topological order for them.
Each SCC
is represented by a canonical vertex, and
the partition of the vertices into SCCs
is maintained through a
disjoint set union data structure~\cite{dsu:tarjan,setunion:tvl}.
The data structure supports
the operation
$\mathit{unite}(p,q)$, which, given canonical vertices $p$ and $q$, merges the SCCs containing $p$ and $q$ into one new SCC and makes $p$
the canonical vertex of the new SCC. It also supports the query
$\mathit{find}(v)$, which returns the canonical vertex of the SCC containing $v$. Here we use the abbreviation $f(v)$ to stand for $\mathit{find}(v)$.
The topological order is represented by a simple numbering scheme, where each canonical vertex is numbered with an integer in the range $[1,n]$, so that
if $(u,v)$ is an edge of $G$, then either $f(u) = f(v)$ ($u$ and $v$ are in the same SCC) or $f(u)$ is numbered less than $f(v)$ (when $u$ and $v$ are in different SCCs).
With each canonical vertex $p$ we store a list $\mathit{out}(p)$ of edges leaving vertices that are in the same SCC as $p$, i.e., edges $(u,v)$ with $f(u)=p$. Note that $\mathit{out}(p)$ may contain multiple vertices in the same SCC (i.e., vertices $u$ and $v$ with $f(u)=f(v)$), due to the SCC contractions (and shortcut edges, in case of the auxiliary components) during edge insertions. Also,  $\mathit{out}(p)$ may contain loops, that is, vertices $v$ with $\mathit{f}(v)=p$.
Each  $\mathit{out}$ list is stored as a doubly linked circular list, so that we can merge two lists and delete a vertex from a list in $O(1)$.
When the incremental SCC
data structure detects that a new SCC is formed, it locates the SCCs that are merged and chooses a canonical vertex for the new SCC.
The \textsf{IncSCC} data structure can handle $m$ edge insertions in a total of $O(mn)$ time.

\section{Incremental  $2$-edge-connectivity in strongly connected digraphs}
\label{sec:incremental-strongly-connected}

To maintain the $2$-edge-connected blocks of a strongly connected digraph during edge insertions,
we design an incremental version of the labeling algorithm of Section \ref{sec:labeling-algorithm}.
If labels are maintained explicitly, one can answer in $O(1)$ time queries on whether
two vertices are $2$-edge-connected, and report in $O(n)$ time all the $2$-edge-connected blocks (see Section \ref{sec:queries}).
Let $(x,y)$ be the edge to be inserted.
We say that vertex $v$ is \emph{affected} by the update if $d(v)$ (its parent in $D$) changes.
Note that $\mathit{Dom}(v)$ may change even if $v$ is not affected.
Similarly, an auxiliary component (resp., auxiliary graph) is affected if it contains an affected vertex.
We let
$\mathit{nca}(x,y)$
denote the nearest common ancestor of $x$ and $y$ in the dominator tree $D$.
We also denote by $D[u,v]$ the path from vertex $u$ to
vertex $v$ in $D$.
If $\mathit{nca}(x,y)$
and $y$ are in different subtrees in the bridge decomposition of $D$ before the insertion of the edge $(x,y)$, we let $(p,q)$ be the first bridge encountered on the path $D[\mathit{nca}(x,y),y]$ (Figure \ref{figure:BridgeDecompositionNCA}). For any vertex $v$, we denote by $\mathit{depth}(v)$ the depth of $v$ in $D$.

\begin{figure}[t!]
\begin{center}
\includegraphics[trim={0 0 0 7cm}, clip=true, width=1.0\textwidth]{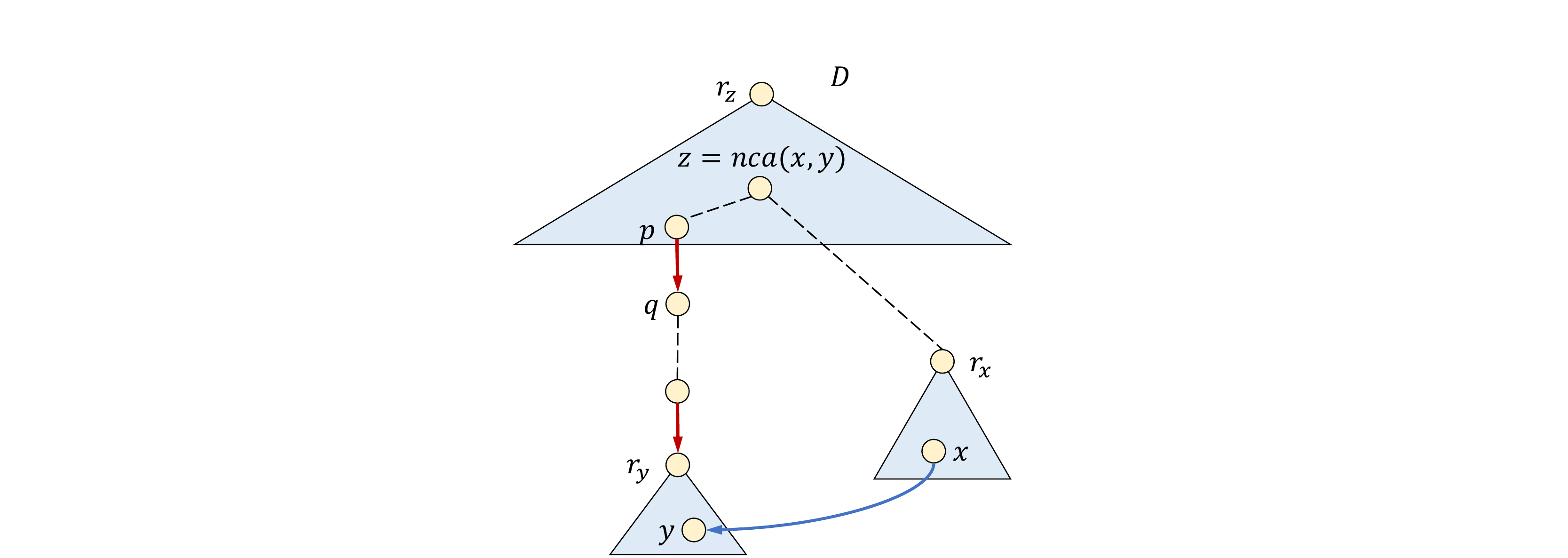}\caption{The bridge decomposition of $D$ before the insertion of a new edge $(x,y)$.}
\label{figure:BridgeDecompositionNCA}
\end{center}
\end{figure}

\subsection{Affected vertices and canceled bridges}
\label{sec:affected}

There are affected vertices after the insertion of $(x,y)$ if and only if $\mathit{nca}(x,y)$ is not a descendant of $d(y)$~\cite{irdom:rr94}.
A characterization of the affected vertices is provided by the following lemma, which is a refinement of a result in~\cite{dynamicdominator:AL}.

\begin{lemma}
\label{lemma:insert-affected} \emph{(\cite{dyndom:2012})}
A vertex $v$ is affected after the insertion of edge $(x,y)$ if and only if $\mathit{depth}(\mathit{nca}(x,y)) < \mathit{depth}(d(v))$
and there is a path $\pi$ in $G$
from $y$ to $v$ such that $\mathit{depth}(d(v)) < \mathit{depth}(w)$ for all $w \in \pi$.
If $v$ is affected, then it becomes a child of $\mathit{nca}(x,y)$
in $D$.
\end{lemma}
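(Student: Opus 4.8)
\emph{Plan.} Write $G' = G \cup \{(x,y)\}$ for the graph after the insertion, and let $D'$ and $d'$ denote the new dominator tree and immediate-dominator function; set $z = \mathit{nca}(x,y)$. The argument rests on three facts. First, inserting an edge can only shrink dominator sets, so $\mathit{Dom}'(v) \subseteq \mathit{Dom}(v)$ for every $v$; consequently $d'(v)$ is always an old ancestor of $v$, and $v$ is affected if and only if its old parent $d(v)$ stops dominating it, that is, iff some path from $s$ to $v$ in $G'$ avoids $d(v)$. Second, since $z = \mathit{nca}(x,y)$ dominates both $x$ and $y$ in $G$, every path in $G'$ that uses the new edge has the form $R\cdot(x,y)\cdot P$, with $R$ a path from $s$ to $x$ (hence containing $z$) and $P$ a path from $y$ to $v$, both in $G$; in particular such a path always passes through $z$. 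Third, I will use the standard dominator-tree property that the minimum-depth vertex on any path of $G$ is a $D$-ancestor of every vertex on that path (call it $(\ast)$); it follows from the usual observation that to leave the subtree of a vertex in $D$ one must pass through that vertex.

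\emph{Backward direction.} Assume $\mathit{depth}(z) < \mathit{depth}(d(v))$ and that there is a path $\pi$ from $y$ to $v$ with $\mathit{depth}(w) > \mathit{depth}(d(v))$ for all $w\in\pi$. Applying $(\ast)$ to $\pi$, its minimum-depth vertex is an ancestor of both $v$ and $y$ and is strictly deeper than $d(v)$; being an ancestor of $v$ that is deeper than $d(v)$, it is a descendant of $d(v)$, and hence so is $y$. Thus $z$, $d(v)$ and $y$ all lie on the path from $s$ to $y$ in $D$, with $z$ a proper ancestor of $d(v)$. If $d(v)$ dominated $x$, then $d(v)$ would be a common ancestor of $x$ and $y$ deeper than $z=\mathit{nca}(x,y)$, a contradiction; so $d(v)$ does not dominate $x$. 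I now stitch together a path from $s$ to $v$ in $G'$ that avoids $d(v)$: a path from $s$ to $z$ avoiding $d(v)$ exists because $d(v)$, being deeper than $z$, does not dominate $z$; a path from $z$ to $x$ avoiding $d(v)$ exists as the suffix of an $s$-to-$x$ path avoiding $d(v)$ (one exists since $d(v)$ does not dominate $x$, and it must pass through $z$); then the edge $(x,y)$; then $\pi$, which avoids $d(v)$ by the depth condition. Hence $d(v)$ loses domination of $v$, and by the first fact $v$ is affected.

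\emph{Forward direction and the $\mathit{nca}$ claim.} Assume $v$ is affected. By the first fact there is a simple path $Q$ from $s$ to $v$ in $G'$ avoiding $d(v)$; it must use $(x,y)$, for otherwise it would be a path of $G$ avoiding the dominator $d(v)$. By the second fact we may write $Q = R\cdot(x,y)\cdot P$ with $z\in R$ and $P$ a path from $y$ to $v$ avoiding $d(v)$, and $d'(v)$ is a proper ancestor of $d(v)$. The core step is to prove $d'(v) = z$, which simultaneously yields the last sentence of the lemma and the inequality $\mathit{depth}(z) < \mathit{depth}(d(v))$ (since then $z = d'(v)$ is a proper ancestor of $d(v)$). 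For this I show that $z$ dominates $v$ in $G'$ and is the deepest vertex to do so: every path to $v$ in $G'$ either uses $(x,y)$, and then contains $z$ by the second fact, or is a path of $G$, which contains $z$ once we know $z$ is an old ancestor of $v$; while no proper descendant of $z$ among the old dominators of $v$ can survive in $G'$, because $Q$ already bypasses $d(v)$ and, after routing through $(x,y)$, avoids every old dominator of $v$ that lies strictly below $z$. Finally I recover the path condition: taking the suffix $P$ from $y$ to $v$ of $Q$ and applying $(\ast)$ to trim it to the portion lying strictly below $d(v)$ in depth produces the required path from $y$ to $v$.

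\emph{Main obstacle.} The delicate point is the forward direction, specifically pinning the new immediate dominator to exactly $z=\mathit{nca}(x,y)$ and upgrading ``there is a path from $y$ to $v$ avoiding $d(v)$'' to ``there is such a path all of whose vertices are strictly deeper than $d(v)$.'' Both hinge on converting the non-containment statement ``$Q$ avoids $d(v)$'' into the sharper depth statements via property $(\ast)$ together with the fact that $z$ is an old ancestor of $v$; establishing that ancestry, equivalently that the deepest surviving dominator of $v$ is precisely $\mathit{nca}(x,y)$ rather than some higher vertex, is the crux, and is exactly where this refinement over the coarser ``avoids $d(v)$'' characterization of~\cite{dynamicdominator:AL} is needed.
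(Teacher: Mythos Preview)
The paper does not prove this lemma; it is quoted from~\cite{dyndom:2012}, so there is no in-paper proof to compare against. I therefore evaluate your plan on its own.

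Your key property $(\ast)$ is false. Take the flow graph with edges $(s,a)$, $(a,b)$, $(s,c)$, $(b,t)$, $(c,t)$: then $d(a)=d(c)=d(t)=s$ (depth~$1$) and $d(b)=a$ (depth~$2$), and on the one-edge path from $b$ to $t$ the minimum-depth vertex is $t$, which is not a $D$-ancestor of $b$. The ``usual observation'' you cite is stated backwards: one can \emph{leave} $D(u)$ without passing through $u$ (witness the edge $(b,t)$ leaving $D(a)$); what is true is that one cannot \emph{enter} $D(u)$ from outside without passing through $u$, since $u$ dominates every vertex of $D(u)$. The correct replacement for $(\ast)$ is therefore: any path in $G$ ending at $v$ and avoiding an ancestor $u$ of $v$ in $D$ lies entirely among the proper descendants of $u$. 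With this in hand your backward direction goes through unchanged (the depth hypothesis forces $d(v)\notin\pi$, hence $\pi\subseteq D(d(v))\setminus\{d(v)\}$, so in particular $y$ is a descendant of $d(v)$), and in the forward direction no ``trimming'' is needed: the suffix $P$ of $Q$ already avoids $d(v)$, so every vertex on $P$ is a proper descendant of $d(v)$ and hence strictly deeper than $d(v)$; and since the prefix $R$ also avoids $d(v)$ you get that $d(v)$ is an ancestor of $y$ but not of $x$, whence $z=\mathit{nca}(x,y)$ is a proper ancestor of $d(v)$ directly.

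There is a second, smaller gap in your argument that $d'(v)=z$: the single witness path $Q$ need not avoid \emph{every} old dominator $u$ of $v$ strictly between $z$ and $d(v)$, because its prefix $R$ from $s$ to $x$ may well pass through such a $u$. Instead, for each such $u$ argue (exactly as for $d(v)$) that $u$ is an ancestor of $y$ but not of $x$, choose a fresh $s$-to-$x$ path in $G$ avoiding $u$, and append $(x,y)$ and then $P$; the suffix $P$ still avoids $u$ since $P\subseteq D(d(v))\setminus\{d(v)\}\subseteq D(u)\setminus\{u\}$.
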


The algorithm in \cite{dyndom:2012} applies Lemma \ref{lemma:insert-affected} to identify the
affected vertices by starting a search from $y$ (if $y$ is not affected, then no other vertex is).
We assume that the outgoing and incoming edges of each vertex are maintained as
linked lists, so that a new edge can be inserted in $O(1)$, and that
the dominator tree $D$
is represented by the parent function $d$. We also maintain the depth of
vertices in $D$.
We say that a vertex $v$
is \emph{scanned}, if the edges leaving $v$ are examined during the search for affected vertices, and that it is \emph{visited} if there is a scanned vertex $u$ such
that $(u,v)$ is an edge in $G$. Every scanned vertex is either affected or a descendant of an affected vertex in $D$.
By Lemma \ref{lemma:insert-affected}, a visited vertex $v$ is scanned if $\mathit{depth}(\mathit{nca}(x,y)) < \mathit{depth}(d(v))$.
Let $(u,v)$ be a bridge of $G_s$.
We say that $(u,v)$ is \emph{canceled} by the insertion of edge $(x,y)$ if it is no longer a bridge after the insertion.
We say that a bridge
$(u,v)$ is \emph{locally canceled} if $(u,v)$ is a canceled bridge and $v$ is not affected.
Note that if $(u,v)$ is locally canceled, then $u=\mathit{nca}(x,y)$.
In the next lemmata, we consider the effect of the insertion of edge $(x,y)$ on the bridges of $G_s$, and relate the affected and scanned vertices with the auxiliary components.
Recall that
$(p,q)$ is the first bridge encountered on the path $D[\mathit{nca}(x,y),y]$ (Figure \ref{figure:BridgeDecompositionNCA}), $D(v)$ denotes the descendants of $v$ in $D$, and $G[C]$ is the subgraph induced by the vertices in $C$.

\begin{lemma}
\label{lemma:affected-reachable}
Suppose that bridge $(p,q)$ is not locally canceled after the insertion of $(x,y)$.
Let $z=\mathit{nca}(x,y)$ and let $v$ be an affected vertex such that $r_v \not= r_z$. All vertices reachable from $v$ in $G[D(q)]$ are either affected or scanned.
\end{lemma}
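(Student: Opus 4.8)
The plan is to prove the slightly stronger statement that every vertex reachable from $v$ in $G[D(q)]$ is in fact \emph{scanned}, arguing by induction along a path that stays inside $D(q)$. The backbone of the induction is a depth observation: for every vertex $w \in D(q)\setminus\{q\}$ the parent $d(w)$ is again a descendant of $q$, so $\mathit{depth}(d(w)) \ge \mathit{depth}(q) = \mathit{depth}(p)+1 > \mathit{depth}(z)$, where the last inequality uses that $z$ is an ancestor of $p=d(q)$ in $D$. Hence every such $w$ meets the scanning condition $\mathit{depth}(z) < \mathit{depth}(d(w))$ of Lemma~\ref{lemma:insert-affected}, so the moment $w$ is visited it is scanned.

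Concretely, let $v=w_0,w_1,\dots,w_k=w$ be a path with every $w_i \in D(q)$. I would show by induction that each $w_i$ is scanned. The base case $w_0=v$ holds because an affected vertex satisfies the depth condition and is reached by the search started from $y$, hence is scanned. For the inductive step, if $w_i$ is scanned then the edge $(w_i,w_{i+1})$ is examined and $w_{i+1}$ becomes visited; if $w_{i+1}\neq q$, the depth bound above immediately upgrades \emph{visited} to \emph{scanned}, and that sub-case is complete.

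The only remaining, and genuinely delicate, case is $w_{i+1}=q$, where the depth bound degenerates to $\mathit{depth}(z)\le\mathit{depth}(d(q))=\mathit{depth}(p)$ and may fail to be strict; here I would instead prove that $q$ itself is affected. The key point is that reaching $q$ from within $D(q)$ cancels the bridge $(p,q)$. Indeed, since $v$ is affected, after the insertion $d(v)=z$, and because $q$ is a proper descendant of $z$ it no longer dominates $v$; therefore there is a path $P$ from $s$ to $v$ avoiding $q$, and in particular avoiding the edge $(p,q)$. On the other hand, the prefix $w_0\cdots w_{i+1}$ is a path $Q$ from $v$ to $q$ lying entirely in $D(q)$, and since $p\notin D(q)$ the edge $(p,q)$ does not occur on $Q$. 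Concatenating $P$ and $Q$ gives a walk from $s$ to $q$ that avoids $(p,q)$, and deleting cycles from it yields a simple path from $s$ to $q$ avoiding $(p,q)$. Thus $(p,q)$ is no longer a bridge after the insertion, i.e.\ it is canceled; since by hypothesis $(p,q)$ is not \emph{locally} canceled, $q$ must be affected, and therefore scanned, which closes the induction.

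I expect the case $w_{i+1}=q$ to be the main obstacle: for every other descendant of $q$ the conclusion is essentially immediate from the depth comparison, whereas the root $q$ of $D_q$ sits exactly at the bridge and the depth argument breaks down, forcing the more global reasoning about the cancellation of $(p,q)$ and the use of the ``not locally canceled'' hypothesis. Two minor points I would verify carefully are that the reduction from a walk to a simple path preserves avoidance of the edge $(p,q)$ (removing cycles only deletes edges), and that $z$ being a proper ancestor of $q$ indeed prevents $q$ from dominating $v$ once $d(v)=z$.
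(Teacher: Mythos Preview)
Your argument is correct. The depth inequality $\mathit{depth}(z)<\mathit{depth}(d(w))$ for $w\in D(q)\setminus\{q\}$ is exactly the right engine, and your explicit treatment of the case $w_{i+1}=q$ via cancellation of $(p,q)$ is sound: the walk $P\cdot Q$ indeed avoids the edge $(p,q)$, and once $(p,q)$ is canceled the ``not locally canceled'' hypothesis forces $q$ to be affected. One small point you rely on implicitly is that every affected vertex is actually scanned by the search of \cite{dyndom:2012}; this is true (the search is designed precisely to discover all affected vertices) but is worth stating, since you need it both for the base case $w_0=v$ and to continue the induction past $q$.

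The paper's proof is organized differently. Rather than walking the path vertex by vertex, it inducts on the number of \emph{non-tree} edges in $\pi$: at each stage it locates the first vertex $w'$ on $\pi$ that is not a descendant (in $D$) of the current affected vertex, and uses the parent property of $D$ to show that $d(w')$ is an ancestor of that affected vertex, whence Lemma~\ref{lemma:insert-affected} makes $w'$ itself affected; all intermediate vertices, being descendants of an affected vertex, are scanned. The ``not locally canceled'' hypothesis enters only through the one-line assertion that it forces $z\neq p$ or $w\neq q$, which is exactly the content of your cancellation argument compressed (and, frankly, left rather implicit). Your route is more elementary in that it avoids the parent property and the non-tree-edge bookkeeping; the paper's route has the minor advantage of distinguishing explicitly which reachable vertices are affected versus merely scanned, a distinction used later in Lemma~\ref{lemma:affected-component}.
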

\begin{proof}
Let $\pi$ be a path in $G[D(q)]$ from $v$ to another vertex $w$.
Since $(p,q)$ is not locally canceled we have that $z \not= p$ or $w \not= q$.
Hence, all vertices $u$ on $\pi$ satisfy $\mathit{depth}(z) < \mathit{depth}(d(u))$.
We prove the lemma by induction on the number of the edges in $\pi$ that are not edges of the dominator tree $D$.
Suppose that $\pi$ consists of only one edge, i.e., $\pi=(v,w)$. If $w$ is a child of $v$ in $D$, then $w$ is scanned.
Otherwise, the parent property of $D$ \cite{DomCert:TALG} implies that $d(w)$ is an ancestor of $v$. In this case, Lemma \ref{lemma:insert-affected} implies that $w$ is affected. Thus, the induction base holds.
Assume by induction that the lemma holds for any vertex that is reachable in $G[D(q)]$ from an affected vertex through 
a path of at most $k$ edges that are not edges of $D$.
Let $\pi$ be a path from $v$ to $w$. Let $(u,w')$ be the first edge on $\pi$ such that $w'$ is not a descendant of $v$ in $D$. The parent property of $D$ implies again that $d(w')$ is an ancestor of $u$ in $D$. Since $w'$ is not a descendant of $v$, $d(w')$ is also an ancestor of $v$ in $D$. Hence, Lemma \ref{lemma:insert-affected} implies that $w'$ is affected.
The part of $\pi$ from $w'$ to $w$ satisfies the induction hypothesis, and the lemma follows.
\end{proof}

\begin{lemma}
\label{lemma:canceled-bridge}
Let $e=(u,v)$ be a bridge of $G_s$ that is canceled by the insertion of edge $(x,y)$. Then (i) $y$ is a descendant of $v$ in $D$, and (ii) $y$ is in the same $e$-dominated component
as $v$.
\end{lemma}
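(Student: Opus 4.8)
The plan is to exploit the fact that, in the old flow graph, the bridge $e=(u,v)$ is the unique gateway into the subtree $D(v)$: by Lemma~\ref{lemma:partition-paths}, every simple path in $G$ from a vertex outside $D(v)$ to any vertex of $D(v)$ must traverse $(u,v)$. I would start from the hypothesis that $e$ is canceled, which means that after inserting $(x,y)$ there is a path from $s$ to $v$ avoiding $(u,v)$; taking it to be simple, call it $P$. Since $(u,v)$ was a bridge of $G_s$ before the insertion, no old path from $s$ to $v$ avoids $(u,v)$, so $P$ must use the new edge $(x,y)$, and (being simple) it uses it exactly once. Writing $P$ as a prefix from $s$ to $x$, the edge $(x,y)$, and then a suffix $\sigma$ from $y$ to $v$, the suffix $\sigma$ is a simple path consisting only of old edges and avoiding $(u,v)$. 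This $\sigma$ is the object that drives both parts of the statement.

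To prove (i), I would argue by contradiction. Suppose $y$ is not a descendant of $v$ in $D$. Then $\sigma$ is a simple path in the old graph $G$ from a non-descendant of $v$ to $v\in D(v)$, so Lemma~\ref{lemma:partition-paths} (applied with $w=y$) forces $\sigma$ to contain the edge $(u,v)$, contradicting the fact that $\sigma$ avoids $(u,v)$. Hence $y\in D(v)$, which is (i).

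For (ii), I would establish that $y$ and $v$ are strongly connected inside $G[D(v)]$ by exhibiting a path in each direction. First, $\sigma$ itself must stay within $D(v)$: if $\sigma$ ever left $D(v)$ it could only re-enter it through $(u,v)$ (again by Lemma~\ref{lemma:partition-paths}), which $\sigma$ avoids, so it could never return to its endpoint $v\in D(v)$; thus $\sigma$ is a path from $y$ to $v$ lying entirely in $G[D(v)]$. For the reverse direction I would use the general fact that $v$ reaches every $w\in D(v)$ within $G[D(v)]$: any simple path from $s$ to $w$ passes through $v$ (since $v$ dominates $w$) and reaches $v$ via the bridge $(u,v)$, and its suffix from $v$ to $w$ cannot leave and re-enter $D(v)$ without reusing $(u,v)$ on a simple path, so it stays inside $D(v)$. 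Applying this with $w=y$ (legitimate since $y\in D(v)$ by part (i)) yields a path from $v$ to $y$ in $G[D(v)]$. Together these show that $y$ and $v$ lie in the same SCC of $G[D(v)]$, i.e., in the same $e$-dominated component, which is (ii).

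The only real subtlety, and the step I would treat most carefully, is the repeated appeal to Lemma~\ref{lemma:partition-paths} to control entry into $D(v)$: I must verify in each application that the path involved is simple and made up solely of old edges (so that it is genuinely a path of $G$ and the lemma applies), and that its endpoints are correctly classified as descendant or non-descendant of $v$. Once these bookkeeping points are in place, the remainder is a routine assembly of the reachability facts into the two required paths.
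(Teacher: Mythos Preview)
Your proposal is correct and follows essentially the same approach as the paper: take a new $s$--$v$ path avoiding $e$, observe it must use $(x,y)$, and apply Lemma~\ref{lemma:partition-paths} to the old-edge suffix from $y$ to $v$ to force it inside $D(v)$. You are in fact more explicit than the paper on part~(ii): the paper only argues that the suffix $\pi_1$ lies in $D(v)$ (giving the $y\to v$ direction) and leaves the $v\to y$ direction implicit, whereas you spell out why $v$ reaches every descendant within $G[D(v)]$.
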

\begin{proof}
Since $e=(u,v)$ is not a bridge in $G_s'$, there must be a path $\pi$ from $s$ to $v$ in $G_s'$ that avoids $e$. This path does not exist in $G_s$, so $\pi$ contains $(x,y)$.
Consider the subpath $\pi_1$ of $\pi$ from $y$ to $v$. Path $\pi_1$ exists in $G_s$ and avoids $e$. So, Lemma \ref{lemma:partition-paths} implies that 
all vertices in $\pi_1$
($y$ included)
are descendants of $v$ in $D$, since otherwise $\pi_1$ would have to include $e$.
\end{proof}

\begin{corollary}
\label{corollary:cancelled-bridge-3}
A bridge $e=(u,v)$ of $G_s$ is canceled by the insertion of edge $(x,y)$ if and only if $ \mathit{depth}(\mathit{nca}(x,y)) \le \mathit{depth}(u)$ and there is a path $\pi$ in $G$ from $y$ to $v$ such that $\mathit{depth}(u) < \mathit{depth}(w)$ for all $w \in \pi$.
\end{corollary}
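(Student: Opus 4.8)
The plan is to prove the two implications by tying the cancellation of the bridge $e=(u,v)$ to the characterization of affected vertices in Lemma~\ref{lemma:insert-affected}, and to split the analysis according to whether $\mathit{depth}(\mathit{nca}(x,y)) < \mathit{depth}(u)$ (the strict regime, in which $v$ will turn out to be affected) or $\mathit{depth}(\mathit{nca}(x,y)) = \mathit{depth}(u)$ (the equality regime, corresponding to a \emph{locally canceled} bridge). Throughout I would use that, since $e$ is a bridge, $u=d(v)$, so $\mathit{depth}(u)=\mathit{depth}(v)-1$; consequently the requirement ``$\mathit{depth}(u)<\mathit{depth}(w)$ for all $w\in\pi$'' says precisely that $\pi$ stays strictly below the depth of $u$, so that $\pi$ avoids $u$ and hence avoids the edge $(u,v)$. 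As usual in the incremental setting I assume the inserted edge $(x,y)$ is not already present in $G$.

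For the forward direction, assume $e$ is canceled. The path $\pi$ comes essentially for free from Lemma~\ref{lemma:canceled-bridge}: part~(ii) gives that $y$ lies in the same $e$-dominated component as $v$, so there is a path from $y$ to $v$ inside $G[D(v)]$, and every vertex on it is a descendant of $v$ and therefore has depth at least $\mathit{depth}(v)>\mathit{depth}(u)$. For the depth inequality I would split on whether $v$ is affected: if it is, Lemma~\ref{lemma:insert-affected} yields the strict inequality $\mathit{depth}(\mathit{nca}(x,y))<\mathit{depth}(d(v))=\mathit{depth}(u)$; if it is not, then by definition $e$ is locally canceled, so $u=\mathit{nca}(x,y)$ and the inequality holds with equality. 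Either way $\mathit{depth}(\mathit{nca}(x,y))\le\mathit{depth}(u)$, as required.

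For the reverse direction, assume both conditions hold. I would first establish that $y\in D(v)$: since $\pi$ avoids $u$ it avoids $(u,v)$, so if $y$ were \emph{not} a descendant of $v$, then Lemma~\ref{lemma:partition-paths} applied to the strong bridge $(u,v)=(d(v),v)$ would force every path from the non-descendant $y$ to the descendant $v$ to traverse $(u,v)$, contradicting the existence of $\pi$. With $y\in D(v)$ in hand I again split on the depth inequality. If it is strict, the hypotheses are exactly those of Lemma~\ref{lemma:insert-affected}, so $v$ is affected and its parent in $D$ changes to $\mathit{nca}(x,y)\neq u$; since a bridge $(u,v)$ requires $u=d(v)$, the edge is no longer a bridge in $G_s'$, i.e.\ it is canceled. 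If instead equality holds, then the ancestor of $y$ at depth $\mathit{depth}(v)-1$ is forced to be $u$, whence $\mathit{nca}(x,y)=u$ and $x\notin D(v)$; I would then exhibit an $s$-to-$v$ path in $G_s'$ avoiding $(u,v)$, namely a path from $s$ to $x$ that avoids $v$ (which exists because $v$ does not dominate $x$, as $x$ is not a descendant of $v$ in $D$), followed by the new edge $(x,y)$ and then $\pi$.

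The step I expect to demand the most care is the equality regime of the reverse direction: verifying that the concatenated path genuinely avoids $(u,v)$ and justifying the existence of the $s$-to-$x$ prefix missing $v$ from the fact that $x\notin D(v)$. The assumption that $(x,y)$ is a new edge is exactly what rules out the degenerate possibility $x=u,\ y=v$ (which would make the inserted edge equal to the bridge and leave it uncanceled). The other delicate point is the clean derivation of $y\in D(v)$ through Lemma~\ref{lemma:partition-paths}, since it is this fact that licenses reusing the affected-vertex machinery of Lemma~\ref{lemma:insert-affected} in the strict regime.
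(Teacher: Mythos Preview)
Your proposal is correct. The paper states Corollary~\ref{corollary:cancelled-bridge-3} without proof, so there is no proof in the paper to compare against; your argument fills in precisely the details the paper leaves implicit, drawing on Lemma~\ref{lemma:insert-affected}, Lemma~\ref{lemma:canceled-bridge}, Lemma~\ref{lemma:partition-paths}, and the observation (stated just before Lemma~\ref{lemma:affected-reachable}) that a locally canceled bridge has $u=\mathit{nca}(x,y)$. The case split into the strict regime (where $v$ is affected) and the equality regime (where the bridge is locally canceled) is exactly the natural reading of the paper's surrounding discussion, and your handling of each case is sound, including the care you take with $y\in D(v)$ via Lemma~\ref{lemma:partition-paths} and the explicit $s$-to-$v$ path in the equality case.
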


By Corollary \ref{corollary:cancelled-bridge-3}, we can use
the incremental algorithm of \cite{dyndom:2012} to
detect canceled bridges, without affecting the $O(mn)$ bound.
Indeed, suppose $e=(u,v)$ is a canceled bridge. By Lemma \ref{lemma:canceled-bridge},
$y$ is a descendant of $v$ in $D$ and in the same
$e$-dominated component as $v$. Hence, $v$ will be visited by the search from $y$.

If a bridge $(u,v)$ is locally canceled,
there can be vertices in $D_v$ that are not scanned, and that after the insertion will be located in $D_{u}$, without having
their depth changed.
This is a difficult case for our analysis: fortunately,
the following lemma shows that this case can happen only $O(n)$ times overall.

\begin{lemma}
\label{lemma:locally}
Suppose $(u,v)$ is a bridge of $G_s$ that is locally canceled by the insertion of edge $(x,y)$.
Then $(u,v)$ is no longer a strong bridge in $G$ after the insertion.
\end{lemma}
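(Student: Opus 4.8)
The plan is to argue via Property~\ref{property:strong-bridge}: after the insertion the edge $(u,v)$ is a strong bridge of $G$ if and only if it is a bridge of (the updated) $G_s$ or a bridge of (the updated) $G_s^R$. Since $(u,v)$ is \emph{canceled} by hypothesis, by definition it is no longer a bridge of $G_s$. Hence it suffices to show that $(u,v)$ is also not a bridge of $G_s^R$ after the insertion, i.e.\ that there is a path from $u$ to $s$ in the updated graph that avoids the edge $(u,v)$. (Adding an edge keeps $G$ strongly connected, so Property~\ref{property:strong-bridge} remains applicable with the same start vertex $s$.) Throughout I will use the facts already recorded in the excerpt: because $(u,v)$ is \emph{locally} canceled we have $u=\mathit{nca}(x,y)$, and by Lemma~\ref{lemma:canceled-bridge} the endpoint $y$ is a descendant of $v$ in $D$. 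In particular $u$ is an ancestor of $x$ in $D$ (it is the nearest common ancestor of $x$ and $y$), so $u$ dominates $x$ in the original flow graph $G_s$.

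The construction I would use extracts the desired reverse route directly from the \emph{forward} canceling path. Because $(u,v)$ is canceled, after the insertion there is a simple path $Q$ from $s$ to $v$ that avoids $(u,v)$; since no such path existed before the insertion (all $s$-to-$v$ paths used the bridge $(u,v)$), $Q$ must traverse the unique new edge $(x,y)$, exactly once, so we may write $Q = s \rightsquigarrow x \to y \rightsquigarrow v$ with both indicated subpaths lying in the original graph $G$. The prefix $s \rightsquigarrow x$ is a genuine $s$-to-$x$ path in $G$, and since $u$ dominates $x$ in $G_s$ it must pass through $u$. Truncating $Q$ at that occurrence of $u$ yields a path from $u$ to $v$ that still avoids $(u,v)$. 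Finally I would append to it any \emph{simple} path from $v$ to $s$ (one exists by strong connectivity); such a path cannot use $(u,v)$, because $(u,v)$ enters $v$ and a simple path starting at $v$ never revisits $v$. Concatenating gives a path $u \rightsquigarrow v \rightsquigarrow s$ avoiding $(u,v)$, which is precisely what is needed to conclude that $(u,v)$ is not a bridge of $G_s^R$ after the insertion.

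Putting the two halves together: after the insertion $(u,v)$ is neither a bridge of $G_s$ (it is canceled) nor a bridge of $G_s^R$ (by the path just constructed), so by Property~\ref{property:strong-bridge} it is no longer a strong bridge of $G$.

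The step I expect to be the main obstacle is producing the $u$-to-$v$ detour that avoids $(u,v)$; this is exactly where the \emph{local} nature of the cancellation is essential. The key realisation is that the forward canceling path $Q$ already contains such a detour for free, because the hypothesis $u=\mathit{nca}(x,y)$ forces $u$ to be an ancestor of $x$ and hence to lie on the prefix of $Q$. For a merely canceled (non-locally-canceled) bridge this fails: there $\mathit{nca}(x,y)$ is a strict ancestor of $u$, $u$ need not be an ancestor of $x$, and the prefix $s\rightsquigarrow x$ of $Q$ need not pass through $u$, so the reverse route cannot be harvested this way. The remaining pieces (the one-time use of $(x,y)$ in a simple $Q$, and the fact that a simple path from $v$ to $s$ avoids an edge entering $v$) are routine and I would state them briefly.
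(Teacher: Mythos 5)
Your proof is correct and rests on the same core idea as the paper's: because a \emph{locally} canceled bridge forces $u=\mathit{nca}(x,y)$ to be an ancestor of $x$ in $D$ (hence a dominator of $x$), one can exhibit a detour $u \rightsquigarrow x \xrightarrow{(x,y)} y \rightsquigarrow v$ that avoids the edge $(u,v)$. The only (cosmetic) differences are in packaging: the paper builds the two legs explicitly (a $u$-to-$x$ path avoiding $y$, and a $y$-to-$v$ path inside $D(v)$ via Lemma~\ref{lemma:canceled-bridge}) and concludes directly that $(u,v)$ is no longer a strong bridge, whereas you harvest the whole detour from the canceling $s$-to-$v$ path $Q$ and close the argument through Property~\ref{property:strong-bridge} by also producing a $u$-to-$s$ path in $G_s^R$ -- both wrap-ups are valid.
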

\begin{proof}
Since $v$ is visited but not affected, we have that $u =\mathit{nca}(x,y)$. Then $x$ is a descendant of $u$ in $D$, and $y \not\in D[u,x]$. Hence, there is a path $\pi_1$ from $u$ to $x$
that does not contain $y$.
Since $y$ is affected,
Lemma \ref{lemma:canceled-bridge}
implies that there is a path $\pi_2$ from $y$ to $v$ that contains only descendants of $v$.
So, $\pi_1\cdot (x,y)\cdot\pi_2$ is a path from $u$ to $v$ that avoids $(u,v)$ in $G$ after the insertion of $(x,y)$.
\end{proof}

Note that a canceled bridge that is not locally canceled may still appear as a bridge in $G_s^R$ after the insertion of edge $(x,y)$.
Next we provide some lemmata that help us to identify the necessary changes in the
auxiliary components of the affected subgraphs and $\widehat{G}_{r_z}$.

\begin{lemma}
\label{lemma:affected-subtrees}
Let $v$ be a vertex that is affected by the insertion of edge $(x,y)$. Then $r_v$ is on the path $D[r_z,r_y]$.
\end{lemma}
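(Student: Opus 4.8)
The plan is to reduce the claim to two ancestor relations: that $r_z$ is an ancestor of $r_v$ and that $r_v$ is an ancestor of $r_y$ in $D$. Since $z=\mathit{nca}(x,y)$ is an ancestor of $y$, its tree root $r_z$ is an ancestor of $y$, hence an ancestor of $r_y$; so $r_z$, $r_v$ and $r_y$ will all lie on the single chain of bridge-decomposition roots running from $s$ down to $r_y$, and the two ancestor relations place $r_v$ exactly on the subpath $D[r_z,r_y]$. Throughout I would use the path $\pi$ from $y$ to $v$ furnished by Lemma~\ref{lemma:insert-affected}, on which every vertex $w$ satisfies $\mathit{depth}(d(v))<\mathit{depth}(w)$, together with the depth inequality $\mathit{depth}(d(v))>\mathit{depth}(z)\ge \mathit{depth}(r_z)$ (the last step because $r_z$ is an ancestor of $z$). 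I may assume $\pi$ is simple, since restricting to a simple subpath only removes vertices and preserves the depth bound.

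For the first relation I would show that $r_v$ is an ancestor of $y$. Suppose not, so $y$ is not a descendant of $r_v$; then $r_v\neq s$ and $(d(r_v),r_v)$ is a bridge. Since $r_v$ is an ancestor of $v$, vertex $v$ is a descendant of $r_v$, so by Lemma~\ref{lemma:partition-paths} every simple path from $y$ to $v$ (in particular $\pi$) must contain the edge $(d(r_v),r_v)$ and hence the vertex $d(r_v)$. But $\mathit{depth}(d(r_v))=\mathit{depth}(r_v)-1\le \mathit{depth}(v)-1=\mathit{depth}(d(v))$, contradicting the fact that every vertex of $\pi$ has depth strictly larger than $\mathit{depth}(d(v))$. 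Thus $r_v$ is an ancestor of $y$, and since $r_v$ is itself a bridge-decomposition root, it is an ancestor of $r_y$.

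For the second relation I would argue that $r_z$ is an ancestor of $r_v$. As both $r_z$ and $r_v$ are ancestors of $r_y$ they are comparable in $D$, so it suffices to exclude the case that $r_v$ is a proper ancestor of $r_z$. Assume it is. Then $y\in D(z)\subseteq D(r_z)$, while $v\in D_v$ together with $r_z$ being a proper descendant of $r_v$ forces $v\notin D(r_z)$; hence $\pi$ leaves $D(r_z)$ along some edge $(a,b)$ with $a\in D(r_z)$ and $b\notin D(r_z)$. By the parent property of $D$, $d(b)$ is an ancestor of $a$; since $a$ is a descendant of $r_z$ while $b$ is not, $d(b)$ must be a proper ancestor of $r_z$, so $\mathit{depth}(b)=\mathit{depth}(d(b))+1\le \mathit{depth}(r_z)$. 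This contradicts $\mathit{depth}(b)>\mathit{depth}(d(v))>\mathit{depth}(z)\ge\mathit{depth}(r_z)$. Therefore $r_v$ is not a proper ancestor of $r_z$, and $r_z$ is an ancestor of $r_v$. Combining the two relations, $r_z$ is an ancestor of $r_v$ and $r_v$ is an ancestor of $r_y$, so $r_v$ lies on $D[r_z,r_y]$.

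I expect the second relation to be the main obstacle, since it requires reasoning about how $\pi$ exits the subtree $D(r_z)$ and a careful application of the parent property to the exiting edge, rather than the cleaner bridge argument available in the first part; the bookkeeping that pins $\mathit{depth}(b)$ between two conflicting bounds is where the depth hypotheses of Lemma~\ref{lemma:insert-affected} are used most tightly. Minor care is also needed for the degenerate cases $r_v=s$ (where the first relation is immediate) and $r_z=r_y$ (where $D[r_z,r_y]$ collapses to a single vertex and the two relations force $r_v=r_z=r_y$).
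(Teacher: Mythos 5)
Your proof is correct and follows essentially the same route as the paper's: the core step in both is to observe that every vertex on the path $\pi$ from Lemma~\ref{lemma:insert-affected} has depth exceeding $\mathit{depth}(d(v))\ge\mathit{depth}(d(r_v))$, so $\pi$ avoids the bridge $(d(r_v),r_v)$, and Lemma~\ref{lemma:partition-paths} then forces $y$ to be a descendant of $r_v$. The only divergence is in establishing that $r_z$ is an ancestor of $r_v$, where the paper simply uses that an affected vertex $v$ is already a descendant of $z$ (so $r_v$ must be a descendant of $r_z$, as no bridge separates $r_v$ from $v$), whereas you re-derive this from the exit edge of $\pi$ out of $D(r_z)$ via the parent property --- a valid but longer detour.
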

\begin{proof}
Vertices $x$ and $y$ are descendants of $z$, so $r_y$ and $r_x$ are descendants of $r_z$.
Since $v$ is affected, $v$ is a descendant of $z$, and by Lemma \ref{lemma:insert-affected} there is a path $\pi$ from $y$ to $v$ such that $\mathit{depth}(d(v)) < \mathit{depth}(w)$ for all $w \in \pi$.
Thus, $\pi$ does not contain the bridge $(d(r_v),r_v)$, so by Lemma \ref{lemma:partition-paths}, $y$ is a descendant of $r_v$. Then $r_v$ is a descendant of $r_z$ and an ancestor of $r_y$.
\end{proof}

\begin{lemma}\emph{(\cite{DomCert:TALG})}
\label{lemma:strongly-connected-siblings}
Let $S$ be the set of vertices of a strongly connected subgraph of $G$. Then $S$ consists of a set of siblings in $D$ and possibly some of their descendants in $D$.
\end{lemma}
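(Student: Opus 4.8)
The plan is to show that the $D$-minimal vertices of $S$ --- those $v\in S$ having no proper ancestor in $D$ that also lies in $S$ --- are all siblings in $D$. This immediately yields the lemma, since every vertex of $S$ lies in $D(v)$ for some $D$-minimal vertex $v$ (walk up from any $x\in S$ in $D$ to the topmost ancestor still in $S$). Let $a$ be the nearest common ancestor in $D$ of all vertices of $S$. First I would dispose of the easy case $a\in S$: then $a$ is an ancestor of every vertex of $S$, so $a$ is the unique $D$-minimal vertex, $S\subseteq D(a)$, and the statement holds trivially with the singleton sibling set $\{a\}$.

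The interesting case is $a\notin S$. Here, because $a$ is the nearest common ancestor and does not itself belong to $S$, the vertices of $S$ must be spread over at least two distinct child subtrees of $a$ in $D$ (otherwise that single child would be a deeper common ancestor). I would fix an arbitrary $D$-minimal vertex $v\in S$ and let $c$ be the child of $a$ with $v\in D(c)$; the goal is to prove $v=c$, so that every $D$-minimal vertex is a child of $a$ and hence they are all siblings with common parent $a$. Suppose for contradiction that $v\neq c$, so $c$ is a proper ancestor of $v$ and therefore $c\notin S$ by minimality of $v$. Since some vertex $u\in S$ lies outside $D(c)$, strong connectivity of the subgraph gives a path $\pi$ from $u$ to $v$ using only vertices of $S$, hence only vertices of $D(a)$. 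Let $w'$ be the first vertex of $\pi$ lying in $D(c)$, and let $(w,w')$ be the edge of $\pi$ entering $w'$, so that $w\notin D(c)$.

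The crux is to apply the parent property of $D$~\cite{DomCert:TALG} to the edge $(w,w')$: it forces $d(w')$ to be an ancestor of $w$. Since $w$ and $w'$ both lie in $D(a)$ but in different child subtrees of $a$ (namely $w'\in D(c)$ while $w\notin D(c)$), their only common ancestors are $a$ and the ancestors of $a$; as $d(w')$ is a common ancestor of $w$ and $w'$, it must be $a$ or higher. Combined with $w'\in D(c)$, this can only happen if $w'=c$, because any proper descendant of $c$ has its parent inside $D(c)$, contradicting that $d(w')$ is $a$ or above. But then $c=w'\in\pi\subseteq S$, contradicting $c\notin S$. This contradiction gives $v=c$, finishing the argument.

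I expect the main obstacle to be the careful bookkeeping in this last step: precisely locating where $\pi$ first enters $D(c)$ and using the parent property to conclude that this entry vertex is exactly the child $c$ of $a$, rather than some deeper vertex of $D(c)$. Everything else --- the case split on whether $a\in S$, and the reduction from ``siblings and their descendants'' to ``the $D$-minimal vertices are siblings'' --- is routine.
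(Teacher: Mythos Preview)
Your argument is correct. The paper does not actually prove this lemma; it is quoted from~\cite{DomCert:TALG} without proof, so there is no in-paper argument to compare against. Your reduction to showing that every $D$-minimal vertex of $S$ is a child of $a=\mathrm{nca}(S)$ is the natural one, and the crux step is handled properly: entering $D(c)$ for the first time along the $S$-path $\pi$ and applying the parent property to the entering edge $(w,w')$ forces $d(w')$ to dominate $w\notin D(c)$, which is impossible unless $w'=c$, yielding $c\in S$ and the desired contradiction. The case analysis on whether $a\in S$ and the observation that $S\not\subseteq D(c)$ (since otherwise $c$ would be a deeper common ancestor than $a$) are both sound.
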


In the following, we assume that
bridge $(p,q)$ is not locally canceled after the insertion of $(x,y)$ and that $z=\mathit{nca}(x,y)$.

\begin{lemma}
\label{lemma:affected-component}
Let $C$ be an affected auxiliary component of an auxiliary graph $\widehat{G}_r$ with $r \not= r_z$.
Then $C$ consists of a set of affected siblings in $D$ and possibly some of their affected or scanned descendants in $D$.
\end{lemma}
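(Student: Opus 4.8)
The plan is to realize $C$ as the trace in $D_r$ of a genuine strongly connected subgraph of $G$, to situate this subgraph below the bridge $(p,q)$, to certify via Lemma~\ref{lemma:affected-reachable} that every one of its vertices is affected or scanned, and finally to read off the ``siblings plus descendants'' shape directly from Lemma~\ref{lemma:strongly-connected-siblings}.

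First I would unfold the hypotheses. Since $C$ is affected it contains an affected vertex $a$, and as $C\subseteq D_r$ we have $r_a=r$. Because $r$ is a root in the bridge decomposition, $e=(d(r),r)$ is a bridge, so by Lemma~\ref{lemma:auxiliary-scc} the component $C$ is the restriction $\mathcal{C}\cap D_r$ of an $e$-dominated component $\mathcal{C}$, i.e.\ an SCC of $G[D(r)]$; in particular $\mathcal{C}$ is a strongly connected subgraph of $G$ contained in $D(r)$. Next I would pin down its location: by Lemma~\ref{lemma:affected-subtrees} the root $r=r_a$ lies on $D[r_z,r_y]$, and since $r\neq r_z$ while $(p,q)$ is the first bridge on $D[z,y]$, the vertex $r$ equals $q$ or is a descendant of $q$; hence $\mathcal{C}\subseteq D(r)\subseteq D(q)$. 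The same reasoning shows $z$ is a \emph{proper} ancestor of $r$, a fact I will use at the end.

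With $\mathcal{C}$ strongly connected inside $D(r)\subseteq D(q)$, every vertex of $\mathcal{C}$ (in particular every vertex of $C$) is reachable from the affected vertex $a$ along a path lying in $G[D(q)]$, so Lemma~\ref{lemma:affected-reachable} gives at once that every vertex of $C$ is affected or scanned. To extract the tree structure I would apply Lemma~\ref{lemma:strongly-connected-siblings} to $\mathcal{C}$: its minimum-depth vertices form a set of siblings $s_1,\dots,s_k$, children of a common vertex $t$ in $D$, and every other vertex of $\mathcal{C}$ is a descendant of some $s_i$. A short observation confines the relevant ones to $C$: if $c\in C$ is a descendant of $s_i$, then the path $D[r,c]$ is bridge-free (as $c\in D_r$) and $s_i$ lies on it, so $s_i\in D_r$ and thus $s_i\in\mathcal{C}\cap D_r=C$. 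Hence the ``roots'' of $C$ are honest siblings in $D$, and every remaining vertex of $C$ descends from one of them and is affected or scanned.

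The crux will be to show that these top siblings $s_i$ are affected, for which I would verify the two conditions of Lemma~\ref{lemma:insert-affected}. For the depth condition $\mathit{depth}(z)<\mathit{depth}(d(s_i))=\mathit{depth}(t)$, the case $s_i\neq r$ is immediate from $\mathit{depth}(t)\geq\mathit{depth}(r)>\mathit{depth}(z)$, while the delicate case is $s_i=r$ (so $k=1$ and $\mathcal{C}$ is the SCC through $r$): here I would rule out $d(r)=z$, since otherwise $r$ could never change its parent and so would not be affected, yet $r\in C$ forces $r$ to be affected or scanned, and a scanned-but-not-affected $r$ would have to descend from an affected vertex among its ancestors $z,d(z),\dots$, none of which is affected — a contradiction. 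For the path condition I would splice a path from $y$ to $a$ witnessing that $a$ is affected (all of whose vertices have depth exceeding $\mathit{depth}(d(a))\geq\mathit{depth}(t)$) with a path from $a$ to $s_i$ inside $\mathcal{C}$ (all of whose vertices have depth exceeding $\mathit{depth}(t)$, as $\mathcal{C}$ sits strictly below $t$); the concatenation never rises to depth $\mathit{depth}(t)=\mathit{depth}(d(s_i))$, so $s_i$ is affected. I expect this last step — the boundary case $s_i=r$ together with the depth bookkeeping along the spliced path — to be the main obstacle, whereas the reduction to $\mathcal{C}$ and the reachability step are routine invocations of the cited lemmas.
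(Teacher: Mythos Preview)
Your proof is correct and follows essentially the same strategy as the paper: invoke Lemma~\ref{lemma:strongly-connected-siblings} for the siblings-plus-descendants structure, Lemma~\ref{lemma:affected-reachable} to certify every vertex is affected or scanned, and then splice a witness path for the affected vertex $a$ with an internal path to each sibling to verify Lemma~\ref{lemma:insert-affected}. Your version is in fact more careful than the paper's in two respects: you pass explicitly through the $e$-dominated component $\mathcal{C}\subseteq G[D(r)]$ so that Lemma~\ref{lemma:strongly-connected-siblings} applies to a genuine strongly connected subgraph of $G$ (the paper applies it directly to $C$, which is only strongly connected in $\widehat{G}_r$), and you give a separate argument for the boundary case $s_i=r$ with $d(r)=z$, which the paper dispatches with the terse clause ``$u\neq q$ or $z\neq p$''.
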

\begin{proof}
By Lemma \ref{lemma:strongly-connected-siblings}, $C$ consists of a set of siblings $S$ in $D$ and possibly some of their descendants in $D$.
Also, Lemma \ref{lemma:affected-reachable} implies that all vertices in $C$ are scanned.
So, it suffices to show that all siblings in $S$ are affected. Let $v$ be an affected vertex in $C$. Consider any sibling $u \in S$. Since $C$ is strongly connected,
there is a path $\pi_2$ from $v$ to $u$ containing only vertices in $C$.
Since $(p,q)$ is not locally canceled, $u \not= q$ or $z \not= p$, so all vertices $w$ on $\pi_2$ satisfy $\mathit{depth}(z)<\mathit{depth}(d(w))$.
Let $\pi_1$ a path from $y$ to $v$ that satisfies Lemma \ref{lemma:insert-affected}.
Then $\pi_1 \cdot \pi_2$ is a path from $y$ to $u$ that also satisfies Lemma \ref{lemma:insert-affected}. Hence $u$ is affected.
\end{proof}

An auxiliary component is \emph{scanned} if it contains a scanned vertex.
As with vertices,
affected auxiliary components are also scanned (the converse is not necessarily true).

\begin{lemma}
\label{lemma:scanned-component}
Let $C$ be a scanned auxiliary component of an auxiliary graph $\widehat{G}_r$ with $r \not= r_z$.
Then all vertices in $C$ are scanned.
\end{lemma}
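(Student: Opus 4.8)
The plan is to show that scannedness propagates forward along paths inside $G[D(q)]$, and that the whole component $C$ lives in $G[D(q)]$ and is reachable, within $G[D(q)]$, from the given scanned vertex.

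First I would locate $C$. Let $w\in C$ be a scanned vertex. Since $w$ is scanned, it is either affected or a descendant of an affected vertex in $D$; in either case Lemma~\ref{lemma:affected-subtrees} places the relevant affected subtree root on the path $D[r_z,r_y]$, and because $r\neq r_z$ this forces $r$ to lie below the first bridge $(p,q)$, i.e.\ $r$ is a descendant of $q$. Consequently $C\subseteq D_r\subseteq D(q)$. Next, by Lemma~\ref{lemma:auxiliary-scc}, $C$ is the restriction to $D_r$ of a bridge-dominated component $C'$ that is strongly connected in $G[D(r)]\subseteq G[D(q)]$; hence for every $u\in C$ there is a path from $w$ to $u$ all of whose vertices lie in $D(q)$.

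The engine of the argument is a depth observation: every vertex $t\in D(q)$ with $t\neq q$ has $d(t)$ a descendant of $q$, so $\mathit{depth}(z)<\mathit{depth}(q)\le \mathit{depth}(d(t))$, which is exactly the test that turns a \emph{visited} vertex into a \emph{scanned} one. Thus, starting from the scanned vertex $w$ and following the $G$-edges of the path to any $u\in C$, each successive vertex is visited from a scanned predecessor and (being a proper descendant of $q$) satisfies the depth test, so it is scanned. This immediately handles every $u\in C$ with $u\neq q$.

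The only vertex escaping this argument is $q$ itself, which can belong to $C$ only when $r=q$, and whose depth test $\mathit{depth}(z)<\mathit{depth}(d(q))=\mathit{depth}(p)$ is tight precisely when $p=z$. This boundary case is where I expect the real work, and where the hypothesis that $(p,q)$ is \emph{not locally canceled} is used. If $q$ is affected then it is scanned and we are done; otherwise $(p,q)$ is not canceled, and I would derive a contradiction. Since $q$ is unscanned, the only re-entry into $D(q)$ from outside is through the bridge $(p,q)=(z,q)$ whose head $q$ is unscanned (Lemma~\ref{lemma:partition-paths}); as the search starts at $y\in D(q)$, its path to $w$ therefore never leaves $D(q)$ and exhibits a $y\to w$ path inside $G[D(q)]$. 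Concatenating with a $w\to q$ path inside the strongly connected $C'$ yields a $y\to q$ path all of whose vertices are deeper than $z$, which by Corollary~\ref{corollary:cancelled-bridge-3} cancels the bridge $(z,q)$ — contradicting that it is an uncanceled bridge with $q$ unaffected. Hence $q$, too, is scanned. Everything outside this boundary analysis is routine given Lemmas~\ref{lemma:affected-subtrees} and~\ref{lemma:auxiliary-scc} and the depth characterization of scanned vertices; the delicate point is ruling out a scanned-but-unscanned-root configuration at $q$ when $p=z$.
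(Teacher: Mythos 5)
The decisive step in your first paragraph --- ``because $r\neq r_z$ this forces $r$ to lie below the first bridge $(p,q)$, i.e.\ $r$ is a descendant of $q$,'' hence $C\subseteq D(q)$ --- is a genuine gap: it does not follow from Lemma~\ref{lemma:affected-subtrees}, and it is false in general. Lemma~\ref{lemma:affected-subtrees} constrains only the root of the tree containing an \emph{affected} vertex. Your scanned vertex $w$ may be a descendant of an affected vertex $v$ whose root is $r_z$ (i.e., $v\in D_z$, which Lemma~\ref{lemma:affected-subtrees} allows), and then $w$'s own tree $D_r$ can hang off $D_z$ anywhere below $v$, not necessarily below $(p,q)$. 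Concretely, suppose $v\in D_z$ is affected (so $\mathit{depth}(z)<\mathit{depth}(d(v))$) and $v$ has a child $u'$ in $D$ through a bridge $(v,u')$ with $u'\notin D(q)$. The search scans $v$, visits $u'$ through this bridge, and since $\mathit{depth}(d(u'))=\mathit{depth}(v)>\mathit{depth}(z)$ it scans $u'$, and possibly many vertices of $D(u')$ below it; such configurations do arise, because an affected $v\in D_z$ only requires a deep path from $y$, which exists whenever $d(v)$ dominates $y$. Then $\widehat{G}_{u'}$ contains scanned auxiliary components to which the lemma applies, yet $r=u'$ is neither $r_z$ nor a descendant of $q$. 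Everything after your first paragraph (the paths inside $G[D(q)]$, the ``proper descendant of $q$'' depth engine, the boundary analysis at $q$) is carried out under this false inclusion, so as written your proof establishes the lemma only for the components that happen to lie in $D(q)$.

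The ingredients you use are the right ones, and your boundary case is valuable: the contradiction via Corollary~\ref{corollary:cancelled-bridge-3} (a scanned $y\to w$ path stays in $D(q)$ by Lemma~\ref{lemma:partition-paths} because it cannot re-enter through the unscanned head $q$, and concatenating it with a $w\to q$ path inside the component cancels $(z,q)$) is a correct, fully spelled-out justification of the paper's terse first sentence, namely that ``$(p,q)$ not locally canceled'' implies $q\notin C$ or $p\neq z$. But note that the paper never localizes $C$ inside $D(q)$: it argues directly that under this disjunction \emph{every} $w\in C$ satisfies $\mathit{depth}(z)<\mathit{depth}(d(w))$, after which scannedness spreads through the strongly connected component. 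To repair your proof you must add the case where $r$ is not a descendant of $q$: for such an $r$ one has $y\notin D(r)$ (scanned vertices are descendants of $z$, so $r$ is a proper descendant of $z$ off the path $D[z,y]$), and since by Lemma~\ref{lemma:partition-paths} the only edge entering $D(r)$ from outside is the bridge $(d(r),r)$ itself, the search can reach $D_r$ only by scanning $r$; hence $\mathit{depth}(z)<\mathit{depth}(d(r))$, every vertex of $D_r$ passes the depth test, and your propagation argument then applies verbatim.
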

\begin{proof}
The fact that $(p,q)$ is not locally canceled implies that $q \not\in C$ or $p \not= z$.
So, for each vertex $w$ in $C$ we have $\mathit{depth}(z)<\mathit{depth}(d(w))$,which implies that $w$ is scanned.
\end{proof}

We say that a vertex $v$ is \emph{moved} if it is located in an auxiliary graph $\widehat{G}_r$ with $r \not= r_z$ before the insertion of $(x,y)$,
and in $\widehat{G}_{r_z}$ after the insertion.
Lemmata \ref{lemma:affected-component} and \ref{lemma:scanned-component} imply that if an auxiliary component $C$ contains
a moved vertex, then all vertices in the component are also moved. 
We call such an auxiliary component \emph{moved}.
Now we describe how to find the moved auxiliary components that need to be merged.
Let $H$ be the subgraph of $G$ induced by the scanned vertices in $D(q)$. We refer to $H$ as the \emph{scanned subgraph}.

\begin{lemma}
\label{lemma:scanned-subgraph}
Let $\zeta$ and $\xi$ be two distinct roots in the bridge decomposition of $D$, such that $\zeta, \xi \not= r_z$,
and $D_{\zeta}$ and $D_{\xi}$ are contained in $D(q)$.
Let $C_{\zeta}$ and $C_{\xi}$ be scanned components in $\widehat{G}_{\zeta}$ and $\widehat{G}_{\xi}$, respectively.
Then $C_{\zeta}$ and $C_{\xi}$ are strongly connected in $G[D(q)]$ if and only if they are strongly connected in $H$.
\end{lemma}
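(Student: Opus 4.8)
The backward direction is immediate: by construction $H$ is the subgraph of $G[D(q)]$ induced by the scanned vertices lying in $D(q)$, so every walk in $H$ is also a walk in $G[D(q)]$; hence if $C_\zeta$ and $C_\xi$ are strongly connected in $H$ they are strongly connected in $G[D(q)]$. All the work is in the forward direction, and my plan is to show that any path realizing the strong connectivity in $G[D(q)]$ can be taken to lie entirely inside $H$, i.e.\ to pass only through scanned vertices.

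First I would record that $C_\zeta$ and $C_\xi$ are contained in $H$: since $\zeta,\xi\neq r_z$ and $C_\zeta,C_\xi$ are scanned components, Lemma~\ref{lemma:scanned-component} gives that every vertex of $C_\zeta\cup C_\xi$ is scanned, and by hypothesis they lie in $D(q)$. I would then fix $a\in C_\zeta$, $b\in C_\xi$ together with a path $P$ from $a$ to $b$ in $G[D(q)]$, and claim that every vertex of $P$ is scanned.

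The engine of the argument is the following local propagation fact. If $c$ is scanned and $(c,c')$ is an edge of $G$ with $c,c'\in D(q)$ and $c'\neq q$, then $c'$ is scanned. Indeed, $c'$ is then \emph{visited}, since it is reached from the scanned vertex $c$; and since $c'$ is a proper descendant of $q$ its parent in $D$ satisfies $\mathit{depth}(d(c'))\ge\mathit{depth}(q)=\mathit{depth}(p)+1>\mathit{depth}(p)\ge\mathit{depth}(z)$, the last inequality because $z$ is an ancestor of $p$ on the path $D[z,y]$. Thus $\mathit{depth}(\mathit{nca}(x,y))<\mathit{depth}(d(c'))$, and by the scanning rule $c'$ is scanned. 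Starting from $a$, which is scanned, and walking along $P$, this shows by induction that every vertex of $P$ other than $q$ is scanned; as $P$ is simple, $q$ occurs at most once, so the only vertex that could fail to be scanned is $q$ itself.

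It therefore remains to handle the single boundary vertex $q$ when it appears on $P$, and I expect this to be the main obstacle. Here I would invoke the standing hypothesis that $(p,q)$ is not locally canceled: either $q$ has a scanned predecessor on $P$ or $q=a\in C_\zeta$ is already scanned, so $q$ is visited, and the depth condition $\mathit{depth}(z)<\mathit{depth}(p)$ settles it except in the degenerate case $p=z$, where non-local cancellation forces $q$ to be affected, hence scanned, with $y$ strongly connected to $q$ inside $G[D(q)]$ via Lemma~\ref{lemma:canceled-bridge}. Once all of $P$ is known to be scanned we get $P\subseteq H$; applying the same reasoning to a return path from $b$ to $a$ and to the internal paths witnessing strong connectivity inside $C_\zeta$ and $C_\xi$ (which stay in $D(\zeta),D(\xi)\subseteq D(q)$ and avoid $q$ whenever $\zeta,\xi\neq q$) yields that $C_\zeta$ and $C_\xi$ are strongly connected in $H$.
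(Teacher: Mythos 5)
Your proposal follows the same route as the paper's proof: the backward direction is immediate, and the forward direction shows that every vertex on a connecting path in $G[D(q)]$ satisfies the depth criterion of Lemma~\ref{lemma:insert-affected} and is therefore scanned, so the path survives in $H$. Your explicit edge-by-edge propagation (``scanned vertex plus out-edge into $D(q)\setminus\{q\}$ yields a scanned vertex'') is a more careful rendering of what the paper compresses into the single assertion that $\mathit{depth}(z)<\mathit{depth}(d(w))$ for all $w$ on the path. The one step that does not hold as written is the treatment of $q$ in the case $p=z$: ``non-local cancellation forces $q$ to be affected'' is not a valid implication, since ``not locally canceled'' only means ``not canceled, or affected.'' To close this you must first argue that $(p,q)$ \emph{is} canceled whenever $q$ appears on $P$: the scanned component $C_\zeta$ is reachable from $y$ through vertices of depth greater than $\mathit{depth}(z)=\mathit{depth}(p)$, and the prefix of $P$ up to $q$ stays in $D(q)$, so Corollary~\ref{corollary:cancelled-bridge-3} applies and gives that $(p,q)$ is canceled; only then does the non-local-cancellation hypothesis force $q$ to be affected, hence scanned. (Your citation of Lemma~\ref{lemma:canceled-bridge} points the wrong way: it derives the strong connection of $y$ and $q$ \emph{from} cancellation, whereas here you need to establish cancellation.) The paper is equally terse at this exact corner --- it only records that $q\notin C_\zeta\cup C_\xi$ or $z\neq p$, which does not by itself exclude $q$ as an interior vertex of the path --- so your version, once the cancellation step is supplied, is if anything the more complete argument.
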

\begin{proof}
Clearly, $C_{\zeta}$ and $C_{\xi}$ are strongly connected in $G[D(q)]$ if they are strongly connected in $H$, so it remains to prove the converse.
Suppose $C_{\zeta}$ and $C_{\xi}$ are strongly connected in $G[D(q)]$.
Then, there is a path $\pi$ in $G[D(q)]$ from a vertex in $C_{\xi}$ to
a vertex in $C_{\zeta}$.
The fact that $(p,q)$ is not locally canceled implies that $q \not\in C_{\zeta} \cup C_{\xi}$ or $z \not= p$.
Then, for each vertex $w$ on $\pi$ we have $\mathit{depth}(z)<\mathit{depth}(d(w))$, which implies that $w$ is scanned.
Hence $\pi$ exists in $H$.
Similarly, there is a path in $G[D(q)]$ from a vertex in $C_{\zeta}$ to
a vertex in $C_{\xi}$ that is also contained in $H$.
\end{proof}

Now we introduce a dummy root $r^{\ast}$ in $H$, together with an edge $(v,r^{\ast})$ for each scanned vertex $v$ that has a leaving edge $(v,w)$ such that $w \in D_z$ and $w$ is in the auxiliary component of $p$ in $\widehat{G}_{r_z}$.
We denote this graph by $H^{\ast}$.

\begin{lemma}
\label{lemma:scanned-subgraph-2}
A scanned vertex $v \not \in D_z$ is strongly connected in $G[D(r_z)]$ to a vertex $w \in D_z$  if and only if $r^{\ast}$ is reachable from $v$ in $H^{\ast}$.
In this case, $v$ and $p$ are also strongly connected in
$G[D(r_z)]$.
\end{lemma}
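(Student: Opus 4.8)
The plan is to prove the two directions of the equivalence separately, obtaining the final ``in this case'' claim as a byproduct. Throughout I will use the fact that in $\widehat{G}_{r_z}$ the whole subtree $D(q)$ is contracted into $p$ (since $q$ is a bridge-child of $p \in D_z$), so that every edge from a vertex of $D(q)$ to a vertex $w' \in D_z$ is seen by $\widehat{G}_{r_z}$ as an edge out of $p$. I will also repeatedly invoke Lemma~\ref{lemma:auxiliary-scc}: two vertices of $D_z$ are strongly connected in $G[D(r_z)]$ if and only if they lie in the same auxiliary component of $\widehat{G}_{r_z}$.

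For the ``if'' direction, suppose $r^{\ast}$ is reachable from $v$ in $H^{\ast}$. The last edge of such a path is some $(v',r^{\ast})$, where $v'$ is a scanned vertex of $D(q)$ having an edge $(v',w')$ with $w' \in D_z$ in the auxiliary component of $p$; the preceding portion lies in $H$, hence in $G[D(q)] \subseteq G[D(r_z)]$. Concatenating, $v$ reaches $w'$ in $G[D(r_z)]$, and since $w'$ and $p$ are in the same auxiliary component, Lemma~\ref{lemma:auxiliary-scc} gives paths between $w'$ and $p$ inside $G[D(r_z)]$; thus $v$ reaches $p$. For the reverse direction, Lemma~\ref{lemma:partition-paths} applied to the bridge $(p,q)$ shows that any simple path from $p$ to $v$ must traverse $(p,q)$ and cannot leave $D(q)$ afterwards (a second departure would force a second use of $(p,q)$); hence $p$ reaches $v$ through a path contained in $\{p\} \cup D(q) \subseteq D(r_z)$. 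Therefore $v$ and $p$ are strongly connected in $G[D(r_z)]$, and so are $v$ and $w'$, which proves this direction together with the final claim.

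For the ``only if'' direction, assume $v$ is strongly connected in $G[D(r_z)]$ to some $w \in D_z$, and fix a path $\pi$ from $v$ to $w$ in $G[D(r_z)]$. Let $(v',w')$ be the first edge of $\pi$ with $w' \in D_z$. The prefix of $\pi$ up to $v'$ avoids $D_z$; I claim it stays inside $D(q)$. Indeed, every other bridge-child subtree hanging off $D_z$ can be entered only through its dominating endpoint in $D_z$, so a path starting in $D(q)$ cannot reach a different such subtree without first entering $D_z$; since $\pi$ stays in $D(r_z)$ and does not enter $D_z$ before $v'$, the prefix remains in $D(q)$. Using the hypothesis that $(p,q)$ is not locally canceled, exactly as in the proof of Lemma~\ref{lemma:scanned-component} every vertex $t$ of the prefix satisfies $\mathit{depth}(z) < \mathit{depth}(d(t))$, so each such $t$, being reachable from $y$ through $v$ inside $D(q)$, is scanned; hence the prefix is a path of $H$ and $v'$ is a scanned vertex of $D(q)$. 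It remains to show $w'$ lies in the auxiliary component of $p$. Since $w \in D_z$ reaches $v \in D(q)$ and $w$ is not a descendant of $q$, Lemma~\ref{lemma:partition-paths} forces this path to contain $(p,q)$, so $w$ reaches $p$; combined with the route from $p$ to $v$ to $w$ this makes $w$ and $p$ strongly connected in $G[D(r_z)]$, and likewise $w'$ and $w$. By Lemma~\ref{lemma:auxiliary-scc} all three lie in one auxiliary component, so $w'$ is in the auxiliary component of $p$. Thus $(v',r^{\ast})$ is an edge of $H^{\ast}$ and $r^{\ast}$ is reachable from $v$.

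I expect the main obstacle to be the ``only if'' direction, specifically the confinement of the prefix to $D(q)$ together with the verification that it consists of scanned vertices: this is where the dominator/bridge structure and the not-locally-canceled hypothesis (needed to exclude the degenerate case $p=z$, $w'=q$) must be combined carefully, in the spirit of Lemmas~\ref{lemma:affected-reachable} and~\ref{lemma:scanned-component}. The translations between reachability in $\widehat{G}_{r_z}$ and in $G[D(r_z)]$ via Lemma~\ref{lemma:auxiliary-scc}, and the use of the simple-path property of Lemma~\ref{lemma:partition-paths} to keep the route from $p$ to $v$ inside $D(q)$, are the remaining technical points but should be routine.
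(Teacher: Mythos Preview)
Your proposal is correct and follows essentially the same route as the paper's own proof: for the ``if'' direction you unwind the path to $r^\ast$ into a path to some $w'\in D_z$ in the auxiliary component of $p$, and close the cycle via a $p\to v$ path; for the ``only if'' direction you take the first edge of a $v\to w$ path that enters $D_z$, argue the prefix consists of scanned vertices in $D(q)$, and show $w'$ lies in the auxiliary component of $p$ using Lemma~\ref{lemma:partition-paths} on the $w\to v$ path. The only noteworthy difference is that you make explicit two points the paper leaves implicit --- that the prefix is confined to $D(q)$ (via the bridge structure of the other subtrees hanging off $D_z$) and that a simple $p\to v$ path stays in $\{p\}\cup D(q)$ --- which if anything makes your argument slightly more self-contained.
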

\begin{proof}
Let $v \not \in D_z$ be a scanned vertex. By Lemma \ref{lemma:affected-subtrees}, $v$ is in $D(q)$, hence a descendant of $p$ in $D$.
Suppose $r^{\ast}$ is reachable from $v$ in $H^{\ast}$. Then, there is a path $\pi_1$ in $G[D(r_z)]$ from $v$ to a vertex
$w \in D_z$, where $w$ is in the auxiliary component of $p$ in $\widehat{G}_{r_z}$. Since $p$ and $w$ are in the same auxiliary component,
there is a path $\pi_2$ from $w$ to $p$ in $G[D(r_z)]$. Also, since $v$ is a descendant of $p$ in $D$, there is a path $\pi_3$ from $p$ to $v$ in $G[D(r_z)]$.
Paths $\pi_1$ and $\pi_2\cdot \pi_3$ imply that $w$ and $v$ are strongly connected in $G[D(r_z)]$.

Conversely, let
$w$ be a vertex in $D_z$ that is strongly connected in $G[D(r_z)]$ to a scanned vertex $v \not \in D_z$.
Let $\pi_1$ be a path from $w$ to $v$ in $G[D(r_z)]$, and let $\pi_2$ be a path from $v$ to $w$ in $G[D(r_z)]$.
From Lemma \ref{lemma:affected-reachable} we have that $v \in D(q)$, hence
Lemma \ref{lemma:partition-paths} implies that $\pi_1$ contains $(p,q)$ so $p \in \pi_1$.
Thus $p$ and $w$ are also strongly connected in $G[D(r_z)]$.
Now let $w'$ be the first vertex on $\pi_2$ that is in $D_z$, and let $(t,w')$ be the edge in $\pi_2$ that enters $w'$.
Then $w'$ and $w$ are also  strongly connected in $G[D(r_z)]$.
Also, since $(p,q)$ is not locally canceled, we have $q \not\in \pi_2$ or $z \not= p$.
Then, for each vertex $t'$ on the part of $\pi_2$ from $v$ to $t$ we have $\mathit{depth}(z)<\mathit{depth}(d(t'))$, which implies that $t'$ is scanned.
Hence $\pi_2$ exists in $H$, so by construction, $v$ reaches $r^{\ast}$ in $H^{\ast}$.
\end{proof}

\subsection{The Algorithm}

We describe next our incremental algorithm for maintaining the $2$-edge-connected blocks of a strongly connected digraph $G$. We refer to this algorithm as \textsf{SCInc2ECB}$(G)$.
We initialize the algorithm and the associated data structures by executing the labeling algorithm of Section \ref{sec:labeling-algorithm}.
Algorithm \textsf{Initialize}$(G,s)$, shown below, computes the dominator tree $D$, the set of bridges $Br$ of flow graph $G_s$,
the bridge decomposition $\mathcal{D}$ of $D$, and the corresponding auxiliary graphs $\widehat{G}_r$.
Finally, for each auxiliary graph $\widehat{G}_r$,
it finds its auxiliary components, computes the labels $r_w$ and $c_w$ for each vertex $w \in V_r$, and
initializes an \textsf{IncSCC} data structure.
The execution of \textsf{Initialize}$(G^R,s)$ performs analogous steps in the reverse flow graph $G^R_s$.

\begin{algorithm}[h!]
\LinesNumbered
\DontPrintSemicolon
Set $s$ to be the designated start vertex of $G$.\;

Compute the dominator tree $D$ and the set of bridges $\mathit{Br}$ of the corresponding flow graph $G_s$. \;

Compute the bridge decomposition $\mathcal{D}$ of $D$.\;
\ForEach{root $r$ in $\mathcal{D}$}
{
    Compute the auxiliary graph $\widehat{G}_r$ of $r$.\;
	Compute the strongly connected components in $\widehat{G}_r$.\;
    \ForEach{strongly connected component $C$ in $\widehat{G}_r$}
    {
        Choose a vertex $v \in C$ as the canonical vertex of the auxiliary component $C$.\;
       \ForEach{vertex $w \in C$}
       {
           Set $r_w = r$ and $c_w = v$.
       }
    }
    Initialize a \textsf{IncSCC} data structure for $\widehat{G}_r$.\;
 }
 \caption{\textsf{Initialize}$(G,s)$}
\end{algorithm}

\begin{algorithm}
\LinesNumbered
\DontPrintSemicolon

Let $s$ be the designated start vertex of $G$, and let $e=(x,y)$.\;

Compute the nearest common ancestor $z$ and $z^R$ of $x$ and $y$ in $D$ and $D^R$ respectively.\;

Update the dominator trees $D$ and $D^R$, and return the lists $S$ and $S^R$ of the vertices that were scanned in $D$ and $D^R$ respectively.\;

	\eIf{a bridge is locally canceled in $G_s$ or in $G_s^R$}
    {
	  Execute \textsf{Initialize}$(G,s)$ and \textsf{Initialize}$(G^R,s)$.
    }
	{					
      Execute \textsf{UpdateAC}$(\mathcal{D},z,x,y,S)$ and \textsf{UpdateAC}$(\mathcal{D}^R,z^R, y,x,S^R)$.
    }

\caption{\textsf{SCInsertEdge}$(G,e)$}
\end{algorithm}

\begin{algorithm}[h!]
\LinesNumbered
\DontPrintSemicolon

Let $r_z$ be root of the tree $D_z$ in $\mathcal{D}$ that contains $z$.\;

Let $c_{x'}$ be the canonical vertex of the nearest ancestor $x'$ of $x$ in $D$ such that $x' \in D_z$.\;

Let $(p,q)$ be the first bridge on the path $D[z,y]$, and let $c_p$ be the canonical vertex of $p$.\;

Form the scanned graph $H^{\ast}$ that contains the scanned vertices $S \setminus D_z$ and the edges among them.\;

Compute the strongly connected components $\mathcal{C}$ of $H^{\ast} \setminus r^{\ast}$ and order them topologically.\;

Compute the components $\mathcal{C}^{\ast}$ of $\mathcal{C}$ that reach $r^{\ast}$ in $H^{\ast}$.\;

\ForEach{strongly connected component $C$ in $\mathcal{C}^{\ast}$ that is moved}
{
Merge $C$ with the component of $c_p$.\;
}

\ForAll{strongly connected components in $\mathcal{C} \setminus \mathcal{C}^{\ast}$ that are moved}
{
Insert the components in the topological order of $\widehat{G}_{r_z}$ just after the component of $c_p$.\;
}

\ForEach{vertex $w \in S$}
{
           \lIf{$w$ is moved to $\widehat{G}_{r_z}$} { set $r_w = r_z$.}
}

Update the lists of out edges in the \textsf{IncSCC} data structures of $\widehat{G}_{r_z}$ and of the affected auxiliary graphs.\;

Insert edge $(c_{x'}, y)$ in the list of outgoing edges of $c_{x'}$ and update the \textsf{IncSCC} data structure of $\widehat{G}_{r_z}$.\;

\caption{\textsf{UpdateAC}$(\mathcal{D},z,x,y,L)$}
\end{algorithm}

When a new edge $e=(x,y)$ is inserted, algorithm \textsf{SCInc2ECB} executes procedure $\mathsf{SCInsertEdge}(G,e)$, which updates dominator trees $D$ and $D^R$, together with the corresponding bridge decompositions.
It also finds the set of scanned vertices in $G_s$ and $G_s^R$. If a bridge of  $D$ or $D^R$ is locally cancelled, then we restart the algorithm by executing
\textsf{Initialize}. Otherwise, we need to update the auxiliary components in $G_s$ and $G_s^R$. These updates are handled by procedure
\textsf{UpdateAC}. Before describing \textsf{UpdateAC}, we provide some details on the implementation of  the \textsf{IncSCC} data structures, which
maintain the auxiliary components of each auxiliary graph $\widehat{G}_r$ using
the ``one-way search'' structure of \cite[Sections 2 and 6]{Haeupler:IncTopOrder:TALG}.
Since we need to insert and delete canonical vertices, we augment this data structure as follows.
We maintain the canonical vertices of each auxiliary component in a linked list $L_r$, arranged according to the given topological order of $\widehat{G}_r$.
For each vertex $v$ in $L_r$, we also maintain a rank in $L_r$ which is an integer in $[1,n]$ such that for any two canonical vertices $u$ and $v$ in $L_r$, $\mathit{rank}(u) < \mathit{rank}(v)$ if and only if $u$ precedes $v$ in $L_r$.
The ranks of all vertices can be stored in a single array of size $n$.
Also, with each canonical vertex $w$, we store a pointer to the location of $w$ in $L$.
We represent $L_r$ with a doubly linked list so that we can insert and delete a canonical vertex in constant time.
When we remove vertices from a list $L_r$ we do not need to update the ranks of the remaining vertices in $L_r$.
The insertion of an edge $(x,y)$ may remove vertices from various lists $L_r$, but may insert vertices only in $L_{r_z}$.
After these insertions, we recompute the ranks of all vertices in $L_{r_z}$ just by traversing the list and assigning rank $i$ to the $i$-th vertex in the list.
We maintain links between an original edge $e$, stored in the adjacency lists of $G$, and at most one copy of $e$ in a $\mathit{out}$ list of \textsf{IncSCC}.
This enables us to keep for each shortcut edge $e'=(v',w)$ a one-to-one correspondence with
the original edge $e=(v,w)$ that created $e'$.
We do that because if an ancestor of $v$ is moved to the auxiliary graph $\widehat{G}_{r_z}$ that contains $v'$ ($v'=p$ in Figure \ref{figure:ShortcutEdge}),
then $e$ may correspond to a different shortcut edge or it may even
become an ordinary edge of $\widehat{G}_{r_z}$. Using this mapping we can update the $\mathit{out}$ lists of \textsf{IncSCC}.
To initialize the \textsf{IncSCC} structure of an auxiliary graph, we compute a topological order of the auxiliary components in $\widehat{G}_r$, and create the
list of outgoing edges $\mathit{out}(v)$ for each canonical vertex $v$.

If inserting edge $(x,y)$ does not locally cancel a bridge in $G_s$ and $G^R_s$, then we
update the auxiliary components of $G_s$ using procedure
\textsf{UpdateAC}$(\mathcal{D},z,x,y,S)$, where $\mathcal{D}$ is the updated bridge decomposition of $D$,
$z=\mathit{nca}(x,y)$,
and $S$ is a list of the vertices scanned during the update of $D$.
We
do the same to update the auxiliary components of $G_s^R$.
Procedure \textsf{UpdateAC} first computes the auxiliary components that are moved to $\widehat{G}_{r_z}$,
possibly merging some of them, and then inserts the edge $(x,y)$ as an original or a shortcut edge
of $\widehat{G}_{r_z}$, depending on whether $x \in D_{r_z}$ or not. Note that the insertion of $(x,y)$ may cause
the creation of a new auxiliary component in $\widehat{G}_{r_z}$.
Now we specify some further details in the implementation of \textsf{UpdateAC}. The vertices that are moved to $\widehat{G}_{r_z}$ are
the scanned vertices in $S$ that are not descendants of a strong bridge.
Hence, we can mark the vertices that are moved to $\widehat{G}_{r_z}$ during the search for affected vertices.
The next task is to update the $\mathit{out}$ lists of the canonical vertices in $\widehat{G}_{r_z}$ and the affected auxiliary graphs.
We process the list of scanned vertices $S$ as follows.
Let $v$ be such a vertex. If $v$ is not marked, i.e., is not moved to $\widehat{G}_{r_z}$, then we process the edges leaving $v$; otherwise, we process
both the edges leaving $v$ and the edges entering $v$.
Suppose $v$ is marked. Let $(v,w)$ be an edge leaving $v$ in $G$. If $w$ is also in $\widehat{G}_{r_z}$ after the insertion, then
we add the edge $(v,w)$ in $\mathit{out}(f(v))$. Moreover, if $w$ is not in $S$, then it was already located in $\widehat{G}_{r_z}$ before the insertion,
so we delete the shortcut edge stored in $\mathit{out}(f(p))$.
If $w$ is not in $\widehat{G}_{r_z}$ after the insertion, then $(v,w)$ is a bridge in $D$ and we do nothing.
Now consider an edge $(w,v)$ entering $v$ in $G$. If $w$ is scanned, then we will process $(w,v)$ while processing the edges leaving $w$.
Otherwise, $w$ remains a descendant of $p$, so we insert the edge $(w,v)$ in $\mathit{out}(f(p))$.
Now we consider the unmarked scanned vertices $v$. Let $(v,w)$ an edge leaving $v$ in $G$.
If $w \in D_z$, we insert the edge $(v,w)$ into $\mathit{out}(f(v'))$, where $v'$ is the nearest marked ancestor of $v$ in $D$.
Otherwise, if $w \notin D(r_z)$, the edge $(v'',w)$, where $v''$ is the nearest ancestor of $v$ in $D_w$, already exists since $v$ was a descendant of $v''$ before the insertion of $(x,y)$.
Next, we consider the updates in the $L_r$ lists and the vertex ranks. While we process $S$, if we encounter a moved canonical vertex $v \in S$
that was located in
an auxiliary graph $\widehat{G}_{r}$ with $r_z \not= r$, then we delete $v$ from $L_r$.
Note that we do not need to update the ranks of the remaining vertices in lists $L_r$ with $r \not= r_z$.
To update $L_{r_z}$, we insert the moved canonical vertices of the SCCs
in $\mathcal{C} \setminus \mathcal{C}^{\ast}$, in a topological order of
$H = H^{\ast} \setminus r^{\ast}$, just after $f(p)$.
Then we traverse $L_{r_z}$ and update the ranks of the canonical vertices.
The final step is to actually insert edge $(x,y)$ in the \textsf{IncSCC} data structure of  $\widehat{G}_{r_z}$.
We do that by adding $(x,y)$ in $\mathit{out}(f(x'))$, where $x'$ is the nearest ancestor of $x$ in $D_{z}$.
If $\mathit{rank}(f(x')) > \mathit{rank}(f(y))$, then we execute the forward-search procedure of \textsf{IncSCC}.

\begin{figure}[t]
\begin{center}
\includegraphics[trim={0 0 0 7cm}, clip=true, width=1.0\textwidth]{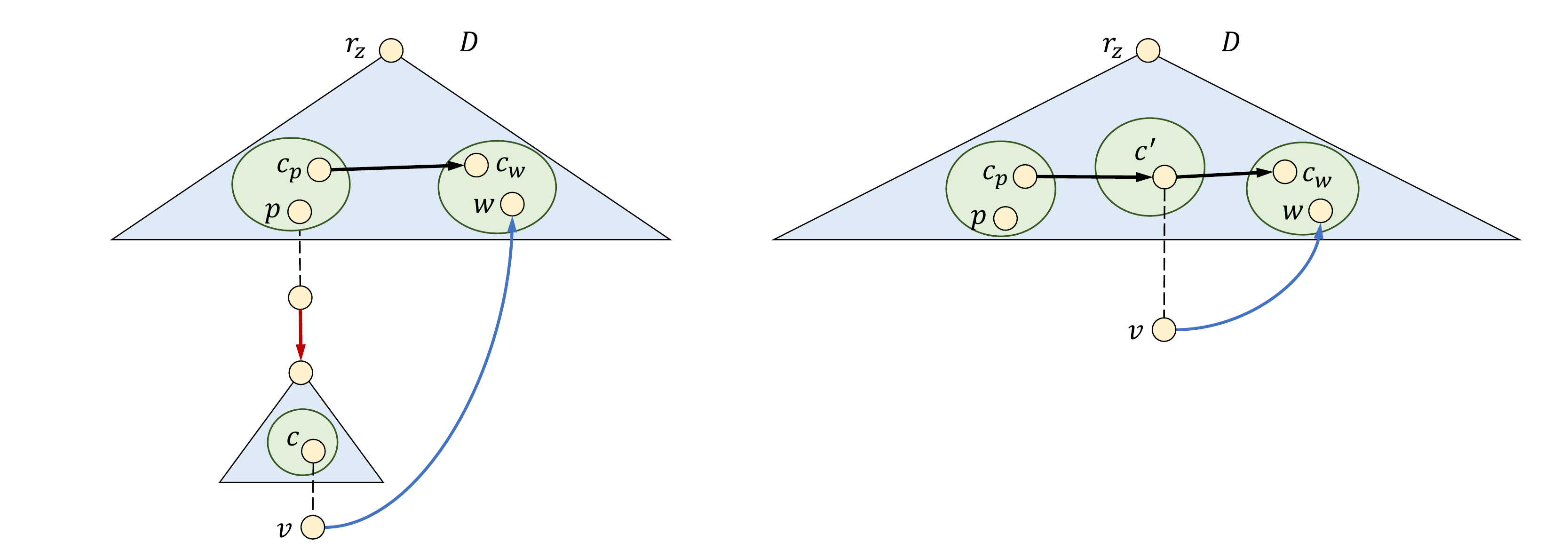}	
\caption{Before the insertion of $(x,y)$, edge $(v,w)$ corresponds to the shortcut edge $(p,w)$ of $\widehat{G}_{r_z}$, and is stored in $\mathit{out}(c_p)$.
An auxiliary component with canonical vertex $c$ is affected by the insertion of $(x,y)$ and is merged into a component with canonical vertex $c'$
($c'=c$ if the component is moved without merging with another component).
Now $c'$ becomes the canonical vertex of the nearest ancestor of $v$ in $D_z$, and edge $(v,w)$ is stored as a shortcut edge in $\mathit{out}(c')$.} 
\label{figure:ShortcutEdge}
\end{center}
\end{figure}

\begin{lemma}
\label{lemma:update-correctness}
Algorithm \textsf{SCInc2ECB} is correct.
\end{lemma}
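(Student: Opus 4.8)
The plan is to reduce the correctness of \textsf{SCInc2ECB} to the correctness of the four labels it maintains. By Lemma~\ref{lemma:labeling-algorithm}, two vertices are $2$-edge-connected if and only if they carry the same label $\langle r_x, c_x, r^R_x, c^R_x\rangle$, so it suffices to show that, after processing an insertion $(x,y)$, the labels stored by the algorithm coincide with those that the static \textsf{2ECB labeling algorithm} of Section~\ref{sec:labeling-algorithm} would assign to the current digraph. I would first dispatch the two coordinates coming from the dominator trees. Applied to $G_s$ and to $G_s^R$, the incremental dominator routine of \cite{dyndom:2012} maintains $D$ and $D^R$ correctly; since inserting an edge into a strongly connected digraph cannot place a new constraint on the $s$-to-$v$ paths, no new bridge can appear, and the only change to the bridge set is cancellation, which is detected by the affected-vertex search through Corollary~\ref{corollary:cancelled-bridge-3}. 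Hence the bridge decompositions $\mathcal{D}$ and $\mathcal{D}^R$, and with them the labels $r_x$ and $r^R_x$, are correct after \textsf{SCInsertEdge}.

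It then remains to verify the two canonical-vertex coordinates $c_x$ and $c^R_x$, which are exactly the canonical vertices supplied by the \textsf{IncSCC} data structures of the auxiliary graphs. If a bridge is locally canceled in $G_s$ or in $G_s^R$, \textsf{SCInsertEdge} rebuilds everything by calling \textsf{Initialize}, which is correct by Theorem~\ref{theorem:labeling-algorithm}; this branch therefore needs no further argument (that it is triggered only $O(n)$ times, by Lemma~\ref{lemma:locally}, is a running-time matter, not a correctness one). The heart of the proof is the branch in which no bridge is locally canceled, where I must show that \textsf{UpdateAC}$(\mathcal{D},z,x,y,S)$ brings the auxiliary components of $G_s$ into agreement with the static definition; the argument for $G_s^R$ via \textsf{UpdateAC}$(\mathcal{D}^R,z^R,y,x,S^R)$ is entirely symmetric.

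For that branch I would argue in three layers. First, \emph{localization}: by Lemma~\ref{lemma:affected-subtrees} every affected root lies on $D[r_z,r_y]$, and together with Lemmas~\ref{lemma:affected-component} and \ref{lemma:scanned-component} this shows that the only auxiliary graph that gains vertices is $\widehat{G}_{r_z}$, that every other affected auxiliary graph merely loses \emph{entire} moved components (so its surviving components are untouched), and that being ``moved'' is well defined at the granularity of a whole auxiliary component. Second, \emph{identifying the merges inside $\widehat{G}_{r_z}$}: by Lemma~\ref{lemma:scanned-subgraph}, two scanned components are strongly connected in $G[D(q)]$ if and only if they are strongly connected in the scanned subgraph $H = H^{\ast}\setminus r^{\ast}$, so the SCCs $\mathcal{C}$ of $H$ correctly group the moved components that must merge among themselves; and by Lemma~\ref{lemma:scanned-subgraph-2}, a moved component becomes strongly connected with the old component of $p$ in $\widehat{G}_{r_z}$ exactly when it reaches $r^{\ast}$ in $H^{\ast}$, i.e.\ precisely for the components of $\mathcal{C}^{\ast}$, which \textsf{UpdateAC} merges with the component of $c_p$. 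The remaining moved components $\mathcal{C}\setminus\mathcal{C}^{\ast}$ join no existing component of $\widehat{G}_{r_z}$ and are merely spliced into the topological order just after $c_p$. Third, \emph{the final edge}: inserting $(x,y)$ as an ordinary edge when $x\in D_{r_z}$, or as the shortcut edge $(c_{x'},y)$ otherwise, and then running the forward search of \textsf{IncSCC}, correctly forms whatever new auxiliary component arises, with correctness inherited from the invariants of \textsf{IncSCC}. After these updates the auxiliary components maintained for $\widehat{G}_{r_z}$ coincide with the SCCs of the statically recomputed $\widehat{G}_{r_z}$, so the updated $c_x$ labels agree with the static ones, and symmetrically for $c^R_x$.

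The main obstacle I anticipate is not the SCC merging itself but maintaining the invariant that, after the moves, the ordinary and shortcut edge set stored for $\widehat{G}_{r_z}$ is \emph{exactly} the one prescribed by the static auxiliary-graph construction. When a vertex $v$ is moved into $\widehat{G}_{r_z}$, a single original edge may have to be reclassified --- from a bridge-crossing edge that is ignored, to an ordinary or a shortcut edge, or from one shortcut edge $(p,w)$ to a different shortcut edge (Figure~\ref{figure:ShortcutEdge}). The delicate part of the proof is to show that the case analysis of \textsf{UpdateAC} on marked versus unmarked scanned vertices, and on their outgoing and incoming edges, together with the one-to-one correspondence kept between each original edge and its copy in the $\mathit{out}$ lists, updates every such edge to its correct new rôle --- creating no spurious path (which would force an incorrect merge) and losing no needed one (which would miss a merge). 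Once this edge-set invariant is established, the three layers above close the argument and, combined with the correctness of the $r_x$ and $r^R_x$ coordinates, yield the lemma via Lemma~\ref{lemma:labeling-algorithm}.
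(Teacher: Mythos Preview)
Your proposal is correct and follows essentially the same approach as the paper: both reduce the problem to showing that \textsf{UpdateAC} correctly maintains the auxiliary components and the \textsf{IncSCC} invariants, invoking the same chain of lemmas (\ref{lemma:affected-subtrees}, \ref{lemma:affected-component}, \ref{lemma:scanned-component}, \ref{lemma:scanned-subgraph}, \ref{lemma:scanned-subgraph-2}) to localize the changes, identify the merges, and justify the edge-list updates. The paper is slightly more explicit than you are about why placing the moved components in $\mathcal{C}\setminus\mathcal{C}^{\ast}$ immediately after $f(p)$ yields a valid topological order in $\widehat{G}_{r_z}$ (using that every moved vertex is a descendant of $p$ and that, by Lemma~\ref{lemma:partition-paths}, any path from an old vertex of $\widehat{G}_{r_z}$ to a moved one must pass through $p$), but this is a detail you already gesture at rather than a gap.
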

\begin{proof}
It suffices to show that \textsf{UpdateAC} correctly maintains: (i) the auxiliary components, and (ii) the \textsf{IncSCC} structure of each auxiliary graph.
The fact that (i) holds follows from Lemmata \ref{lemma:scanned-subgraph} and \ref{lemma:scanned-subgraph-2}.

To prove (ii), we need to show that the topological order and the $\mathit{out}$ lists of each auxiliary graph are updated correctly in
lines 10--12 and 16 of \textsf{UpdateAC}.

Suppose $\widehat{G}_{r}$ is an auxiliary graph that is neither affected nor contains $z$. We argue that no shortcut edge
in $\widehat{G}_{r}$ needs to be replaced.
By Lemma \ref{lemma:affected-subtrees},
we only need to consider edges $(u,v)$ such that $v \in D_r$ and $u$ is a descendant of an affected vertex.
We distinguish three cases for $r$. If $r$ is not an ancestor or a descendant of $r_z$,
then by Lemma \ref{lemma:partition-paths} no such edge $(u,v)$ exists.
Similarly, if $r$ is a descendant of $r_z$ but not on $D[r_z,r_y]$, then by Lemmata  \ref{lemma:partition-paths} and \ref{lemma:affected-subtrees},
no such edge $(u,v)$ exists.
Finally, suppose that $r$ is an ancestor of $r_z$, and let $u'$ be the nearest ancestor of $u$ in $D_r$.
Then, $u'$ remains an ancestor of $r_z$ after the insertion, so the shortcut edge in  $\widehat{G}_{r}$
that corresponds to $(u,v)$ does not change.
In all three cases the $\mathit{out}$ lists of the \textsf{IncSCC} structure of $\widehat{G}_{r}$ remain valid,
and hence so does the topological order of its auxiliary components.

It remains to consider the affected
auxiliary graphs and $\widehat{G}_{r_z}$.
Let $\widehat{G}_{r}$ be an affected auxiliary graph with $r \not= r_z$.
Then, (i) holds for $\widehat{G}_{r}$ by
Lemmata \ref{lemma:affected-component} and \ref{lemma:scanned-component}.
Also, since in $\widehat{G}_{r}$ we only remove vertices, the topological order for the
components left in $\widehat{G}_{r}$ remains valid.
Furthermore, for each scanned canonical vertex $w$, we delete from $\mathit{out}(f(v))$ any
edge $(v,w)$ if $f(v)$ is a canonical vertex in $\widehat{G}_{r}$ that is not scanned.
This implies that (ii) also holds for $\widehat{G}_{r}$.

Finally, consider $\widehat{G}_{r_z}$.
From Lemma \ref{lemma:scanned-subgraph-2},
we have that lines 6--9 of \textsf{UpdateAC} correctly identify and update the auxiliary components
that are merged with components in $\widehat{G}_{r_z}$.
Also, by Lemma \ref{lemma:scanned-subgraph},
the remaining auxiliary components that are moved to $\widehat{G}_{r_z}$
are correctly identified in line 5.
So, (i) holds for $\widehat{G}_{r_z}$.
To prove that (ii) holds as well, consider a scanned canonical vertex $v$ that is moved to $\widehat{G}_{r_z}$.
By Lemma \ref{lemma:affected-subtrees}, $v$ is a descendant of $p$, hence
$\widehat{G}_{r_z}$ contains a path from $f(p)$ to $v$.
Let $w$ be a vertex that was in  $\widehat{G}_{r_z}$ before the insertion of edge $(x,y)$.
Then, Lemma \ref{lemma:partition-paths} implies that if there is a path in $G[D(q)]$ from $w$  to $v$,
then this path contains $p$.
Since $v$ remains a canonical vertex,
$v \not= f(p)$ and there is no path from $v$ to $f(p)$.
Therefore, the moved auxiliary components are ordered correctly in lines 10--12 of \textsf{UpdateAC},
and (ii) follows.
\end{proof}

\subsection{Running time of \textsf{SCInc2ECB}}
\label{sec:analysis}

We analyze the running time of Algorithm \textsf{SCInc2ECB}.
Recall that $G$ is a strongly connected digraph with $n$ vertices
that undergoes a sequence of edge insertions. We let $m$ be the total number of edges in $G$ after all insertions ($m \ge n$). First, we bound the time spent by \textsf{Initialize}. This procedure is called twice in the beginning of the \textsf{SCInc2ECB},
and twice after each time a bridge in $G_s$ or in $G_s^R$ is locally canceled.
Then, Lemma \ref{lemma:locally} implies that such an event can happen at most  $2(n-1)$ times. Hence, there are at most
$4n$ calls to \textsf{Initialize}, and since each execution takes $O(m)$ time,
the total time spent on \textsf{Initialize} is $O(mn)$.
Similarly, the dominator trees of $G_s$ and $G_s^R$ can be updated in total $O(mn)$ time~\cite{dyndom:2012}.
We next bound the total time required to update the auxiliary components.
Consider an execution of \textsf{UpdateAC}.
Let $\nu$ and $\mu$, respectively, be the number of scanned vertices, after the insertion of edge $(x,y)$, and their adjacent edges.
The time to compute the affected subgraph $H^{\ast}$, compute the SCCs
of $H^{\ast} \setminus r^{\ast}$, and
the vertices that reach $r^{\ast}$ is $O(\nu +\mu)$.
In the same time, we can update the auxiliary components of $\widehat{G}_{r_z}$ and of the affected auxiliary graphs, their corresponding
topological orders, and the $\mathit{out}$ lists of the corresponding \textsf{IncSCC} data structures.
Since each scanned vertex $w$ is a descendant of an affected vertex, the depth of $w$ decreases by at most one. Hence,
the total time spent by \textsf{UpdateAC} for all insertions, excluding the execution of line 17, is $O(mn)$.
It remains to bound the time required by the \textsf{IncSCC} data structures to handle the edge insertions in line 17 of \textsf{UpdateAC}.
To do this, we extend the analysis from \cite{Haeupler:IncTopOrder:TALG}. We say that a vertex $v$ and an edge $e$ are \emph{related} if there is
a path that contains both $v$ and $e$ (in any order). Then, there are $O(mn)$ pairs of vertices and edges that can be related
in all \textsf{IncSCC} structures for every auxiliary graph.
We argue that each time the \textsf{IncSCC} structure traverses an edge (after the insertion in line 17 of \textsf{UpdateAC}),
the cost of this action can be charged to a newly-related vertex-edge pair.
Note that we cannot immediately apply the analysis in \cite{Haeupler:IncTopOrder:TALG}, since here we have the complication that
vertices and edges can be inserted to and removed from the \textsf{IncSCC} structures.
Consider a vertex $w$ and an edge $e=(u,w)$.
Call the pair $\langle v, e \rangle$ \emph{active} if $v$ and $e$ are in the same auxiliary graph $\widehat{G}_r$, and \emph{inactive} otherwise.
Note that since we identify
shortcut edges with their corresponding original edge, $e$ may actually appear in $\widehat{G}_r$ as an edge $(u',w)$, where $u'$ is the nearest ancestor
of $u$ in $D_r$. This fact, however, does not affect our analysis.

\begin{lemma}
\label{lemma:related-pairs}
The total number of edge traversals made during the forward searches in all \textsf{IncSCC} data structures is $O(mn)$.
\end{lemma}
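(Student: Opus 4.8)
The plan is to adapt the amortized ``related-pairs'' charging of \cite{Haeupler:IncTopOrder:TALG} to our setting, where the novelty is that vertices and edges migrate between the \textsf{IncSCC} structures. First I would observe that a forward search is executed only at line~17 of \textsf{UpdateAC}, and always inside the single auxiliary graph $\widehat{G}_{r_z}$; hence it suffices to bound, summed over all insertions, the number of edges traversed by these searches in the various graphs $\widehat{G}_{r_z}$. Following \cite{Haeupler:IncTopOrder:TALG}, I would charge each traversal of an edge $e=(u,w)$ by the forward search triggered by inserting an edge with head $y$ into $\widehat{G}_{r_z}$ to the pair $\langle y, e\rangle$: when the search traverses $e$ it has already reached $u$ from $y$, so in the updated graph there is a path $y \to \cdots \to u \to w$, i.e.\ $y$ and $e$ become \emph{related}. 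By the one-way-search invariant of \cite{Haeupler:IncTopOrder:TALG} the search only explores the region that was previously ordered before the tail of the inserted edge, so $y$ did not reach $u$ in $\widehat{G}_{r_z}$ before the insertion; thus the pair $\langle y,e\rangle$ is \emph{newly} related and, since $y$ is the head of an edge inserted into $\widehat{G}_{r_z}$, it is also \emph{active} at that moment.

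The heart of the proof is to show that every pair is charged at most once, despite the migration of vertices and edges. I would first record the monotonicity of the auxiliary-graph hierarchy: inserting an edge can only add paths from $s$, so it can only cancel bridges of $G_s$ and never create new ones; consequently the trees of the bridge decomposition $\mathcal{D}$ only merge over time. Since a vertex $v$ belongs to $\widehat{G}_{r_v}$ and an edge $e=(u,w)$ belongs to the auxiliary graph of its head, namely $\widehat{G}_{r_w}$, the pair $\langle v,e\rangle$ is active exactly when $r_v=r_w$; because the tree containing a vertex only grows by merging, once $r_v=r_w$ holds it holds forever. Hence each pair has a single, suffix-shaped active period.

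Next I would argue that relatedness is monotone throughout an active period. When a set of vertices is moved into $\widehat{G}_{r_z}$, the induced subgraph defining $\widehat{G}_{r_z}$ is enlarged, and by the auxiliary-graph construction together with Lemma~\ref{lemma:auxiliary-scc} every path present in the old graph among co-located vertices persists (possibly with tails replaced by their nearest ancestors, the shortcut edges, which we identify with their original edges by the convention preceding the statement, with the head $w$ unchanged). Therefore a pair that is related while active remains related in every later auxiliary graph that still contains both members, so it can transition from unrelated to related at most once. Combining this with the charging above, each pair $\langle y,e\rangle$ is charged at most once.

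Finally I would count the pairs: a vertex ranges over the $n$ vertices of $G$ and, by the shortcut-edge identification, an edge ranges over the $m$ original edges, so there are at most $nm$ pairs and the total number of forward-search traversals is $O(mn)$, as claimed. The main obstacle is precisely the second and third steps: certifying that the \emph{active} status and the relatedness are both monotone so that the single-charge guarantee of \cite{Haeupler:IncTopOrder:TALG} survives the insertion and deletion of vertices and edges from the \textsf{IncSCC} structures. The shortcut-edge bookkeeping (an edge appearing as $(u',w)$ rather than $(u,w)$) must be shown harmless, which follows because only the tail representative changes while the head $w$, and hence the edge's home graph $\widehat{G}_{r_w}$ and the relation it induces, do not.
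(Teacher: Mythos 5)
Your argument hinges on the claim that the trees of the bridge decomposition $\mathcal{D}$ only merge over time, so that the ``active'' status of a pair $\langle v,e\rangle$ is monotone and each pair has a single, suffix-shaped active period. That claim is false, and this is the heart of the difficulty the lemma has to overcome. It is true that bridges of $G_s$ are never created in the strongly connected setting, but tree membership is not determined by the bridge set alone: it also depends on the shape of $D$, and affected vertices move in $D$. An affected vertex $v$ becomes a child of $z=\mathit{nca}(x,y)$, and $z$ can lie strictly above $r_v$ while the bridge $(d(r_v),r_v)$ survives; then $v$ leaves $D_{r_v}$ while $r_v$ stays put, i.e., the tree \emph{splits}. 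Concretely, take vertices $s,a,r,y,v$ with edges $s\to a$, $a\to r$, $r\to y$, $r\to v$, $y\to v$, $v\to s$: the bridges are $(s,a),(a,r),(r,y)$, and $v$ shares the tree $\{r,v\}$ with $r$. Inserting $(s,y)$ cancels only $(r,y)$, makes $y$ and $v$ children of $s$, and leaves $(a,r)$ a bridge, so afterwards $v$ lies in the tree $\{s,y,v\}$ while $r$ is alone in $\{r\}$. This migration of vertices out of trees whose entering bridge survives is exactly the ``moved vertices'' phenomenon that \textsf{UpdateAC} is built to handle, so it cannot be assumed away. Once activity can toggle, your single-charge argument collapses; moreover, your relatedness-persistence step is incomplete for the complementary reason: it only treats $\widehat{G}_{r_z}$ being \emph{enlarged}, and says nothing about the auxiliary graphs that \emph{lose} vertices, in which a previously witnessing path (through a departed vertex) may no longer exist even though both members of the pair remain.

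The paper's proof does not need monotone activity; it establishes the stronger invariant that once an active pair $\langle v,e\rangle$ is related --- witnessed by a path $\pi$ in $G[D(r)]$ --- it can never later be active but unrelated in any $\widehat{G}_{r'}$ (including $r'=r$). If it were, $\pi$ would have to leave $D(r')$ and hence, by Lemma~\ref{lemma:partition-paths}, contain the bridge $(d(r'),r')$; but dominators only shrink under insertions (the new parent of any vertex lies on $D[s,d(v)]$) and bridges of $G_s$ are never created, so $(d(r'),r')$ would already have been a bridge lying on the tree path between $r$ and $v$ at the earlier time, contradicting $v\in D_r$. Some argument of this kind, anchored in the monotonicity of the dominator tree rather than of the decomposition, is what your proof is missing. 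A smaller but genuine error: you charge the traversal of $e=(u,w)$ to $\langle y,e\rangle$ where $y$ is the \emph{head} of the inserted edge, justified by ``$y$ did not reach $u$ before the insertion.'' That does not follow from the one-way-search invariant --- the search only bounds ranks by the \emph{tail's} rank, which is perfectly compatible with $u$ being already reachable from $y$, in which case $\langle y,e\rangle$ was already related and your charge is not to a newly-related pair. The correct charge, as in Haeupler et al., is to the tail of the inserted edge (the tail could not previously reach $u$ without violating the topological order, and the cyclic case is charged to a contraction).
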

\begin{proof}
To prove the bound, it suffices to show that
in all \textsf{IncSCC} data structures the total number of unrelated
$\langle v, e \rangle$ pairs that are ever created is $O(mn)$. Consider an active pair $\langle v, e \rangle$ that becomes related in
$\widehat{G}_r$. Then there is some path $\pi$ in $G[D(r)]$ that contains both $v$ and $e$. Suppose that the pair $\langle v, e \rangle$
later becomes active but unrelated in an auxiliary graph $\widehat{G}_{r'}$, where $r'$ may be vertex $r$.
Then $\pi$ does not exist in $G[D(r')]$, which implies that some vertices of $\pi$ are not descendants of $r'$.
Then, by Lemma \ref{lemma:partition-paths}, $\pi$ must contain the bridge $(d(r'),r')$. Since $\pi$ exists in $G[D(r)]$, the bridge $(d(r'),r')$ was a descendant of $r$ before some insertion,
and then became an ancestor of $v$. But this is impossible, since after an edge insertion, the new parent $d'(v)$ of $v$
is on the path $D[s,d(v)]$.
Hence, once a $\langle v, e \rangle$ pair becomes related, it can never become unrelated. The bound follows.
\end{proof}

\begin{lemma}
\label{lemma:IncSCC-updates}
The total time to update all the \textsf{IncSCC} data structures is $O(mn)$.
\end{lemma}
\begin{proof}
Updating the lists of out edges in the \textsf{IncSCC} data structures,
and inserting or deleting canonical vertices can be charged to the cost of
updating the dominator tree, and is thus $O(mn)$.
By Lemma \ref{lemma:related-pairs}, all edge insertions that do not
trigger merges of auxiliary components can be handled in $O(mn)$ time.
The number of edge insertions that trigger merges of auxiliary
components is at most $n-1$, and each such insertion can be handled
in $O(m+n)$ time, excluding unite operations. Taking into account
also the total time for all unite operations yields the lemma.
\end{proof}

\begin{theorem}
\label{theorem:SCInc2ECB-time}
The total running time of Algorithm \textsf{SCInc2ECB} for a sequence of edge insertions in a strongly connected digraph with $n$ vertices
is $O(mn)$, where $m$ is the total number of edges in $G$ after all insertions.
\end{theorem}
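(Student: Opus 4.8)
The plan is to bound the total running time of \textsf{SCInc2ECB} by accounting separately for each of its three main components: the cost of calls to \textsf{Initialize}, the cost of maintaining the dominator trees, and the cost of all executions of \textsf{UpdateAC}, which in turn splits into the bookkeeping work and the work done by the \textsf{IncSCC} forward searches in line~17. Since all of the hard combinatorial lemmas have already been established, the theorem is essentially an aggregation argument, so first I would simply assemble the bounds proved in the preceding lemmata.

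First I would handle \textsf{Initialize}. This procedure runs in $O(m)$ time per call, and it is invoked only at the start (twice, on $G_s$ and $G_s^R$) and whenever a bridge is locally canceled (again twice, one call for each flow graph). By Lemma~\ref{lemma:locally}, a locally canceled bridge ceases to be a strong bridge and so can never be canceled again; since there are at most $2(n-1)$ strong bridges in total, local cancellations occur at most $2(n-1)$ times, giving at most $4n$ calls to \textsf{Initialize} and a cumulative cost of $O(mn)$. Next I would invoke the result of~\cite{dyndom:2012} to charge the maintenance of both $D$ and $D^R$ across all $m$ insertions to $O(mn)$ as well.

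The main work is bounding \textsf{UpdateAC}. I would split this into two parts. For the bookkeeping part --- forming the scanned graph $H^{\ast}$, computing and topologically ordering its SCCs, identifying the components that reach $r^{\ast}$, moving and merging auxiliary components, updating the $L_r$ lists, vertex ranks, and the $\mathit{out}$ lists --- a single call costs $O(\nu+\mu)$, where $\nu$ and $\mu$ count the scanned vertices and their incident edges. The key amortization observation is that every scanned vertex is a descendant of an affected vertex, and by Lemma~\ref{lemma:insert-affected} an affected vertex becomes a child of $\mathit{nca}(x,y)$, so the depth of each scanned vertex in $D$ strictly decreases (by at least one). Since a vertex's depth ranges in $[1,n]$ and never increases, each vertex can be scanned at most $O(n)$ times over the whole sequence, and each scan touches its incident edges; summing gives $O(mn)$ for all bookkeeping, excluding line~17.

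The hard part, and where I would spend the most care, is line~17: the actual edge insertion into the \textsf{IncSCC} structure and its forward search. This is exactly the content of Lemmata~\ref{lemma:related-pairs} and~\ref{lemma:IncSCC-updates}, which I would simply cite. The subtlety --- already flagged in the text --- is that the standard one-way-search analysis of~\cite{Haeupler:IncTopOrder:TALG} assumes a fixed vertex and edge set, whereas here vertices and edges migrate between auxiliary graphs as the bridge decomposition changes. Lemma~\ref{lemma:related-pairs} resolves this by showing that a vertex--edge pair, once it becomes related within an auxiliary graph, can never become unrelated again (a pair becoming unrelated would force a bridge to move from being a descendant to an ancestor of $v$, contradicting that the new parent of $v$ lies on $D[s,d(v)]$); hence each of the $O(mn)$ vertex--edge pairs is charged at most once, bounding all forward-search edge traversals by $O(mn)$. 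Combining with Lemma~\ref{lemma:IncSCC-updates}, which also absorbs the $O(m+n)$ cost of the at most $n-1$ insertions that trigger merges together with all \textit{unite} operations, every component is $O(mn)$ and the theorem follows by summation.
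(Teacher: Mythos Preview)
Your proposal is correct and follows essentially the same approach as the paper's own analysis in Section~\ref{sec:analysis}: bound \textsf{Initialize} via Lemma~\ref{lemma:locally} and the $2(n-1)$ strong-bridge bound, cite~\cite{dyndom:2012} for the dominator trees, amortize the bookkeeping in \textsf{UpdateAC} via the depth-decrease argument for scanned vertices, and invoke Lemmata~\ref{lemma:related-pairs} and~\ref{lemma:IncSCC-updates} for line~17. Your phrasing that the depth of each scanned vertex \emph{strictly decreases by at least one} is in fact the correct justification for the amortization.
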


\section{Incremental algorithm for $2$-edge-connected blocks in general graphs}
\label{sec:incremental-general}

In this section, we show how to extend our 
algorithm to general digraphs that are not necessarily strongly connected.
Let $G$ be input digraph that undergoes edge insertions.
We will design a data structure that maintains the $2$-edge-connected blocks of $G$ and can report if any two query vertices are $2$-edge-connected.
We use a two-level data structure. The top level maintains the strongly connected components of $G$ with the use of a \textsf{IncSCC} data structure of Section \ref{sec:incremental-scc}.
We refer to this data structure as \textsf{TopIncSCC}

If the insertion of an edge creates a new component $C$, algorithm \textsf{TopIncSCC} finds the vertices in the new component and updates the condensation of $G$.
Let $C_1, C_2, \ldots, C_j$ be the components that were merged into $C$ after the insertion of an edge. We choose the canonical (start) vertex of $C$ to be the start vertex of the
largest component $C_i$. We refer to this component $C_i$ as the \emph{principal component of $C$}.

\begin{algorithm}
\LinesNumbered
\DontPrintSemicolon

Let $e=(x,y)$.\;
\eIf{$x$ and $y$ are in the same component $C$}
    {
	  Execute \textsf{SCInsertEdge}$(G[C],e)$.
    }
	{					
	  Insert $e$ into \textsf{TopIncSCC}.\;
      \If{a new component $C$ is created}
      {
        Let $s$ be the designated start vertex of the largest component merged into $C$.\;
        Execute \textsf{Initialize}$(G[C],s)$ and \textsf{Initialize}$(G^R[C],s)$.
      }
    }
\caption{\textsf{InsertEdge}$(G,e)$}
\end{algorithm}

Now we bound the running time of \textsf{Inc2ECB}, excluding the time required by \textsf{InsertEdge}.
The total time required to maintain the \textsf{TopIncSCC} structure is $O(mn)$.
The total number of new components that can be found by \textsf{TopIncSCC}
is at most $n-1$. When such an event occurs, algorithm \textsf{Inc2ECB} makes two calls to \textsf{Initialize}, and
each such call takes $O(m)$ time to initialize the bottom-level structure, i.e., the \textsf{SCInc2ECB} data structure for the new component.
Hence, the total time required by \textsf{Inc2ECB}, excluding the calls to \textsf{InsertEdge}, is $O(mn)$.

Next we bound the time spent on calls to \textsf{InsertEdge}.
First, we bound the total time required to maintain all dominator trees, for each component created by the main algorithm \textsf{Inc2ECB}.
Recall from Section \ref{sec:affected} that a vertex $v$ is \emph{scanned} if it is a descendant of an affected vertex.
Each scanned vertex $v$ incurs a cost of $O(\mathrm{degree}(v))$, thus we need to bound the number of times a vertex can be scanned.
Each time a vertex is scanned, its depth in the dominator tree decreases by at least one.
Let $C$ be the current component containing $v$, and let $C'$ be a new component that $C$ is merged into following an edge insertion.
If $C$ is the principal subcomponent of $C'$ then the depth of $v$ may only decrease. Otherwise, the depth of $v$ may increase.

We define the \emph{effective depth} of $v$ 
after merging $C$ into $C'$ to be zero, if $C$ is the principal subcomponent of $C'$, and equal to the depth of $v$ in the dominator tree of $G_s[C']$ otherwise.
To bound the total amount of work needed to maintain the dominator trees of all components, we compute the sum of the effective depths of $v$ in all the components that $v$ is contained throughout the execution of algorithm \textsf{Inc2ECB}. We refer to this sum as the \emph{total effective depth of $v$}, denoted by $\mathit{ted}(v)$.

\begin{lemma}
\label{lemma:effective-depth}
The total effective depth of any vertex $v$ is $O(n)$.
\end{lemma}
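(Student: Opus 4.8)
The plan is to track the sequence of strongly connected components that contain $v$ over the whole run of \textsf{Inc2ECB}, and to charge the effective depth incurred at each merge against the size of the component that $v$ lands in. First I would observe that a merge contributes nothing to $\mathit{ted}(v)$ whenever $v$'s current component is the principal subcomponent of the new component, since by definition the effective depth is set to zero in that case. Hence it suffices to bound the sum of effective depths taken over only the merges in which $v$'s component is \emph{not} principal.

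The key structural fact I would establish is a doubling property that follows from choosing the principal subcomponent to be the largest one. Suppose $v$'s component $C$ is merged into a new component $C'$ while $C$ is not principal, and let $C_i$ be the principal subcomponent of $C'$. Since $C_i$ is the largest of the merged components, $|C_i| \ge |C|$, and since $C \neq C_i$ both contribute to $C'$; therefore $|C'| \ge |C| + |C_i| \ge 2|C|$. Thus each non-principal merge at least doubles the size of the component containing $v$.

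Next I would bound the effective depth at each such merge by the size of the new component: the dominator tree of $G_s[C']$ has exactly $|C'|$ vertices, so the depth of $v$ in it is at most $|C'|-1$. Let $s_1 < s_2 < \cdots < s_k$ denote the sizes $|C'|$ at the successive non-principal merges involving $v$. Between two consecutive non-principal merges the component of $v$ can only grow (through principal merges), so combined with the doubling property we obtain $s_{i+1} \ge 2 s_i$; that is, the sizes grow at least geometrically. Since every $s_i \le n$, we get $s_i \le n/2^{k-i}$, and summing,
\[
\mathit{ted}(v) \;\le\; \sum_{i=1}^{k} s_i \;\le\; n \sum_{j=0}^{k-1} 2^{-j} \;<\; 2n \;=\; O(n).
\]

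The main obstacle is to avoid an extra logarithmic factor. A naive argument only notes that there are $O(\log n)$ non-principal merges (each doubling the component size) and bounds each contribution crudely by $n$, yielding $O(n\log n)$. The crucial point is that the component sizes at successive non-principal merges themselves grow geometrically, so the contributions form a geometric series dominated by its largest term of size $O(n)$; it is precisely this interplay between the union-by-size choice of the principal subcomponent and the $|C'|$ bound on the depth that collapses the sum to $O(n)$.
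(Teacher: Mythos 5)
Your proof is correct and follows essentially the same route as the paper: the union-by-size choice of principal subcomponent forces the component of $v$ to at least double at every non-principal merge, so the effective depths are dominated by a geometric series summing to at most $2n$. If anything, your bookkeeping is slightly more careful than the paper's, since you bound each effective depth by the size of the \emph{new} component $C'$ (matching the definition, which measures depth in the dominator tree of $G_s[C']$) and verify that these sizes themselves grow geometrically.
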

\begin{proof}
Suppose that the component of $v$ was merged $k$ times as a non-principal component. Let $n_i$ be the number of vertices in the $i$th
component that contains $v$ that was later merged as a non-principal component. The effective depth of $v$ in this component is less than $n_i$.
When a non-principal component is merged, the resulting component has at least $2|C|$ vertices.
Thus, the total effective depth of $v$ is $\mathit{ted}(v) \le \sum{n_i}$, where $n_i \le n$ and $n_{i+1} \ge 2 n_{i}$.
To maximize the sum, set $n_k=n$ and $n_{i} = n_{i+1}/2$, so we have $\mathit{ted}(v) \le n + n/2 + \dots + 1 = 2n$.
\end{proof}

By Lemma \ref{lemma:effective-depth}, each vertex $v$ incurs a total cost of $O(n\mathrm{degree}(v))$ while maintaining the dominator trees for all the components
of the \textsf{TopIncSCC} structure. Hence, the time spent on updating these dominator trees is $O(mn)$.
Next, we analyze the total time spent on \textsf{Initialize} through calls made by \textsf{InsertEdge}.
Algorithm \textsf{InsertEdge} calls \textsf{Initialize} twice for each bridge that is locally canceled, hence at most twice the total number of strong bridges that
appear throughout the execution of \textsf{Inc2ECB}.

\begin{lemma}
\label{lemma:strong-bridges-count}
Throughout the execution of Algorithm \textsf{Inc2ECB} at most $2(n-1)$ strong bridges can appear.
\end{lemma}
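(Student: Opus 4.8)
The plan is to charge every appearance of a strong bridge to the SCC merges performed by \textsf{TopIncSCC}. First I would observe that strong bridges can appear \emph{only} at a merge: as long as the current SCC does not change, inserting an edge can never turn a non-strong-bridge into a strong bridge, because adding edges can only improve strong connectivity after the removal of any fixed edge, so within a fixed SCC the insertions can only \emph{cancel} strong bridges. Consequently a strong bridge $e=(a,b)$ can appear only at the moment $a$ and $b$ first become strongly connected, i.e.\ at the merge that puts them in a common SCC. The same monotonicity shows that a strong bridge appearing at a merge must have its two endpoints in two \emph{different} components among those being merged: if $a$ and $b$ were already strongly connected, then either $e$ was already a strong bridge, or by the argument above it can never become one. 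Thus it suffices to bound, for each merge, the number of strong bridges whose endpoints lie in distinct merged components, and then to sum these counts over all merges.

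Second, I would prove the key estimate: when $j$ components $C_1,\dots,C_j$ merge into a new SCC $C$ with start vertex $s$, at most $2(j-1)$ new strong bridges appear. By Property~\ref{property:strong-bridge}, every new strong bridge is a bridge of the flow graph $G_s[C]$ (a \emph{forward} bridge) or of $G_s^R[C]$ (a \emph{reverse} bridge), so it is enough to bound each type by $j-1$. Consider the forward bridges. A new forward bridge is a dominator-tree edge $(d(r),r)$ whose endpoints lie in different old components; I associate it with the old component $S$ containing its lower endpoint $r$, noting that $d(r)\notin S$. I claim that no two new forward bridges are associated with the same old component, and that none is associated with the component $S_0$ containing $s$. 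For the first claim, if $(d(r_1),r_1)$ and $(d(r_2),r_2)$ both entered the same strongly connected old component $S$, then an $S$-internal path between $r_1$ and $r_2$ would, by Lemma~\ref{lemma:partition-paths}, be forced to traverse one of the two bridges and hence visit a vertex outside $S$, a contradiction. For the second claim, $s$ reaches every vertex of the strongly connected component $S_0$ from within $S_0$, so by Lemma~\ref{lemma:partition-paths} no vertex of $S_0$ can be the lower endpoint of a forward bridge entered from outside $S_0$. Hence there are at most $j-1$ new forward bridges; by the symmetric argument in $G_s^R$ (whose start vertex is also $s\in S_0$) there are at most $j-1$ new reverse bridges, giving at most $2(j-1)$ new strong bridges at this merge.

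Finally I would sum over all merges. Each merge of $j$ components decreases the number of SCCs by exactly $j-1$; since the execution begins with at most $n$ SCCs and never creates new ones, the total decrease is at most $n-1$, so $\sum_M (j_M-1)\le n-1$. Therefore the total number of strong bridges that ever appear is at most $\sum_M 2(j_M-1)\le 2(n-1)$. The hard part is the per-merge bound of the second step, and within it the uniqueness statement that at most one new forward bridge enters each old component; I expect to dispatch it through the dominator-tree nesting argument sketched above, treating separately the case where the two lower endpoints $r_1,r_2$ are unrelated in $D$ and the case where one is a proper descendant of the other, both of which reduce to a forbidden $S$-internal path through a bridge via Lemma~\ref{lemma:partition-paths}.
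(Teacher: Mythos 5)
Your proof is correct, and its overall charging scheme coincides with the paper's: new strong bridges can only appear when SCCs merge, a new strong bridge must join two distinct merged components, each merge of $j$ components contributes at most $2(j-1)$ new strong bridges, and summing $j_M-1$ over all merges gives at most $n-1$. Where you diverge is in how the per-merge bound $2(j-1)$ is established. The paper contracts the old components $C_1,\dots,C_j$ to single vertices, shows that each new strong bridge of $G[C]$ maps injectively to a strong bridge of the resulting $j$-vertex strongly connected multigraph $H$ (the injectivity coming from the observation that a second edge between the same pair of components would give a detour), and then invokes the generic $2(j-1)$ bound on strong bridges of a strongly connected graph on $j$ vertices. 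You instead work directly in the dominator trees of $G_s[C]$ and $G_s^R[C]$: you associate each new forward bridge $(d(r),r)$ with the old component containing $r$, and use Lemma~\ref{lemma:partition-paths} together with the strong connectivity of each old component to show this association is injective and misses the component of $s$. Both arguments are sound; the paper's is shorter because it reuses the already-established global bound via Property~\ref{property:strong-bridge}, while yours is more self-contained and makes explicit \emph{which} components can receive a new bridge, at the cost of the two-case descendant analysis you sketch at the end (which does go through exactly as you describe, since a bridge entering $r$ forces any path from a non-descendant of $r$ to pass through $d(r)$, which lies outside the old component).
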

\begin{proof}
When Algorithm \textsf{Inc2ECB} merges $k$ strongly connected components $C_1, C_2, \dots, C_k$ into a new strongly connected component $C$, then the new strong bridges that appear in $G[C]$ connect two different components $C_i$. This is because an edge $(u,v)$ of a subgraph $G[C_i]$ cannot be a strong bridge in $G[C]$ if it was not a strong bridge in $G[C_i]$.

Let $H$ be the multigraph that results from $G[C]$ after contracting each component $C_i$ into a single vertex. We claim that each new strong bridge of $(u,v)$ of $G[C]$ corresponds to a strong bridge $(C_i,C_j)$ in $H$, where $u \in C_i$ and $v \in C_j$. Multigraph $H$ contains $(C_i,C_j)$ by construction, and there is a unique edge $(C_i,C_j)$ in $H$. Indeed, if there was another edge $(u',v')$ in $G[C]$ with  $u' \in C_i$ and $v' \in C_j$, then $(u,v)$ could not be a strong bridge since $G[C]$ would have a path from $u$ to $v$, formed by a path from $u$ to $u'$ in $G[C_i]$, edge $(u',v')$, and a path from $v'$ to $v$ in $G[C_j]$, which avoids $(u,v)$.
If $(C_i,C_j)$ is not a strong bridge in $H$, then there is a path $P$ in $H$ from $C_i$ to $C_j$ that avoids $(C_i,C_j)$.
Hence, $G[C]$ has at most $2(k-1)$ new strong bridges.

Thus, we charge at most two new strong bridges each time a component is merged into a larger component. Since there are at most $n-1$ such merges, the total number of  strong bridges that can appear during the sequence of insertions is at most $2(n-1)$.
\end{proof}

Hence, by Lemma \ref{lemma:strong-bridges-count}, the total time spend on calls \textsf{Initialize} via \textsf{InsertEdge} is $O(mn)$.
Finally, we need to consider the time required to maintain the bottom \textsf{BottomIncSCC} structures. Unfortunately, it is no longer true that
an active vertex-edge pair that becomes related in a \textsf{BottomIncSCC} structure, remains related throughout the execution of the algorithm.
However, we can bound the number of times such a pair can change status from active and unrelated to active and related.

\begin{lemma}
\label{lemma:unrelated}
Let $v$ be a vertex, and let $e$ be an edge. The pair $\langle v, e \rangle$ can change status from active and unrelated to active and related at most $\log{n}$ times.
\end{lemma}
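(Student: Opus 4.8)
The plan is to localize the single-component persistence result of Lemma~\ref{lemma:related-pairs} and combine it with the doubling bound underlying Lemma~\ref{lemma:effective-depth}. The argument of Lemma~\ref{lemma:related-pairs} really only uses two facts about a \textsf{BottomIncSCC} instance: that the start vertex $s$ of the component stays fixed, and that after every update the set of dominator-tree ancestors of a vertex $v$ can only shrink (equivalently, the new parent of $v$ always lies on the old path $D[s,d(v)]$). Under these conditions an active pair $\langle v,e\rangle$ that has become related cannot become active and unrelated again, since the bridge $(d(r'),r')$ witnessing a lost connecting path would have to turn from a descendant of an ancestor of $v$ into an ancestor of $v$. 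So the first step is to identify a maximal time interval $I$ for $v$ during which these two conditions persist, and to show that within $I$ the transition from active-and-unrelated to active-and-related occurs at most once (even allowing the pair to drift through inactive configurations in between).

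The interval $I$ should be a maximal stretch during which $v$'s component is never merged as a non-principal component. Within such a stretch, $v$'s component either stays fixed or grows by absorbing smaller components while remaining the principal one, so the start vertex $s$ is preserved; between consecutive \textsf{Initialize} calls the incremental dominator updates keep the ancestors of $v$ shrinking by Lemma~\ref{lemma:insert-affected}. The point that needs care is an \textsf{Initialize} triggered by a principal merge of $v$'s component $C$ into $C'$: here I would argue that passing from $G_s[C]$ to $G_s[C']$ only adds vertices and edges, so every old $s$-to-$v$ path survives and $\mathit{Dom}(v)$ can only lose elements. In particular a vertex of another absorbed component cannot dominate $v$, since some old path inside $C$ avoids it, and an old vertex $v'\in C$ dominates $v$ in $G_s[C']$ only if it already did in $G_s[C]$. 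Hence $v$ acquires no new ancestor across a principal merge, the two conditions survive the from-scratch recomputation, and Lemma~\ref{lemma:related-pairs} applies throughout $I$.

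It then follows that the status of $\langle v,e\rangle$ can be reset from related back to unrelated only at the boundary between two such intervals, i.e.\ only when $v$'s component is merged as a non-principal component, for only then is a fresh start vertex chosen and the depth of $v$ allowed to grow. When the pair first becomes related its common component has size at least two, and each subsequent non-principal merge at least doubles the size of $v$'s component, exactly as in the proof of Lemma~\ref{lemma:effective-depth}, while the size never exceeds $n$; hence there are at most $\log n$ such resets. Since two consecutive active-and-unrelated to active-and-related transitions must be separated by a reset, the pair can make this transition at most $\log n$ times. The main obstacle is the middle step: establishing that a principal merge is dominator-monotone for the old vertices, since this is precisely what allows the contradiction of Lemma~\ref{lemma:related-pairs} to survive the recomputation performed by \textsf{Initialize}.
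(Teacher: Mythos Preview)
Your proposal is correct and follows essentially the same approach as the paper: invoke Lemma~\ref{lemma:related-pairs} to show that an active related pair stays related until $v$'s component is merged as a non-principal subcomponent, and then bound the number of such merges by $\log n$ via doubling. Your explicit verification that a principal merge is dominator-monotone for the old vertices (so that the contradiction in Lemma~\ref{lemma:related-pairs} survives the from-scratch \textsf{Initialize}) fills in a detail the paper leaves implicit, but the route is the same.
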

\begin{proof}
Suppose that $\langle v, e \rangle$ becomes active in the \textsf{BottomIncSCC} structure of some strongly connected component $G[C]$ of the top structure.
From Lemma \ref{lemma:related-pairs}, we have that in order for $\langle v, e \rangle$ to become active but unrelated, $C$ must be merged to another component as
a non-principal component. This can happen at most $\log{n}$ times, so the bound follows.
\end{proof}

By plugging the above bound in the proof of Lemma \ref{lemma:IncSCC-updates}, we get a total bound of $O(mn \log{n})$ for maintaining all \textsf{BottomIncSCC} structures.
We can improve this by employing a more advanced \textsf{BottomIncSCC} structure.  Namely, we can use the two-way search algorithm of Bender et al.~\cite{Bender:IncCycleDetection:TALG}.
The algorithm maintains for each canonical vertex $v$ a level $k(v)$. The levels are in a pseudo-topological order, i.e.,
if $(u,v)$ is an edge (original or formed by some contractions), then $k(f(u)) \le k(f(v))$.
The condensation of $G$ is maintained by storing for each canonical vertex $v$ a list $\mathit{out}(v)$
of the edges $(u,w)$ such that $f(u)=v$, to facilitate forward searches, and also a list $\mathit{in}(v)$ containing vertices $w$
such that $(f(w),v)$ is a loop or an edge of the current condensation with $k(f(w))=k(v)$, to facilitate backward searches.
It sets a parameter $\Delta = \min \{ m^{1/2}, n^{2/3} \}$ in order to bound the time spent during a backward search.
To insert an edge $(x,y)$, the algorithm computes $u=f(x)$ and $w=f(y)$. If $u=w$ or $k(u)<k(w)$ then the algorithm terminates.
Otherwise, it performs a backward search from $u$, visiting only canonical vertices at the same level as $u$.
During this search, loops or duplicate edges are not traversed. The backward search ends as soon as it
traverses $\Delta$ edges or runs out of edges traverse.
If the backward search traverses fewer than $\Delta$ edges and $k(w)=k(u)$ then the forward search is not executed.
Otherwise, if the backward search traverses $\Delta$ edges, or it traverses fewer than $\Delta$ edges but $k(w)<k(u)$,
then the algorithm executes a forward search from $w$. The forward search visits only vertices whose level increases.
Finally, if a cycle is detect during the backward or the forward search, the new component is formed.

We use the above algorithm to implement the \textsf{BottomIncSCC} data structures. As before, we augment
the $\mathit{out}$ and $\mathit{in}$ lists so that they also store original edges that become shortcut edges.
In the following, let $z=nca(x,y)$.
All occurrences of an original edge $(u,v)$, in the list of original edges leaving $u$, the list of original edges entering $v$,
and possibly in $\mathit{out}(f(u'))$ and $\mathit{in}(f(v))$, where $u'$ is the nearest ancestor of $u$
in $D_z$, are linked so that we can locate the shortcut edges in constant time.
To initialize a \textsf{BottomIncSCC} structure for an auxiliary graph $\widehat{G}_r$ (line 13 in procedure \textsf{Initialize}),
we compute the auxiliary components of $\widehat{G}_r$, and set the level of each canonical vertex $v$ to be one.
Then we create the $\mathit{out}$ and $\mathit{in}$ lists for the condensation of $\widehat{G_r}$.
In procedure \textsf{UpdateAC}, we do not change the levels of the vertices that are not moved. For any vertex that is moved to
$\widehat{G}_{r_z}$, we set its level to be equal to $k(f(p))$. Then we update the $\mathit{out}$ and $\mathit{in}$ lists (line 16),
as in Section \ref{sec:incremental-strongly-connected}. Finally, the new edge is added in $\widehat{G}_{r_z}$ (line 17), and we
execute the two-way search algorithm of Bender et al.
We refer to the implementation of algorithms \textsf{SCInc2ECB} and \textsf{Inc2ECB}, using the above \textsf{BottomIncSCC} data structure, as \textsf{SCInc2ECB-B} and \textsf{Inc2ECB-B}, respectively.

\begin{lemma}
\label{lemma:inc2ecb}
Algorithms \textsf{SCInc2ECB-B} and \textsf{Inc2ECB-B} are correct.
\end{lemma}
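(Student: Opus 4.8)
I need to show that swapping in the two-way search data structure of Bender et al. for the bottom-level incremental SCC structure does not alter the correctness already established for SCInc2ECB and Inc2ECB.

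Let me think about what's being claimed here. The algorithms SCInc2ECB and Inc2ECB were already proven correct (Lemma update-correctness for SCInc2ECB). The only change in the "-B" variants is the internal implementation of BottomIncSCC — replacing the one-way search topological-order structure with the two-way search of Bender et al. So the correctness argument should be: the external behavior (which SCCs get merged, when, and the topological/level ordering maintained) is preserved.

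The plan is to reduce this to the correctness already proved. First I would observe that the high-level logic of SCInc2ECB and Inc2ECB — the detection of affected and scanned vertices, the identification of moved auxiliary components via Lemmata scanned-subgraph and scanned-subgraph-2, the merging decisions, and the maintenance of labels — is entirely independent of which incremental-SCC engine sits underneath. What the engine must provide is exactly two services: a find operation reporting the SCC (auxiliary component) of a vertex, and a correct response to edge insertions that detects when a new SCC forms. Both the one-way structure and the Bender et al. two-way structure provide these same services with the same semantics; only the pseudo-topological invariant differs (a level function k rather than an explicit rank, with the relaxed condition k(f(u)) ≤ k(f(v)) for every edge). So the plan is to verify that every place where UpdateAC and the insertion procedures interact with BottomIncSCC, the required invariant is correctly established and maintained under the new implementation.

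The key steps, in order: (1) State the invariant maintained by the augmented Bender et al. structure — the level function k is a valid pseudo-topological order of the current condensation of each auxiliary graph, and find returns the correct auxiliary component. (2) Verify initialization: in Initialize, setting every canonical vertex's level to one is consistent because the auxiliary components are computed explicitly and no ordering among them is needed until edges are inserted — but I must check the condensation's out/in lists are set up so the first genuine insertion restores a valid level order. (3) Verify the UpdateAC modifications: when a vertex is moved to $\widehat{G}_{r_z}$ and we set its level to $k(f(p))$, I must confirm this is consistent with the pseudo-topological invariant, appealing to Lemma affected-subtrees (moved vertices are descendants of $p$, so $f(p)$ precedes them) and Lemma partition-paths (any path from an old vertex of $\widehat{G}_{r_z}$ into a moved vertex must pass through $p$), exactly mirroring the ordering argument already made in lines 10–12 of the proof of Lemma update-correctness. (4) Verify that the augmented out and in lists, with original edges linked to their shortcut copies, correctly support both the forward and backward searches, so that the two-way search detects precisely the cycles (new SCCs) it should. (5) Conclude that since the engine delivers identical find/merge semantics and a valid (pseudo-)topological order at every step, the correctness of SCInc2ECB-B follows from Lemma update-correctness, and that of Inc2ECB-B follows by the same two-level argument used for Inc2ECB.

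The main obstacle I anticipate is step (3) combined with the backward-search direction of the two-way structure. The one-way structure only ever searched forward, so its interaction with moved vertices was relatively clean; the Bender et al. structure additionally performs backward searches guided by the in lists and the level equality k(f(w)) = k(v). I must ensure that after moved vertices are spliced into $\widehat{G}_{r_z}$ at level $k(f(p))$, the in lists correctly reflect exactly those predecessors at the same level, so a backward search from $f(x')$ explores the right frontier. The subtlety is that a moved vertex's predecessors may themselves have been relocated or had their shortcut edges rewritten during the same UpdateAC call; I would argue that because every moved auxiliary component is reachable from $f(p)$ and cannot reach $f(p)$ (Lemma partition-paths, as in the proof of Lemma update-correctness), assigning them all the single level $k(f(p))$ keeps the in lists locally consistent, and the subsequent two-way search starting from the inserted edge correctly restores a strict level order and contracts any newly created cycle. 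Once this consistency is confirmed, the remaining pieces are routine appeals to the already-proven correctness.
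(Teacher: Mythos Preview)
Your proposal is correct and follows essentially the same approach as the paper: isolate the changes to \textsf{Initialize} and \textsf{UpdateAC}, and verify that the level function $k$ remains a valid pseudo-topological order after each. The paper's proof checks exactly the two cases you single out in step~(3)---for a moved canonical vertex $v$, an incoming edge $(u,v)$ has $u$ either equal to $f(p)$ or itself moved (so $k(u)=k(v)$), and an outgoing edge $(v,w)$ has $w$ already reachable from $f(p)$ before the update (so $k(f(p))\le k(w)$)---which is precisely the content of your appeal to Lemmata~\ref{lemma:affected-subtrees} and~\ref{lemma:partition-paths}. Your additional concern about the backward-search $\mathit{in}$ lists is a reasonable point of care, but once the pseudo-topological invariant and the $\mathit{out}/\mathit{in}$ lists are shown consistent, the correctness of the two-way search itself is inherited from Bender et~al.\ and need not be re-argued; the paper does not elaborate further on this point either. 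One minor remark: in your step~(2), setting all levels to~$1$ is already a valid pseudo-topological order (the condition $k(f(u))\le k(f(v))$ holds trivially), so no ``restoration'' by the first insertion is needed.
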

\begin{proof}
The only parts of the algorithms that are affected are the subroutines \textsf{Initialize} and \textsf{UpdateAC}.
In the former, we set the level of each canonical vertex $v$ to be $k(v)=1$. This is a valid initialization
since the levels are in pseudo-topological order. Moreover, lists $\mathit{out}$ and $\mathit{in}$
contain all the edges in the condensation of an auxiliary graph.
Now consider \textsf{UpdateAC}. Removing an auxiliary component and its canonical vertex from an auxiliary graph
and the \textsf{BottomIncSCC} structure, respectively, does not affect the fact that the levels are in
pseudo-topological order. Consider now the  updates in the \textsf{BottomIncSCC} structure of $\widehat{G}_{r_z}$.
In lines 7--9, we merge the auxiliary component of $f(p)$ with some components from other auxiliary graphs.
When we do that, we maintain $f(p)$ as the canonical vertex of the formed component, so its level does not change.
This means that the levels remain in pseudo-topological order for all canonical vertices that are already in $\widehat{G}_{r_z}$
before the insertion of $(x,y)$.
Consider now the insertion of canonical vertices in the \textsf{BottomIncSCC} structure
in lines 10--12. The level of all these vertices is set equal to $k(f(p))$.
Let $v$ be such a canonical vertex. Let $(u,v)$ be an edge entering $v$ from another canonical vertex of  $\widehat{G}_{r_z}$.
Then $u$ is either $f(p)$ or a vertex that was moved to $\widehat{G}_{r_z}$ together with $v$.
In both cases $k(u)=k(v)$. Now let $(v,w)$ be an edge out of $v$ entering another canonical vertex of  $\widehat{G}_{r_z}$.
Before the update,  $\widehat{G}_{r_z}$ contained a path from $f(p)$ to $w$, hence $k(f(p)) \le k(w)$.
Thus, the levels remain in pseudo-topological order.
\end{proof}

To prove the desired $O(mn)$ bound, we extend the analysis of \cite{Bender:IncCycleDetection:TALG}.
The analysis requires some additional definitions.
An original edge $(u, v)$ is \emph{live} if $u$ and $v$ are in
different components and \emph{dead} otherwise. A newly inserted edge that forms a
new component is dead.
The level of an edge $(u, v)$ is $k(f(u))$ if the edge is live, or equal to its highest level when it was live
if the edge is dead. If $(u,v)$ was never live, then it has no
level.
A component
is live if it corresponds to a vertex of the current condensation and dead otherwise.
A live component has level equal to the level of its canonical vertex. The level of a dead
component is its highest level when it was live. A vertex $w$ and a component $C$ are related
if there is a path that contains $w$ and a vertex in $C$.
Also, the number of components, live and dead, is at most $2n - 1$.

\begin{lemma}
\label{lemma:inc2ecb}
Algorithms \textsf{SCInc2ECB-B} and \textsf{Inc2ECB-B} run in $O(mn)$ time.
\end{lemma}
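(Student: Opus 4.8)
The plan is to charge the running time to several disjoint components and show each is $O(mn)$, leaving only the two-way searches of the \textsf{BottomIncSCC} structures to analyze anew. Everything else follows from results already established: the \textsf{TopIncSCC} structure is maintained in $O(mn)$ time; the calls to \textsf{Initialize} cost $O(mn)$ in total, because by Lemma~\ref{lemma:strong-bridges-count} they are triggered $O(n)$ times and each costs $O(m)$; all dominator trees are maintained in $O(mn)$ time by the effective-depth bound of Lemma~\ref{lemma:effective-depth}; and the maintenance of the $\mathit{out}$ and $\mathit{in}$ lists together with the insertion and deletion of canonical vertices is charged to the dominator-tree maintenance, exactly as in Lemma~\ref{lemma:IncSCC-updates}, hence also $O(mn)$. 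It thus remains to bound the backward and forward searches of the two-way algorithm of Bender et al.~\cite{Bender:IncCycleDetection:TALG}, summed over all \textsf{BottomIncSCC} structures.

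For the backward searches I would use that each top-level insertion produces $O(1)$ insertions in line~17 of \textsf{UpdateAC}, and each such insertion launches at most one backward search that stops after traversing at most $\Delta$ arcs. Hence the total backward-search cost is $O(m\Delta)$, and since $\Delta=\min\{m^{1/2},n^{2/3}\}\le n$ this is $O(mn)$.

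The core of the proof is the forward searches, and I would extend the amortized analysis of Bender et al. Since a forward search scans the out-edges of each vertex whose level it raises, the total forward work is $O(mn)$ as soon as one shows that every arc is scanned at most $O(n)$ times, i.e.\ that each vertex's level is raised $O(n)$ times over the whole execution. In the static setting this is immediate because levels only increase and lie in $[1,n]$; in our setting it fails, because moving a vertex into $\widehat{G}_{r_z}$ resets its level to $k(f(p))$ and a re-initialization resets all levels to $1$. This is precisely the phenomenon that forced the extra $\log n$ factor of Lemma~\ref{lemma:unrelated} in the one-way analysis.

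To remove that factor I would charge each level-raise of a vertex $x$ not to a new level but to the first-time creation of a related (vertex, component) pair $(x,C)$, in the graph-theoretic sense defined just above the lemma (a path through $x$ and a vertex of $C$). The decisive point is that this relation is monotone under every operation the algorithm performs: an edge insertion only adds paths, moving $x$ into $\widehat{G}_{r_z}$ only enlarges its induced subgraph from $G[D(r)]$ to $G[D(r_z)]$ with $r_z$ an ancestor of $r$ (Lemmata~\ref{lemma:affected-subtrees} and~\ref{lemma:partition-paths}), and a re-initialization only enlarges $G[C]$ to $G[C']$. Thus, even though $x$'s level is reset repeatedly, a related pair once created is never destroyed, so each raise of $x$ witnesses a genuinely new pair $(x,C)$; since at most $2n-1$ components are ever live or dead, $x$ is raised $O(n)$ times, giving the desired $O(mn)$ bound for the forward searches. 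Combining this with the $O(m\Delta)=O(mn)$ backward cost and the $O(mn)$ spent on \textsf{Initialize}, the dominator trees, \textsf{TopIncSCC}, and the list updates yields the theorem. The step I expect to be the main obstacle is exactly this last charging: verifying that the Bender et al.\ assignment of forward-search arc scans to related-pair creations survives the insertions and deletions of vertices, edges, and components in the \textsf{BottomIncSCC} structures, with the monotonicity of the graph-theoretic relation supplying the robustness that the purely level-based argument lacks.
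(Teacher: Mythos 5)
Your cost decomposition is the same as the paper's, and the parts you dispose of by citation (the \textsf{TopIncSCC} structure, the $O(n)$ calls to \textsf{Initialize}, the dominator trees via effective depth, the list maintenance, and the $O(m\Delta)$ backward-search cost) are all handled correctly. The gap is in the forward-search analysis, which is where essentially all of the technical content of the paper's proof lives. Your charging scheme rests on two claims, and both fail. First, it is not true that every raise of $x$'s level witnesses a newly related pair $\langle x, C\rangle$. In Bender et al.\ the number of raises per vertex is bounded by the \emph{maximum level} (because levels never decrease there), and the maximum level is bounded by the invariant that a vertex at level $k$ is related to at least $\sqrt{\Delta}$ components at every level $j<k$; there is no per-raise charge to a fresh related pair. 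A forward search can raise $x$ when the components of both endpoints of the inserted edge are already related to $x$, and after a level reset the very same components that justified $x$'s old level can justify the climb back up, so the second climb need not create any new pair. Hence your ``$O(n)$ raises per vertex'' does not follow from ``at most $2n-1$ components ever exist.'' Second, the monotonicity you invoke to rescue the charging is exactly what breaks: relatedness in a \textsf{BottomIncSCC} structure is relative to the induced subgraph $G[D(r)]$ of the current auxiliary graph, and when a component is merged as a \emph{non-principal} component the start vertex changes, the dominator tree and bridge decomposition are recomputed from scratch, and a previously related pair can become active but unrelated. This is precisely the content of Lemma~\ref{lemma:unrelated} and is the reason the one-way structure only yields $O(mn\log n)$; arguing that $G[C]\subseteq G[C']$ makes the relation monotone conflates the top-level component with the auxiliary subgraphs, which do not grow monotonically.

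The paper takes a different route for exactly this reason: it keeps the level-based accounting and proves that the Bender et al.\ invariant itself is preserved by the non-standard operations of \textsf{UpdateAC} --- removal of moved components from affected auxiliary graphs (by showing that everything reachable from a moved vertex is itself moved, so the surviving vertices lose no related edges or components of lower level) and insertion of the moved components into $\widehat{G}_{r_z}$ at level $k(f(p))$ (by showing that no vertex, original edge, or shortcut edge changes level and that relations to the pre-existing vertices of $\widehat{G}_{r_z}$ are retained). With the invariant intact at all times, the maximum level is $\min\{m/\Delta, 2n/\sqrt{\Delta}\}$ and the total two-way-search time is $O(m\Delta)=O(mn)$. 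To make your approach work you would have to prove the per-raise charging claim directly, and I do not see how to do that without re-establishing the invariant you are trying to avoid.
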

\begin{proof}
It suffices to show that the total time spent by the \textsf{BottomIncSCC} data structures is $O(mn)$.
The initialization of all such structures takes $O(m)$ time. Since, by Lemma \ref{lemma:strong-bridges-count},
the initialization occurs $O(n)$ times,
the total time is $O(mn)$.
It remains to bound the total insertion time.
We show that the following invariant, used in the analysis of \cite[Lemma 4.2]{Bender:IncCycleDetection:TALG},
is maintained: For any level $k>1$ and any level $j<k$,
any canonical vertex of level $k$ is related to at least $\Delta$ edges of level $j$
and at least $\sqrt{\Delta}$ components of level $j$.
The invariant is true after initialization, since all vertices, edges, and components have level at most one.
Bender et al. showed that the invariant is maintained after the insertion of an edge (line 17 of \textsf{UpdateAC}), so it remains to
show that the invariant is also maintained after the execution of lines 10--12 and 16 of \textsf{UpdateAC}.

Let $\widehat{G}_{r}$ be an affected auxiliary graph with $r \not= r_z$, and let $v$ and $u$ be vertices in $\widehat{G}_{r}$
such that $v$ is moved and $u$ is not.
Then, $v$ has an affected ancestor $t$ in $\widehat{G}_{r}$. Let $w$ be a vertex reachable from $v$ in $\widehat{G}_{r}$.
Then $w$ is also reachable from $t$ in $\widehat{G}_{r}$. We argue that $w$ is moved.
Let $\pi$ be a path from $t$ to $w$ in $\widehat{G}_{r}$, and let $t'$ be the first vertex on $\pi$
that is an ancestor of $w$ in $D$. If $t'=t$ then $w$ is moved. Otherwise, by the parent property of $D$, $\mathit{depth}(t') \le \mathit{depth}(t)$.
The fact that $(p,q)$ (the first bridge on the path from $z$ to $y$) is not locally canceled and
Lemma \ref{lemma:insert-affected} imply that $t'$ is affected. So $w$ is moved in this case as well.
Therefore, all vertices in $\widehat{G}_{r}$ that are reachable from $v$ are moved.
This means that if $v$ is related to $u$ then it has level
at least $k(u)$. Similarly, an edge $e$ that is related to $u$ and is moved has level
at most $k(u)$.
The invariant holds for $u$, since the vertices that are moved have level at least equal to the level of $u$.
Hence, the invariant is maintained for $\widehat{G}_{r}$.
Now consider $\widehat{G}_{r_z}$. The vertices in $\widehat{G}_{r_z}$ do not change level.
This is true also for $f(p)$, since it remains a canonical vertex even if its component is merged with some components from other auxiliary graphs.
Similarly, the original edges in $\widehat{G}_{r_z}$ also do not change level.
Suppose now that $e$ is a shortcut edge $(f(p),w)$ that is deleted and reinserted
as a shortcut edge $(u,w)$, with $u \not= f(p)$. Then $u$ is a moved vertex, so it has level $k(f(p))$.
Hence, shortcut edges also do not change level. Notice also that $e$
remains related to all the canonical vertices in $\widehat{G}_{r_z}$ it was related before.
Indeed, since $u$ is a descendant of $p$, $\widehat{G}_{r_z}$ contains a path from $f(p)$ to $u$.
If before the move there was a path in $\widehat{G}_{r_z}$ from a vertex $t$ to $e$ or vice versa,
then such a path exits after the move as well.
This implies that the invariant holds for the vertices that were already in $\widehat{G}_{r_z}$
before the insertion of $(x,y)$. Finally, consider a moved canonical vertex $u$.
Let $v$ be a vertex related to $f(p)$ with level $k(v)<k(f(p))$.
Then, there is a path from $v$ to $f(p)$, so after the move of $u$, there is a path from
$v$ to $u$. This implies that the invariant holds for $u$, since it holds for $f(p)$.

Hence, we showed that the invariant is maintained after the execution of lines 10--12 and 16 of \textsf{UpdateAC}.
By the proof of  \cite[Lemma 4.2]{Bender:IncCycleDetection:TALG}, it is also maintained after the execution of
line 17, so overall, subroutine \textsf{UpdateAC} maintains the invariant.
Hence, as in  \cite{Bender:IncCycleDetection:TALG}, the maximum level of a vertex is $\min \{  m/{\Delta} , 2n/\sqrt{\Delta}\}$,
since for every level other than the maximum, there are at least $\Delta$
different edges and $\sqrt{\Delta}$ different components.

So the total time spent by the \textsf{BottomIncSCC} data structures, excluding initialization, is $O(\min \{ m^{1/2}, n^{2/3}\} m) = O(mn)$.
The bounds for \textsf{SCInc2ECB-B} and \textsf{Inc2ECB-B} follow.
\end{proof}

\begin{theorem}
\label{theorem:Inc2ECB-time}
We can maintain the $2$-edge-connected blocks of a digraph with $n$ vertices through a sequence of edge insertions in
$O(mn)$ time, where $m$ is the total number of edges in $G$ after all insertions.
\end{theorem}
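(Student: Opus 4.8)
The plan is to establish Theorem~\ref{theorem:Inc2ECB-time} as the culmination of the two-level framework, instantiated with the stronger \textsf{BottomIncSCC} implementation, by aggregating all the amortized costs that the preceding lemmata have already bounded. Concretely, I would run Algorithm \textsf{Inc2ECB-B}, which maintains the strongly connected components of $G$ in a top-level \textsf{TopIncSCC} structure and, for each SCC, maintains the bridge decompositions of its two dominator trees together with the auxiliary components via bottom-level \textsf{SCInc2ECB-B} structures. Correctness is already guaranteed by the correctness lemma for \textsf{SCInc2ECB-B} and \textsf{Inc2ECB-B}, so what remains for the theorem is to certify the global running-time bound (and to note that the explicitly maintained labels suffice for the promised constant-time and $O(n)$-time queries).

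For the time bound I would decompose the total cost into five buckets and show each is $O(mn)$. First, the \textsf{TopIncSCC} structure handles all insertions in the condensation in $O(mn)$ total time. Second, maintaining all dominator trees over all components is $O(mn)$: by Lemma~\ref{lemma:effective-depth} the total effective depth of each vertex $v$ is $O(n)$, and each scan of $v$ costs $O(\mathrm{degree}(v))$ while dropping its effective depth, so summing gives $\sum_v O(n\cdot\mathrm{degree}(v)) = O(mn)$. Third, every call to \textsf{Initialize} costs $O(m)$ and, by Lemma~\ref{lemma:strong-bridges-count}, at most $2(n-1)$ strong bridges ever appear, so there are $O(n)$ such calls, totalling $O(mn)$. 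Fourth, the bookkeeping inside \textsf{UpdateAC}---forming the scanned subgraph $H^{\ast}$, computing its SCCs, and re-threading the $\mathit{out}$/$\mathit{in}$ lists and topological orders---is charged to the dominator-tree scans and is therefore also $O(mn)$.

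The genuinely delicate bucket, and the one I expect to be the main obstacle, is the total insertion cost inside the bottom-level structures. The difficulty is that in the general setting the monotonicity exploited in the strongly connected case breaks: by Lemma~\ref{lemma:unrelated} an active vertex--edge pair can flip from related back to unrelated up to $\log n$ times, since a component may be merged as a non-principal component, and naive charging would only yield $O(mn\log n)$. I would circumvent this exactly as the paper does: implement each \textsf{BottomIncSCC} via the two-way search of Bender et al.\ with threshold $\Delta = \min\{m^{1/2}, n^{2/3}\}$, and prove by induction that \textsf{UpdateAC} preserves the level invariant of \cite[Lemma 4.2]{Bender:IncCycleDetection:TALG}, namely that every canonical vertex at level $k$ is related to at least $\Delta$ edges and $\sqrt{\Delta}$ components at each lower level. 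The crux is checking that this invariant survives the structural edits of \textsf{UpdateAC}---deleting components from affected auxiliary graphs, re-parenting moved components just after $f(p)$, and re-threading shortcut edges---which I would argue by showing that every vertex reachable from a moved vertex inside an affected auxiliary graph is itself moved, so levels shift consistently. The invariant then caps the maximum level at $\min\{m/\Delta, 2n/\sqrt{\Delta}\}$, yielding bottom-level insertion time $O(\Delta m) = O(mn)$. Summing the five buckets gives the claimed $O(mn)$ bound, and since labels are maintained explicitly we retain constant-time $2$-edge-connectivity queries and $O(n)$-time block reporting, which completes the theorem.
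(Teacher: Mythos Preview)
Your proposal is correct and tracks the paper's own argument essentially step for step: the theorem in the paper is stated as the culmination of Section~\ref{sec:incremental-general} without a separate proof, and the surrounding text performs exactly the cost decomposition you describe---\textsf{TopIncSCC}, dominator-tree maintenance via Lemma~\ref{lemma:effective-depth}, \textsf{Initialize} calls bounded via Lemma~\ref{lemma:strong-bridges-count}, \textsf{UpdateAC} bookkeeping charged to scans, and the \textsf{BottomIncSCC} cost via the Bender et al.\ level invariant (Lemma~\ref{lemma:inc2ecb}). One small omission: your third bucket cites only the locally-cancelled-bridge source of \textsf{Initialize} calls, but \textsf{InsertEdge} also calls \textsf{Initialize} whenever \textsf{TopIncSCC} reports a newly formed SCC; this happens at most $n-1$ times, so the $O(n)$ count and $O(mn)$ total still hold.
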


\section{$2$-edge-connectivity queries}
\label{sec:queries}

Here we provide the details of how to use our incremental algorithms for maintaining the $2$-edge-connected blocks of Sections
\ref{sec:incremental-strongly-connected} and Section \ref{sec:incremental-general}, in order to answer the following two types of queries:
\begin{mylist}{(a)}
	\litem{(a)} Test if two query vertices $u$ and $v$ are $2$-edge-connected; if not, report a separating edge for $u$ and $v$.
	\litem{(b)} Report all the $2$-edge-connected blocks.
\end{mylist}

A separating edge $e$ for $u$ and $v$ is a strong bridge that is contained in all paths from $u$ to $v$, or in all paths from $v$ to $u$.

First, we consider queries of type (a).
By Lemma \ref{lemma:labeling-algorithm}, $u$ and $v$ are $2$-edge-connected if and only if they are in the same subtree in the 
bridge decomposition and they belong to same auxiliary component
with respect to both the forward and the reverse flow graphs, $G_s$ and $G^R_s$. That is, $r_u = r_v$ and $c_u = c_v$ in $G_s$, and
$r^R_u = r^R_v$ and $c^R_u = c^R_v$ in $G^R_s$.
Recall that we keep the auxiliary components in $G_s$ (and similarly in $G^R_s$) using a disjoint set union data structure~\cite{dsu:tarjan}.
Since we aim at constant time queries,
we use such a data structure that can support each $\mathit{find}$ operation in worst-case $O(1)$ time and
any sequence of $\mathit{unite}$ operations in total time $O(n \log n)$~\cite{setunion:tvl}.
This way, we can identify the canonical vertex of the auxiliary component containing a query vertex in constant time. Hence,
we can test if $u$ and $v$ are $2$-edge-connected also in constant time.
If $u$ and $v$ are not $2$-edge-connected, then we wish to report a corresponding separating edge also in constant time.
Suppose first that $r_u \not = r_v$.
Without loss of generality, assume that $r_u$ is not a descendant of $r_v$ in $D$.
By Lemma \ref{lemma:partition-paths}, the strong bridge $(d(r_v),r_v)$ is a separating edge for $u$ and $v$.
Now consider the case where $r_u = r_v$, but $c_u \not = c_v$.
This means $u$ and $v$ are not strongly connected in the induced subgraph $G[D(r_u)]$, and therefore, all paths from $c_u$ to $c_v$, or all paths from $c_v$ to $c_u$, use vertices not in $D(r_u)$.
Without loss of generality, assume that all paths from $c_u$ to $c_v$ contain a vertex $w \notin D(r_u)$.
By Lemma \ref{lemma:partition-paths}, all paths from $c_w$ to $c_v$ go through $(d(r_u),r_u)$.
Thus, $(d(r_u),r_u)$ is a separating edge for $c_u$ and $c_v$.
We can find a separating edge for $u$ and $v$ when $r^R_u \not = r^R_v$ or $c^R_u \not = c^R_v$ similarly.

We now turn to queries of type (b) and show to report all the $2$-edge-connected blocks in optimal $O(n)$ time.
For each vertex $v$ we create the label $label(v) = \langle r_x, c_x, r^R, c^R \rangle$, and we insert the pair $\langle label(v),v \rangle$ into a list $L$.
As above, each of the values $r_x$, $c_x$, $r^R$, and $c^R$ is available in $O(1)$ time.
Next, we sort the list $L$ lexicographically in $O(n)$ time using bucket sorting.
In the sorted list $L$ the vertices of the same $2$-edge-connected block appear consecutively, since they have the same label.
Thus, all the $2$-edge-connected blocks can be reported in $O(n)$ time.

\end{document}